\documentclass[11pt]{article}
\usepackage[utf8]{inputenc}
\usepackage{amsmath, amsthm, amssymb, mathtools}
\usepackage{amsfonts}
\usepackage{amssymb}
\usepackage{bbm}
\usepackage{graphicx}
\usepackage{caption}
\usepackage{fullpage}
\usepackage{tikz}
\usepackage{microtype}
\usepackage{thmtools,thm-restate}
\usepackage[linesnumbered,ruled,procnumbered]{algorithm2e}
\usepackage[numbers,comma,sort&compress]{natbib}
\usepackage{hyperref}
\usepackage[capitalize]{cleveref}

\hypersetup{
    colorlinks=true,
    linkcolor=blue,
    filecolor=cyan,
    urlcolor=magenta,
}

\newtheorem{theorem}{Theorem}
\newtheorem{lemma}{Lemma}
\newtheorem{claim}[lemma]{Claim}
\newtheorem{corollary}[lemma]{Corollary}
\newtheorem{remark}{Remark}

\newtheorem{definition}[theorem]{Definition}

\newcommand{\norm}[1]{\left\|#1\right\|}
\newcommand{\pnorm}[2]{\bigl\|#2\bigr\|_#1}

\newcommand{\normalpnorm}[2]{\|#2\|_#1}
\newcommand{\Tr}{\mathrm{Tr}}
\newcommand{\rank}{\mathrm{rank}}
\newcommand{\polylog}{\mathrm{polylog}}
\newcommand{\poly}{\mathrm{poly}}

\newcommand{\sspan}{\mathrm{span}}
\newcommand{\rows}{\mathrm{rows}}

\newcommand{\bbr}{\mathbb{R}}
\newcommand{\bbc}{\mathbb{C}}
\newcommand{\Var}{\mathrm{Var}}

\newcommand{\C}{\mathbb{C}}
\newcommand{\E}{\mathbb{E}}

\SetAlgoProcName{Procedure}{Procedure}
\crefname{claim}{Claim}{Claims}
\crefname{procedure}{Procedure}{Procedures}
\crefname{equation}{Eq.}{Eqns.}
\crefname{algocfline}{Line}{Lines}

\newcommand{\acc}[1]{\hyperref[sample:#1]{Access~\ref*{sample:#1}}}

\def\>{\rangle}
\def\<{\langle}

\renewcommand{\emptyset}{\varnothing}
\newcommand{\range}[1]{[#1]}

\urlstyle{sf}

\newcommand{\nc}{\newcommand}
\nc{\rnc}{\renewcommand}

\newcommand{\nn}{\nonumber}
\newcommand{\bi}{\begin{itemize}}
\newcommand{\ei}{\end{itemize}}
\newcommand{\bn}{\begin{enumerate}}
\newcommand{\en}{\end{enumerate}}
\def\beas#1\eeas{\begin{eqnarray*}#1\end{eqnarray*}}
\def\ba#1\ea{\begin{align}#1\end{align}}

\def\nn{\nonumber}

\def\L{\left}
\def\R{\right}

\def\eps{\epsilon}

\def\benum{\begin{enumerate}}
\def\eenum{\end{enumerate}}
\def\bit{\begin{itemize}}
\def\eit{\end{itemize}}
\def\bdesc{\begin{description}}
\def\edesc{\end{description}}

\nc\qand{\qquad\text{and}\qquad}
\nc\mnb[1]{\medskip\noindent{\bf #1}}
\nc\mn{\medskip\noindent}

\setlength{\tabcolsep}{10pt}

\nc{\nnl}{\nn \\ &}  
\nc{\fot}{\frac{1}{2}} 
\nc{\oo}[1]{\frac{1}{#1}} 
\newcommand{\ben}{\begin{enumerate}}
\newcommand{\een}{\end{enumerate}}
\nc{\mc}{\mathcal}

\newcommand{\hd}[1]{\vspace{3mm} \noindent\textbf{#1}}

\nc{\onenorm}[1]{\L\| #1 \R\|_1}

\nc{\Ra}{\Rightarrow}
\nc{\zo}{\{0,1\}}
\nc{\normF}{\pnorm{F}}
\nc{\sumnormF}[1]{\sum_\ell \normF{#1_\ell}^2}
\nc{\sumt}{\sum_{t \in T}}
\nc{\ct}{^\dag}

\newcommand{\ev}{\frac{\epsilon}{300r^2(\tau+1)}}
\newcommand{\ef}{\frac{\epsilon}{300r^2}}
\newcommand{\es}{\frac{\epsilon}{400r^2}}

\begin{document}
\begin{titlepage}
\clearpage

\title{Quantum-inspired sublinear algorithm for solving low-rank semidefinite programming}

\author{Nai-Hui Chia\thanks{Department of Computer Science, University of Texas at Austin. Email: \{nai,linhh,chunhao\}@cs.utexas.edu} \and Tongyang Li\thanks{Department of Computer Science, Institute for Advanced Computer Studies, and Joint Center for Quantum Information and Computer Science, University of Maryland. Email: tongyang@cs.umd.edu} \and Han-Hsuan Lin$^{*}$ \and Chunhao Wang$^{*}$}

\date{\empty}
\maketitle
\thispagestyle{empty}
\begin{abstract}
Semidefinite programming (SDP) is a central topic in mathematical optimization with extensive studies on its efficient solvers. In this paper, we present a proof-of-principle \emph{sublinear-time} algorithm for solving SDPs with low-rank constraints; specifically, given an SDP with $m$ constraint matrices, each of dimension $n$ and rank $r$, our algorithm can compute any entry and efficient descriptions of the spectral decomposition of the solution matrix. The algorithm runs in time $O(m\cdot\poly(\log n,r,1/\varepsilon))$ given access to a sampling-based low-overhead data structure for the constraint matrices, where $\varepsilon$ is the precision of the solution. In addition, we apply our algorithm to a quantum state learning task as an application.

Technically, our approach aligns with 1) SDP solvers based on the matrix multiplicative weight (MMW) framework by Arora and Kale [TOC '12]; 2) sampling-based dequantizing framework pioneered by Tang [STOC '19]. In order to compute the matrix exponential required in the MMW framework, we introduce two new techniques that may be of independent interest:
\begin{itemize}
\item Weighted sampling: assuming sampling access to each individual constraint matrix $A_{1},\ldots,A_{\tau}$, we propose a procedure that gives a good approximation of $A=A_{1}+\cdots+A_{\tau}$.
\item Symmetric approximation: we propose a sampling procedure that gives the \emph{spectral decomposition} of a low-rank Hermitian matrix $A$. To the best of our knowledge, this is the first sampling-based algorithm for spectral decomposition, as previous works only give singular values and vectors.
\end{itemize}
\end{abstract}

\end{titlepage}


\newpage

\section{Introduction}
Semidefinite programming (SDP) is a central topic in the studies of mathematical optimization and theoretical computer science, with a wide range of applications including algorithm design, machine learning, operations research, etc. The importance of SDP comes from both its generality that contains the better-known linear programming (LP) and the fact that it admits polynomial-time solvers. Mathematically, an SDP is defined as follows:
\begin{align}
\max\quad&\Tr[CX] \label{eqn:SDP-OPT} \\
\text{s.t.}\quad&\Tr[A_i X]\leq b_{i}\quad\forall\,i\in\range{m}; \label{eqn:SDP-trace} \\
&X\succeq 0, \label{eqn:SDP-X}
\end{align}
where $m$ is the number of constraints, $A_{1},\ldots,A_{m},C$ are $n\times n$ Hermitian matrices, and $b_{1},\ldots,b_{m}\in\bbr$; \cref{eqn:SDP-X} restricts the variable matrix $X$ to be \emph{positive semidefinite} (PSD), i.e., $X$ is an $n\times n$ Hermitian matrix with non-negative eigenvalues (more generally, $X\succeq Y$ means that $X-Y$ is a PSD matrix). An $\varepsilon$-approximate solution of this SDP is an $X^{*}$ that satisfies \cref{eqn:SDP-trace,eqn:SDP-X} while $\Tr[CX^*]\geq\textsf{OPT}-\varepsilon$ (\textsf{OPT} being the optimum of the SDP).

There is rich literature on solving SDPs. Ellipsoid method gave the first polynomial-time SDP solvers \cite{khachiyan1980polynomial,grotschel1981ellipsoid}, and the complexities of the SDP solvers had been subsequently improved by the interior-point method \cite{nesterov1992conic} and the cutting-plane method \cite{anstreicher2000volumetric,mitchell2003polynomial}; see also the survey paper \cite{vandenberghe1996semidefinite}. The current state-of-the-art SDP solver \cite{lee2015faster,jiang2020improved} runs in time $\tilde{O}(m(m^2+n^{\omega}+mn^{2})\poly(\log 1/\varepsilon))$, where $\omega<2.373$ is the exponent of matrix multiplication.\footnote{Throughout the paper, $\tilde{O}(f(\cdot))$ denotes $O\L(f(\cdot)\polylog(f(\cdot))\R)$.} On the other hand, if we tolerate polynomial dependence in $1/\varepsilon$, Arora and Kale~\cite{arora2007combinatorial} gave an SDP solver with better complexities in $m$ and $n$: $\tilde{O}(mn^{2}(R_{\text{p}}R_{\text{d}}/\varepsilon)^{4} + n^{2}(R_{\text{p}}R_{\text{d}}/\varepsilon)^7)$, where $R_{\text{p}}$, $R_{\text{d}}$ are given upper bounds on the $\ell_{1}$-norm of the optimal primal and dual solutions, respectively (see more details in~\cite{vanApeldoorn2017quantum}). This is subsequently improved to $\tilde{O}(m/\varepsilon^{2} + n^{2}/\varepsilon^{2.5})$ by Garber and Hazan \cite{GH11,GH12} when $R_{\text{p}}, R_{\text{d}}=1$ and $b_{i}=0$ in \cref{eqn:SDP-trace} for all $i\in[m]$; as a complement, \cite{GH11} also established a lower bound $\Omega(m/\varepsilon^{2}+n^{2}/\varepsilon^{2})$ under the same assumption.

The SDP solvers mentioned above all use the standard entry-wise access to matrices $A_{1},\ldots,A_{m}$, and $C$. In contrast, a common methodology in algorithm design is to assume a certain natural \emph{preprocessed data structure} such that the problem can be solved in \emph{sublinear} time, perhaps even in \emph{poly-logarithmic} time, given queries to the preprocessed data structure (e.g., see the examples discussed in \cref{sec:related-works}). Such methodology is extensively exploited in \emph{quantum algorithms}, where we are given a unitary oracle to access entries of matrices in \emph{superposition}, a fundamental feature in quantum mechanics and the essence of quantum speedups. In particular, quantum SDP solvers in the case that matrices are sparse have been studied in \cite{brandao2016quantum,vanApeldoorn2017quantum,brandao2018SDP,vanApeldoorn2018SDP} and culminate in a quantum algorithm that runs in time $\tilde{O}\big((\sqrt{m}+\sqrt{n}R_{\text{p}}R_{\text{d}}/\varepsilon)n(R_{\text{p}}R_{\text{d}}/\varepsilon)^4\big)$ \cite{vanApeldoorn2018SDP}, which achieve \emph{polynomial} speedup comparing to existing classical algorithms in $m$ and $n$. Based on an oracle that can prepare the quantum state corresponding to the positive semidefinite part of Hermitian matrices,\footnote{Ref.~\cite{vanApeldoorn2018SDP} called this the quantum state input model. Their complexity is expressed in terms of a parameter $B$, which is basically the trace norm of the constraint and cost matrices, which then is basically the rank for matrices with spectral norm 1.} quantum \emph{exponential} speedup in $n$ has been achieved for matrices with rank $r=\poly(\log n)$  by~\cite{brandao2018SDP,vanApeldoorn2018SDP}, whose algorithms run in time $\tilde{O}(\sqrt{m})\cdot\poly(\log m,\log n,r,1/\epsilon)$. Considering this, SDP was previously believed to be a strong candidate for exponential quantum speedups in the low-rank setting (see e.g.~\cite{preskill2018NISQ}).

Mutually inspired by both classical and quantum SDP solvers, and the series of ``dequantization'' results~\cite{tang2018quantum2,chia2018quantum} lead by Tang's breakthrough result~\cite{tang2018quantum}, in this work we strive to match the exponential speedup of ~\cite{brandao2018SDP,vanApeldoorn2018SDP} with a classical algorithm, with the same low-rank requirement on constraint and cost matrices.

\subsection{Main results}
We show that when the constraint and cost matrices are low-rank, with a low-overhead data structure that supports the following sampling access, there exists a classical algorithm whose runtime is logarithmic in the dimension $n$ of the matrices.
\begin{definition}[Sampling and query  access]\label{defn:sampling-informal}
  Let $M \in \bbc^{n \times n}$. Denote $\norm{\cdot}$ to be the $\ell_{2}$ norm and $\norm{\cdot}_F$ to be the Frobenius norm. We say that we have the \emph{sampling access} to $M$ if we can
  \vspace{3mm}
  \begin{enumerate}
    \item\label{sample:row} sample a row index $i \in [n]$ of $M$ where the probability of row $i$ being chosen is
      \begin{align*}
        \frac{\norm{M(i, \cdot)}^2}{\pnorm{F}{M}^2};
      \end{align*}
    \item\label{sample:element} for all $i \in [n]$, sample an index $j \in [n]$ where the probability of $j$ being chosen is
      \begin{align*}
        \frac{|M(i, j)|^2}{\norm{M(i, \cdot)}^2};
      \end{align*}
      \item query the entry $M(i,j)$ for any $i,j\in[n]$; and
      \item evaluate norms of $\|M\|_F$ and $\|M(i,\cdot)\|$ for $i\in [n]$,
  \end{enumerate}
  \vspace{3mm}
  with time complexity $O(\poly(\log n))$ for each sampling and norm access.
\end{definition}
A low-overhead data structure that allows for this sampling access is shown in \cref{sec:datastruct}. Our main result is as follows.
\begin{theorem}[informal; see \cref{thm:main,alg:fea_testing,thm:SDP-master}]\label{thm:main-informal}
Let $C,A_1,\dots,A_m\in \mathbb{C}^{n\times n}$ be an SDP instance as in \crefrange{eqn:SDP-OPT}{eqn:SDP-X}. Suppose $\rank(C),\max_{i\in\range{n}}\rank(A_{i})\leq r$. Given sampling access to $A_{1},\ldots,A_{m},C$ in \cref{defn:sampling-informal}, there is an algorithm that gives any specific entry of an $\varepsilon$-approximate solution of the SDP with probability at least $2/3$; the algorithm runs in time $O(m\cdot\poly(\log n,r,R_{\text{p}}R_{\text{d}}/\varepsilon))$, where $R_{\text{p}}$, $R_{\text{d}}$ are given upper bounds on the $\ell_{1}$-norm of the optimal primal and dual solutions.
\end{theorem}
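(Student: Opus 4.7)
The plan is to combine the matrix multiplicative weights (MMW) framework of Arora--Kale with the sampling-based dequantization paradigm. In the MMW framework one maintains a density matrix $\rho_{t}$ proportional to $\exp(-\eta H_{t})$, where $H_{t}=\sum_{s=1}^{t-1}\alpha_{s}A_{i_{s}}$ is a weighted sum of constraint matrices corresponding to the (at most $T=O((R_{\text{p}}R_{\text{d}}/\varepsilon)^{2})$) violated constraints encountered so far, together with the cost matrix $C$ (incorporated via the standard binary-search-on-optimum reduction to a feasibility SDP). At each iteration we either (i) verify that $\rho_{t}$ is approximately feasible by estimating $\Tr[A_{i}\rho_{t}]$ and comparing to $b_{i}$ for every $i\in[m]$, and output $\rho_{t}$ if so; or (ii) identify some violated $A_{i_{t}}$, assign it a weight via the MMW update rule, and fold it into $H_{t+1}$. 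Since $\rho_{t}$ has $\Omega(n^{2})$ entries we never store it explicitly; instead we maintain succinct sampling-accessible descriptions of $H_{t}$ and a spectral decomposition of $H_{t}$, from which everything else follows.

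Concretely, at each iteration I would first invoke the \emph{weighted sampling} procedure promised in the abstract on the list $\{(\alpha_{s},A_{i_{s}})\}_{s<t}$ to obtain sampling and query access to a good approximation of $H_{t}$, incurring overhead polynomial in $t\leq T$ and $r$. Then I would apply the \emph{symmetric approximation} routine to the (at most rank $rT$) matrix $H_{t}$ to extract an approximate spectral decomposition $H_{t}\approx\sum_{k}\lambda_{k}|v_{k}\rangle\langle v_{k}|$, with each eigenvector $|v_{k}\rangle$ described as a short linear combination of rows of $H_{t}$. This immediately yields an efficient description of $\rho_{t}\propto\exp(-\eta H_{t})=\sum_{k}e^{-\eta\lambda_{k}}|v_{k}\rangle\langle v_{k}|$, from which I can (a) estimate $\Tr[A_{i}\rho_{t}]$ to the required precision by standard inner-product estimation between low-rank sampling-accessible matrices, and (b) at termination answer any single entry query $\langle e_{i}|\rho_{T}|e_{j}\rangle$ in $\poly(\log n,r,1/\varepsilon)$ time from the stored decomposition.

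Putting this together, a single iteration costs $O(m\cdot\poly(\log n,r,1/\varepsilon))$, dominated by the $m$ feasibility tests, and running over the $T=O((R_{\text{p}}R_{\text{d}}/\varepsilon)^{2})$ MMW iterations gives the claimed total bound $O(m\cdot\poly(\log n,r,R_{\text{p}}R_{\text{d}}/\varepsilon))$. Correctness follows from the standard Arora--Kale analysis, once one verifies that the approximate oracles behave well enough to sustain the MMW potential argument and that the approximate $\rho_{t}$ returned on termination is indeed feasible up to $\varepsilon$.

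The hard part will be controlling the \emph{compounded error}. Each call to weighted sampling, symmetric approximation, and inner-product estimation contributes additive or multiplicative slack, and these error budgets must be set tightly enough that the MMW potential bound still holds, yet loose enough that every subroutine runs in $\poly(\log n)$ time. A second, related difficulty is that the rank of $H_{t}$ grows linearly in $t$, so the complexity of symmetric approximation and weighted sampling depends on $rT$; careful bookkeeping is required to ensure that the final polynomial in $r$ and $R_{\text{p}}R_{\text{d}}/\varepsilon$ remains under control and that the sampling-based spectral decomposition stays faithful enough to drive an accurate matrix exponential. Once these two issues are resolved, the rest of the proof is essentially a direct substitution of the two new sampling subroutines into the MMW template.
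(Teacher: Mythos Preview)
Your proposal is correct and follows essentially the same route as the paper: reduce optimization to feasibility via binary search, run the MMW/Gibbs-state loop, and at each iteration use weighted sampling followed by symmetric approximation to obtain a succinct spectral decomposition of $H_{t}$, from which trace estimates and individual entries of $\rho_{t}$ are read off. The only small slip is the iteration count: the MMW potential argument needs $T=O((R_{\text{p}}R_{\text{d}}/\varepsilon)^{2}\log n)$ rounds rather than $O((R_{\text{p}}R_{\text{d}}/\varepsilon)^{2})$, but the extra $\log n$ is harmlessly absorbed into the stated $\poly(\log n,\ldots)$ bound.
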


Comparing our results to existing classical randomized algorithms for solving SDP (e.g.,~\cite{arora2007combinatorial, GH11, GH12}), our algorithm outperforms existing classical SDP solvers given sampling access to the constraint matrices (which can be realized with a low-overhead data structure). Specifically, the running time of our algorithm is $O(m\cdot\poly(\log n,r,R_{\text{p}}R_{\text{d}}/\varepsilon))$ according to \cref{thm:main-informal}, which achieves exponential speedup in terms of $n$ with the data structure given in \cref{thm:ds}. It is worth noting that there are other ways to implement the sampling and query access. For example, Drineas, Kannan, and Mahoney~\cite[Lemma 2]{DKM06} showed that the sampling access in \cref{defn:sampling-informal} can be achieved with poly-logarithmic space if the matrix elements are streamed. Therefore, \cref{thm:main-informal} also implies that there exists a one-pass poly-logarithmic space algorithm for low-rank SDP in the data-streaming model.

Compared to quantum algorithms, our algorithm has comparable running time. It is because existing quantum SDP solvers that achieve exponential speed up in terms of $n$ all basically have polynomial dependence on the rank $r$~\cite{brandao2018SDP,vanApeldoorn2018SDP}, so they also have $\poly(\log n, r)$ complexity. It is worth noting that the quantum SDP solvers require additional assumptions on the way the matrices are given.  Furthermore, we give query access to the solution matrix which was not achieved by existing quantum SDP solvers, as they only give sampling access to the solution matrix. In this regard, it is easy to obtain the sampling access of the solution matrix from our algorithm by extending the rejection sampling techniques of~\cite{tang2018quantum} as pointed out by Tang\footnote{Personal communication.}.

Our result aligns with the studies of sampling-based algorithms for solving linear algebraic problems. In particular, \cite{FKV04} gave low-rank approximations of a matrix in \emph{poly-logarithmic} time with sampling access to the matrix as in \cref{defn:sampling-informal}. Recently, Tang extended the idea of \cite{FKV04} to give a poly-logarithmic time algorithm for solving recommendation systems \cite{tang2018quantum}. Subsequently, still under the same sampling assumption, Tang~\cite{tang2018quantum2} sketched poly-logarithmic algorithms for principal component analysis and clustering assignments, and two follow-up papers~\cite{GLT18,chia2018quantum} gave poly-logarithmic algorithms for solving low-rank linear systems. All these sampling-based sublinear algorithms directly exploit the sampling approach in \cite{FKV04} (see \cref{sec:techniques} for details); to solve SDPs, we derive an augmented sampling toolbox which includes two novel techniques: \emph{weighted sampling} and \emph{symmetric approximation}.

As a corollary, our SDP solver can be applied to learning quantum states\footnote{A quantum state $\rho$ is a PSD matrix with trace one.} efficiently. A particular task of learning quantum states is \emph{shadow tomography} \cite{aaronson2017quantum}, where we are asked to find a description of an unknown quantum state $\rho$ such that we can  approximate $\Tr[\rho E_{i}]$ up to error $\eps$ for a specific collection of Hermitian matrices $E_1, \ldots, E_m$ where $0\preceq E_i \preceq I$ and $E_i\in\C^{n\times n}$ for all $i\in\range{m}$ (such $E_{i}$ can also be viewed as a measurement operator in a two-outcome POVM in quantum computing). Mathematically, shadow tomography can be formulated as the following SDP feasibility problem:
\begin{align}
\text{Find $\sigma$ such that}\qquad |\Tr[\sigma E_{i}]-\Tr[\rho E_{i}]|&\leq\epsilon\quad\forall\,i\in\range{m}; \label{eqn:SDP-shadow-1} \\
\sigma\succeq 0,\ \ \Tr[\sigma]&=1. \label{eqn:SDP-shadow-2}
\end{align}
Under a quantum model proposed by \cite{brandao2018SDP} where $\rho,E_{1},\ldots,E_{m}$ are stored as quantum states, the state-of-the-art quantum algorithm \cite{vanApeldoorn2018SDP} solves the shadow tomography problem with time $O\big((\sqrt{m} + \min\{\sqrt{n}/\epsilon, r^{2.5}/\epsilon^{3.5}\})r/\epsilon^{4}\big)$ where $r=\max_{i\in\range{m}}\rank(E_{i})$; in other words, quantum algorithms achieve poly-logarithmic complexity in $n$ for low-rank shadow tomography. We observe the same phenomenon under our sampling-based model:
\begin{corollary}[informal; see \cref{cor:shadow}]\label{cor:shadow-informal}
Given sampling access of matrices $E_{1},\ldots,E_{m}  \in \bbc^{n\times n} $ as in \cref{defn:sampling-informal} and real numbers $\Tr[\rho E_{1}],\ldots,\Tr[\rho E_{m}]$, there is an algorithm that gives a succinct description as in \cref{remark:SDP-solution} and any entry of an $\epsilon$-approximate solution $\sigma$ of the shadow tomography problem defined as \cref{eqn:SDP-shadow-1,eqn:SDP-shadow-2} with probability at least $2/3$; the algorithm runs in time $O(m\cdot\poly(\log n,r,1/\epsilon))$.
\end{corollary}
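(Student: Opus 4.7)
The plan is to reduce the shadow tomography feasibility problem \cref{eqn:SDP-shadow-1,eqn:SDP-shadow-2} to an instance of the low-rank SDP \crefrange{eqn:SDP-OPT}{eqn:SDP-X} and then invoke the feasibility-testing variant \cref{alg:fea_testing} of \cref{thm:main-informal}. First, I would rewrite each two-sided constraint $|\Tr[\sigma E_i]-\Tr[\rho E_i]|\le\epsilon$ as the pair of linear inequalities $\Tr[\sigma E_i]\le \Tr[\rho E_i]+\epsilon$ and $\Tr[\sigma(-E_i)]\le -\Tr[\rho E_i]+\epsilon$, producing $2m$ ordinary SDP constraints with constraint matrices $\pm E_i$. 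Negation preserves rank, and the sampling access of \cref{defn:sampling-informal} depends only on $|E_i(j,k)|^2$, which is invariant under sign flips; hence all $2m$ reformulated constraint matrices have rank at most $r$ and inherit the sampling access of the $E_i$.

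Next I would handle the normalization $\Tr[\sigma]=1$. Rather than adding it as an explicit constraint (whose matrix is the full-rank identity, which would violate the low-rank hypothesis), I rely on the fact that the MMW-based solver underlying \cref{thm:main-informal} maintains its candidate matrix throughout the iteration as a scaled Gibbs/density matrix, so trace-one and PSD-ness are intrinsic to the update rule and need not be listed among the $A_i$. The reformulated instance therefore satisfies the hypotheses of \cref{thm:main-informal} with $2m$ rank-$\le r$ constraints. Any feasible $\sigma$ has $\Tr[\sigma]=1$, giving $R_{\text{p}}\le 1$; and because the true state $\rho$ is itself feasible with slack $\epsilon$ on every split constraint, a standard Slater-type argument on the dual yields $R_{\text{d}}=\poly(1/\epsilon)$ independent of $m$ and $n$. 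Substituting into the runtime of \cref{thm:main-informal} gives $O(m\cdot\poly(\log n,r,1/\epsilon))$, and the succinct spectral description together with the entry-query access to $\sigma$ are inherited directly from the corresponding guarantees of \cref{thm:main-informal}.

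The main obstacle is verifying that $R_{\text{d}}$ admits a bound polynomial in $1/\epsilon$ that is independent of $m$ and $n$: if the dual $\ell_1$-norm scaled with either parameter, the exponential speedup in $n$ would be lost. Once this is confirmed via the $\epsilon$-slack witnessed by $\rho$, everything else in the reduction is bookkeeping.
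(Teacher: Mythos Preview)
Your reduction is exactly the paper's: split each two-sided constraint into two one-sided ones with matrices $\pm E_i$, observe that rank and sampling access are preserved, and note that the MMW iterate is automatically a trace-one PSD matrix so the normalization needs no separate low-rank constraint. That is all the paper does before invoking \cref{thm:main} (the formal statement behind \cref{alg:fea_testing}).

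Where you diverge is in treating the bound on $R_{\text{d}}$ as the ``main obstacle.'' This is a detour that the paper never takes and that you do not need either. The quantities $R_{\text{p}},R_{\text{d}}$ appear only in the \emph{optimization} form of \cref{thm:main-informal}, because there one reduces optimization to feasibility via binary search and the number/precision of those feasibility calls depends on $R_{\text{p}}R_{\text{d}}$. Shadow tomography is already a feasibility problem, so you invoke \cref{thm:main} directly; its running time $O(m\cdot\poly(\log n,r,1/\epsilon))$ contains no $R_{\text{p}}$ or $R_{\text{d}}$ at all. Consequently no Slater-type argument is required, and the question of whether the dual $\ell_1$-norm is $\poly(1/\epsilon)$ independent of $m,n$ simply does not arise. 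Dropping that paragraph leaves a proof identical to the paper's.
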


\subsection{Techniques}\label{sec:techniques}
\hd{Matrix multiplicative weight method (MMW).}
We study a normalized SDP feasibility testing problem defined as follows:
\begin{definition}[Feasibility of SDP] \label{defn:feasibility}
Given an $\epsilon>0$, $m$ real numbers $a_{1},\ldots,a_{m}\in\bbr$, and Hermitian $n\times n$ matrices $A_{1},\ldots,A_{m}$ where $-I\preceq A_{i}\preceq I, \forall j\in\range{m}$, define $\mathcal{S}_{\epsilon}$ as the set of all $X$ such that
\begin{align}
\Tr[A_{i} X] &\leq a_{i}+\epsilon\quad\forall\,i\in\range{m}; \label{eqn:SDP-1} \\
X&\succeq 0; \label{eqn:SDP-2} \\
\Tr[X]&=1. \label{eqn:SDP-3}
\end{align}
For $\epsilon$-approximate feasibility testing of the SDP, we require that:
\begin{itemize}
\item If $\mathcal{S}_{\epsilon}=\emptyset$, output ``infeasible'';
\item If $\mathcal{S}_{0}\neq\emptyset$, output an $X\in\mathcal{S}_{\epsilon}$.\footnote{If $\mathcal{S}_{\epsilon}\neq\emptyset$ and $\mathcal{S}_{0}=\emptyset$, either output is acceptable.}
\end{itemize}
\end{definition}
It is well-known that one can use binary search to reduce $\varepsilon$-approximation of the SDP in \crefrange{eqn:SDP-OPT}{eqn:SDP-X} to $O(\log(R_{\text{p}}R_{\text{d}}/\varepsilon))$ calls of the feasibility problem in \cref{defn:feasibility} with $\epsilon=\varepsilon/(R_{\text{p}}R_{\text{d}})$.\footnote{\label{footnote:binary-search}For the normalized case $R_{\text{p}}R_{\text{d}}=1$, we first guess a candidate value $c_{1}=0$ for the objective function, and add that as a constraint $\Tr[CX]\geq c_{1}$ to the optimization problem. If this problem is feasible, the optimum is larger than $c_{1}$ and we accordingly take $c_{2}=c_{1}+\frac{1}{2}$; if this problem is infeasible, the optimum is smaller than $c_{1}$ and we accordingly take $c_{2}=c_{1}-\frac{1}{2}$; we proceed similarly for all $c_{i}$. As a result, we could solve the optimization problem with precision $\epsilon$ using $\lceil\log_{2}\frac{1}{\epsilon}\rceil$ calls to the feasibility problem in \cref{defn:feasibility}. For renormalization, it suffices to take $\epsilon=\varepsilon/(R_{\text{p}}R_{\text{d}})$. See also~\cite{brandao2018SDP}.} Therefore, in this paper we focus on solving feasibility testing of SDPs.

To solve the feasibility testing problem in \cref{defn:feasibility}, we follow the \emph{matrix multiplicative weight} (MMW) framework \cite{arora2012survey}. To be more specific, we follow the approach of regarding MMW as a zero-sum game with two players (see, e.g., \cite{Hazan,Wu10,gutoski2012parallel,LRS15,brandao2018SDP}), where the first player wants to provide a feasible $X \in \mathcal{S}_{\epsilon}$, and the second player wants to find any violation $j\in\range{m}$ of any proposed $X$, i.e., $\Tr[A_{j}X]>a_{j}+\epsilon$. At the $t^{\text{th}}$ round of the game, if the second player points out a violation $j_{t}$ for the current proposed solution $X_{t}$, the first player proposes a new solution
\begin{align}\label{eqn:Gibbs-defn}
X_{t+1}\leftarrow\exp[-(A_{j_{1}}+\cdots+A_{j_{t}})]
\end{align}
(up to normalization); such a solution by matrix exponentiation is formally named as a \emph{Gibbs state}. A feasible solution is actually an equilibrium point of the zero-sum game, which can also be approximated by the MMW method \cite{arora2012survey}; formal discussions are given in \cref{sec:MMW-SDP}.

\hd{Improved sampling tools.}
Before describing our improved sampling tools, let us give a brief review of the techniques introduced by \cite{FKV04}. The basic idea of \cite{FKV04} comes from the fact that a low-rank matrix $A$ can be well-approximated by sampling a small number of rows. More precisely, suppose that $A$ is an $n\times n$ matrix with rank $r$, where $n\gg r$. Because $n$ is large, it is preferable to obtain a ``description'' of $A$ without using $\poly(n)$ resources. If we have the sampling access to $A$ in the form of \cref{defn:sampling-informal}, we can sample rows from $A$ according to statement 1 of \cref{defn:sampling-informal}. Suppose $S$ is the $p\times n$ submatrix of $A$ formed by sampling $p=\poly(r)$ rows from $A$ with some normalization. It can be shown that $S^\dag S \approx A^\dag A$ in the Frobenius norm. Furthermore, we can apply the similar sampling techniques to sampling $p$ columns of $S$ with some normalization to form a $p \times p$ matrix $W$ such that $W W^\dag \approx S S^\dag$. Then the singular values and singular vectors of $W$, which are easy to compute because $p$ is small, together with the row indices that form $S$, can be viewed as a succinct description of some matrix $V \in \bbc^{n \times r}$ satisfying $A \approx AVV^{\dag}$, which gives a low-rank projection of $A$. In~\cite{chia2018quantum}, this method was extended to approximating the spectral decomposition of $AA^{\dag}$, i.e., calculating a small diagonal matrix $D$ and finding a succinct description of $V$  such that $VD^2V^{\dag} \approx A A^\dag$.

To implement the MMW framework, we need an approximate description of the matrix exponentiation $X_{t+1}:=\exp[-\sum_{\tau=1}^{t} A_{j_{\tau}}]$ in \cref{eqn:Gibbs-defn}. We achieve this in two steps. First, we get an approximate description of the spectral decomposition  of $A:=\sum_{\tau=1}^{t} A_{j_{\tau}}$ as $A \approx  \hat{V} \hat{D} \hat{V}^\dag$, where $\hat{V}$ is an $n\times r$ matrix and $\hat{D}$ is an $r\times r$ real diagonal matrix. Then, we approximate the matrix exponentiation $e^{-A}$ by $\hat{V} e^{-D} \hat{V}^\dag$.

There are two main technical difficulties that we overcome with new tools while following the above strategy. First, since $A$ changes dynamically throughout the MMW method, we cannot assume the sampling access to $A$; a more reasonable assumption is to have sampling access to each individual constraint matrix $A_{j_{\tau}}$, but it is hard to directly sample from $A$ by sampling from each individual $A_{j_{\tau}}.$\footnote{For example, assume we have $A=A_1+A_2$ such that $A_2=-A_1+\boldsymbol{\eps}$, where $\boldsymbol{\eps}$ is a matrix with small entries. In this case, $A_1$ and $A_2$ mostly cancel out each other and leave $A=\boldsymbol{\eps}$. Since $\boldsymbol{\eps}$ can be arbitrarily small compared to $A_1$ and $A_2$, it is hard to sample from $\boldsymbol{\eps}$ by sampling from $A_1$ and $A_2$.}
In \cref{sec:matrix_samp}, we sidestep this difficulty by devising the \emph{weighted sampling} procedure which gives a succinct description of a low-rank approximation of $A=\sum_{\tau} A_{j_\tau}$ by sampling each individual $A_{j_\tau}$. In other words, we cannot sample according to $A$, but we can still find a succinct description of a low-rank approximation of $A$.

Second, the original sampling procedure of \cite{FKV04} and the extension by \cite{chia2018quantum} give $V  D^2 V^\dag \approx A^\dag A$ instead of a spectral decomposition $\hat{V}\hat{D}\hat{V}^\dag \approx A$, even if $A$ is Hermitian. For our purpose of matrix exponentiation, singular value decomposition is problematic because the singular values ignore the signs of the eigenvalues; specifically, we get a large error if we approximate $e^{-A}$ by naively exponentiating the singular value decomposition: $e^{-A} \not\approx V e^{-D} V^\dag$. Note that this issue of missing negative signs is intrinsic to the tools in \cite{FKV04} because they are built upon the approximation $S^\dag S\approx A^\dag A$; Suppose that $A$ has the decomposition $A= U D V^\dag$, where $D$ is a diagonal matrix, and $U$ and $V$ are isometries. Then $A^\dag A= V D^\dag D V^\dag$, cancelling out any phase on $D$. We resolve this issue by a novel approximation procedure, \emph{symmetric approximation}. Symmetric approximation is based on the result $A \approx A V V^\dag$ shown by \cite{FKV04}. It then holds that $A \approx V (V^\dag A V) V^\dag $ because the symmetry of $A$ implies that $VV^\dag$ acts roughly as the identity on the image of $A$. Since $(V^\dag A V)$ is a small matrix of size $r\times r$, we can calculate it explicitly and diagonalize it, getting approximate eigenvalues of $A$. Together with the description of $V$, we get the desired description of the spectral decomposition of $A$. See \cref{sec:approx-eigen} for more details.\footnote{It might be illustrative to describe some of our failed attempts before achieving symmetric approximation. We tried to separate the exponential function into even and odd parts; unfortunately that decomposes $e^{-x}$ into $e^{-x} =\cosh{x}-\sinh{x}$. Even if $\cosh{x}$ and $\sinh{x}$ both have bounded relative error, because $e^{-x}$ could be small, it has  unbounded relative error. We also tried to obtain the eigenvectors of $A$ from $V$; this approach faces multiple difficulties, the most serious one being the ``fake degeneracy'' as shown by the following example. Suppose $A=\bigl(\begin{smallmatrix} 1 & 0\\ 0 & -1 \end{smallmatrix}\bigr)$. $A$ has two distinct eigenvectors. However, $A^\dag A= V D D V^\dag$ can be satisfied by taking $D=I$ together with any unitary $V$. In this case, $V$ does not give any information about the eigenvectors.}

\subsection{Related work}\label{sec:related-works}
As we have mentioned earlier, many SDP solvers use cutting-plane methods or interior-point methods with complexity $\poly(\log(1/\epsilon))$ but larger complexities in $m$ and $n$. In contrast, our SDP solver follows the MMW framework, and we briefly summarize such SDP solvers in existing literature. They mainly fall into two categories as follows.

First, MMW is adopted in solvers for \emph{positive} SDPs, i.e., $A_{1},\ldots,A_{m},C\succeq 0$. In this case, the power of MMW lies in its efficiency of having only $\tilde{O}(\poly(1/\epsilon))$ iterations (i.e., poly-logarithmic in $m,n$) and the fact that it admits \emph{width-independent} solvers whose complexities are independent of $R_{\text{p}}$ and $R_{\text{d}}$. Ref.~\cite{luby1993parallel} first gave a width-independent positive LP solver that runs in $O(\log^{2}(mn)/\epsilon^{4})$ iterations, and \cite{jain2011parallel} subsequently generalized this result to give the first width-independent positive SDP solver.
The state-of-the-art positive SDP solver was proposed by \cite{allen-zhu2016SDP} with only $O(\log^{2}(mn)/\epsilon^{3})$ iterations.

Second, as far as we know, the vast majority of \emph{quantum} SDP solvers use the MMW framework. The first quantum SDP solver was proposed by \cite{brandao2016quantum} with worst-case running time $\tilde{O}(\sqrt{mn}s^2(R_{\text{p}}R_{\text{d}}/\epsilon)^{32})$, where $s$ is the sparsity of input matrices, i.e., every row or column of $A_{1},\ldots,A_{m},C$ has at most $s$ nonzero elements. Subsequently, the quantum complexity of solving SDPs was improved by \cite{vanApeldoorn2017quantum,brandao2018SDP}, and the state-of-the-art quantum SDP solver runs in time $\tilde{O}\big((\sqrt{m}+\sqrt{n}R_{\text{p}}R_{\text{d}}/\eps)s(R_{\text{p}}R_{\text{d}}/\eps)^4\big)$ \cite{vanApeldoorn2018SDP}. This is optimal in the dependence of $m$ and $n$ because \cite{brandao2016quantum} proved a quantum lower bound of $\Omega(\sqrt{m}+\sqrt{n})$ for constant $R_{\text{p}},R_{\text{d}},s$, and $\epsilon$.

The authors, along with Gily\'{e}n and Tang~\cite{CGLLTW19}, further generalized the techniques to singular value transformation\footnote{Similar results were simultaneously obtained by Jethwani, Le Gall, and Singh~\cite{JLS19}.} and proposed a quantum-inspired framework to dequantize almost all known quantum machine learning algorithms with claimed exponential speedups. The low-rank SDP problem also fits in that framework and a better time complexity has been achieved in~\cite{CGLLTW19} due to a more efficient sampling method. However, the results in this paper remain its own interest: our techniques are specially crafted for approximating matrices in the form of $e^{-\epsilon A}$ and hence provide additional insights for this line of research.

\subsection{Open questions}
Our paper raises a couple of natural open questions for future work. For example:
\begin{itemize}
\item Can we give faster sampling-based algorithms for solving LPs? Note that recent breakthroughs by \cite{cohen2018solving,jiang2020faster} solve LPs with complexity $\tilde{O}(n^{\omega})$ where $\omega\approx 2.373$, significantly faster than the state-of-the-art SDP solver \cite{lee2015faster} with complexity $\tilde{O}(m(m^2+n^{\omega}+mn^{2}))$.

\item What is the empirical performance of our sampling-based method? Admittedly, the exponents of our poly-logarithmic factors are large; nevertheless, it is common that numerical experiments perform better than theoretical guarantees, and we wonder if this phenomenon can be observed when applying our method.
\end{itemize}

\hd{Organization.}
The rest of the paper is organized as follows. We formulate the sampling-based data structure and the SDP feasibility problem in \cref{sec:prelim}. Our two techniques, weighted sampling and symmetric approximation, are presented in \cref{sec:matrix_samp} and \cref{sec:approx-eigen}, respectively. Subsequently, we apply these techniques to estimate traces with respect to Gibbs states in \cref{sec:gibbs}, and give the proof of the main results in \cref{sec:proof-main}.


\section{Preliminaries}\label{sec:prelim}
\subsection{Notations}
Throughout the paper, we denote by $m$ and $n$ the number of constraints and the dimension of constraint matrices in SDPs, respectively. We use $\epsilon$ to denote the precision of the solution of the SDP feasibility problem in \cref{eqn:SDP-1} of \cref{defn:feasibility}. We use $r$ to denote an upper bound on the rank of matrices, i.e., $\max_{i\in\range{n}}\{\rank(A_{i}),\rank(C)\}\leq r$ (we denote $[n]:=\{1, \ldots, n\}$).

For a vector $v \in \bbc^n$, we use $\mathcal{D}_v$ to denote the probability distribution on $[n]$ where the probability of $i$ being chosen is $\mathcal{D}_v(i) = |v(i)|^2/\norm{v}^2$ for all $i \in [n]$. When it is clear from the context, a sample from $\mathcal{D}_v$ is often referred to as a sample from $v$. For a matrix $A \in \bbc^{n \times n}$, we use $\norm{A}$ and $\|A\|_F$ to denote its spectral norm and Frobenius norm, respectively; we use $A(i,\cdot)$ and $A(\cdot,j)$ to denote the $i^{\text{th}}$ row and $j^{\text{th}}$ column of $A$, respectively.  We use $\rows(A)$ to denote the $n$-dimensional vector formed by the norms of its row vectors, i.e., $\rows(A)(i) = \norm{A(i, \cdot)}$, for all $i \in [n]$.
\subsection{Data structure for sampling and query access}\label{sec:datastruct}
As we develop sublinear-time algorithms for solving SDP in this paper, the whole constraint matrices cannot be loaded into memory since storing them requires at least linear space and time. Instead, we assume the \emph{sampling access} of each constraint matrix as defined in \cref{defn:sampling-informal}. This sampling access relies on a natural probability distribution that arises in many machine learning applications~\cite{FKV04, chia2018quantum, GLT18, kerenidis2016recommendation, kerenidis2018quantum, tang2018quantum, tang2018quantum2} (also see a survey by Kannan and Vempala~\cite{kannan2017RandAlgNumLinAlg}).

Technically, Ref.~\cite{FKV04} used this sampling assumption to develop a sublinear algorithm for approximating low-rank projection of matrices. It is well-known (as pointed out by~\cite{kerenidis2016recommendation} and also used in~\cite{chia2018quantum, GLT18, kerenidis2018quantum, tang2018quantum, tang2018quantum2}) that there exist low-overhead preprocessed data structures that allow for the sampling access. More precisely, the existence of the data structures for the sampling access defined  in \cref{defn:sampling-informal} is stated as follows.

\begin{theorem}[\cite{kerenidis2016recommendation}]
  \label{thm:ds}
  Given a matrix $M \in \bbc^{n \times n}$ with $s$ non-zero entries, there exists a data structure storing $M$ in space $O(s \log n)$, which supports the following:
  \begin{enumerate}
    \item Read and write $M(i, j)$ in $O(\log n)$ time.
    \item Evaluate $\norm{M(i, \cdot)}$ in $O(\log n)$ time.
    \item Evaluate $\pnorm{F}{M}^2$ in $O(1)$ time.
    \item Sample a row index of $M$ according to statement 1 of \cref{defn:sampling-informal} in $O(\log n)$ time.
    \item For each row, sample an index according to statement 2 of \cref{defn:sampling-informal} in $O(\log n)$ time.
  \end{enumerate}
\end{theorem}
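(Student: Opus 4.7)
The plan is to build a two-level balanced binary search tree (BST) structure, as in the Kerenidis–Prakash data structure, that simultaneously encodes the entries of $M$, the squared norms of each row, and the squared Frobenius norm.

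At the inner level, for every row $i\in[n]$ that contains at least one nonzero entry, I would maintain a balanced BST $T_i$ whose leaves are indexed by the column indices $j$ of the nonzero entries $M(i,j)$. Each leaf stores both $M(i,j)$ itself and $|M(i,j)|^2$, and each internal node stores the sum of the values $|M(i,j)|^2$ across all leaves in its subtree. In particular, the root of $T_i$ stores $\|M(i,\cdot)\|^2$. At the outer level, I would maintain a single balanced BST $T$ whose leaves are indexed by the nonzero row indices; the leaf for row $i$ stores $\|M(i,\cdot)\|^2$ (a pointer into the root of $T_i$), and each internal node stores the sum of these squared row norms over its subtree, so that the root of $T$ stores $\|M\|_F^2$. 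Using a self-balancing scheme such as a red-black or AVL tree ensures that every tree has depth $O(\log n)$, and keying by the (at most $n$) index values ensures all comparisons cost $O(\log n)$.

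Given this structure, each of the five operations is a straightforward tree walk. Reading $M(i,j)$ is a search in $T$ followed by a search in $T_i$, both of depth $O(\log n)$. Writing a new value $M(i,j)$ involves inserting (or updating) the appropriate leaf and propagating the change in $|M(i,j)|^2$ up $T_i$, and then propagating the resulting change in $\|M(i,\cdot)\|^2$ up $T$; because tree heights are $O(\log n)$ and each update touches a constant amount of information per node, this costs $O(\log n)$. Evaluating $\|M(i,\cdot)\|^2$ reduces to one search in $T$ and one lookup at the root of $T_i$, which is $O(\log n)$; evaluating $\|M\|_F^2$ is a single read of the root of $T$, which is $O(1)$. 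Sampling a row according to $\|M(i,\cdot)\|^2/\|M\|_F^2$ is executed by a top-down walk in $T$: at each internal node with children storing subtree sums $s_L$ and $s_R$, go left with probability $s_L/(s_L+s_R)$; the leaf reached is distributed exactly as required, and the walk takes $O(\log n)$ time. Sampling a column within a fixed row is performed analogously on $T_i$ using the stored $|M(i,j)|^2$ values.

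For space, each of the $s$ nonzero entries occupies a single leaf of some $T_i$ plus $O(\log n)$ ancestor pointers/aggregates, and each nonempty row contributes one leaf of $T$; summing over rows and entries gives total space $O(s\log n)$. The main thing to be careful about — and essentially the only nontrivial issue — is maintaining the invariant that the aggregated squared norms along the spine are consistent across both levels after every write: an update to a single $|M(i,j)|^2$ must trigger a corresponding additive update of $\|M(i,\cdot)\|^2$ at the root of $T_i$ and then an additive update along the ancestor path of row $i$ in $T$, and the data structure must also gracefully insert a fresh inner tree $T_i$ (with a new leaf in $T$) when writing the first nonzero entry of a previously empty row. Once these bookkeeping rules are in place, correctness of all sampling distributions follows from the telescoping of the subtree sums, and all stated time and space bounds hold.
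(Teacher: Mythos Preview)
Your proposal is correct and is essentially the same two-level binary-tree construction the paper sketches (and attributes to \cite{kerenidis2016recommendation}): per-row trees with leaves storing $|M(i,j)|^2$ and internal nodes storing subtree sums, together with an outer tree on the row norms whose root holds $\|M\|_F^2$. The paper does not give more detail than this intuition and simply refers to \cite[Theorem~A.1]{kerenidis2016recommendation}, so your write-up is, if anything, more complete than what appears here.
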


Readers may refer to~\cite[Theorem A.1]{kerenidis2016recommendation} for the proof of \cref{thm:ds}. In the following, we give the intuition of the data structure, which is demonstrated in \cref{fig:ds}. We show how to sample from a row vector: we use a binary tree to store the data of each row. The square of the absolute values of all entries, along with their original values are stored in the leaf nodes. Each internode contains the sum of the values of its two immediate children. It is easy to see that the root node contains the square of the norm of this row vector. To sample an index and to query an entry from this row, logarithmic steps suffice. To fulfill statement 1 of \cref{defn:sampling-informal}, we treat the norms of rows as a vector $(\norm{M(1, \cdot)}, \ldots, \norm{M(n, \cdot)})$ and organize the data of this vector in a binary tree.
\begin{figure}[ht]
  \centering
  \includegraphics[width=0.8\textwidth]{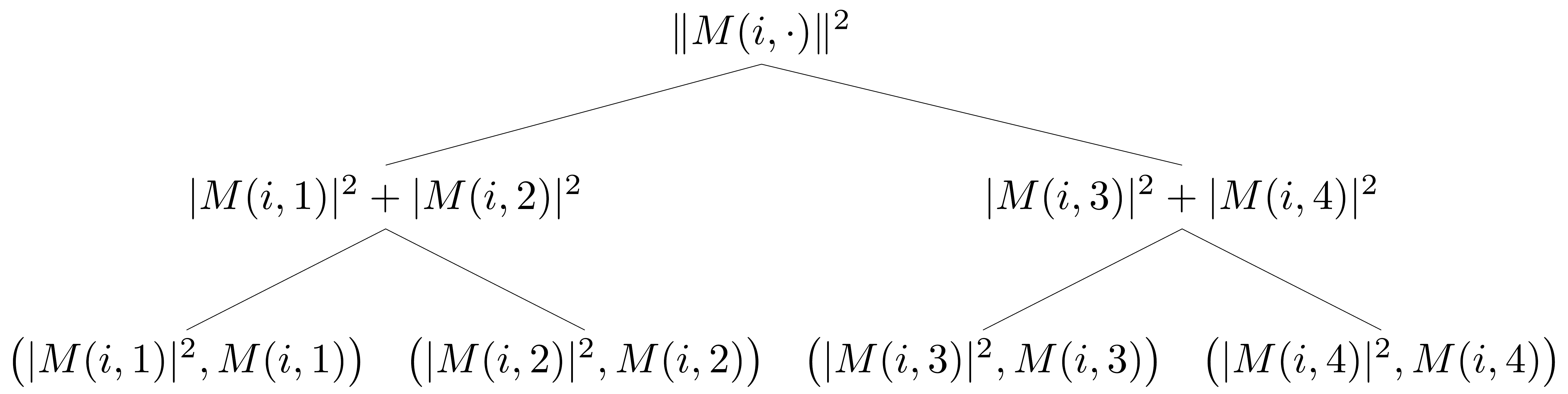}
  \caption{Illustration of a data structure that allows for sampling access to a row of $M \in \bbc^{4 \times 4}$. \label{fig:ds}}
\end{figure}

\subsection{Feasibility testing of SDPs}\label{sec:MMW-SDP}
We adopt the MMW framework to solve SDPs under the zero-sum approach \cite{Hazan,Wu10,gutoski2012parallel,LRS15,brandao2018SDP}. This is formulated as the following theorem:

\begin{theorem}[Master algorithm]\label{thm:SDP-master}
Feasibility of the SDP in \crefrange{eqn:SDP-1}{eqn:SDP-3} can be tested by \cref{algo:matrixMW-gain}.
\end{theorem}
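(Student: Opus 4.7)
The plan is to implement the zero-sum game view of MMW. I would have the master algorithm maintain a sequence of candidate PSD matrices $X_1,X_2,\ldots$ with $\Tr[X_t]=1$, starting from $X_1 = I/n$, and run $T = O(\log n /\epsilon^2)$ iterations. At iteration $t$, the algorithm consults an oracle that searches over $i\in\range{m}$ for an $\epsilon/2$-violated constraint, i.e., an index $j_t$ with $\Tr[A_{j_t}X_t] > a_{j_t}+\epsilon/2$. If no such index is found, the algorithm halts and outputs $X_t$. Otherwise, it records $j_t$ and forms the next iterate as the Gibbs state
\[
X_{t+1} \;\propto\; \exp\!\Bigl(-\eta\sum_{\tau=1}^{t}(A_{j_\tau}-a_{j_\tau}I)\Bigr)
\]
for a suitable step size $\eta=\Theta(\epsilon)$. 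If the algorithm completes $T$ rounds without halting, it outputs "infeasible". I would make the oracle itself an approximate trace estimator powered by the Gibbs-state access from the preceding sections, and the Gibbs state itself constructed via the weighted sampling (\cref{sec:matrix_samp}) and symmetric approximation (\cref{sec:approx-eigen}) tools applied to the accumulated sum $\sum_{\tau\leq t} A_{j_\tau}$.

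For correctness I would invoke the standard MMW regret bound: for loss matrices $M_t := (A_{j_t}-a_{j_t}I)/2$, which lie in $[-I,I]$ by the normalization $-I\preceq A_i\preceq I$ and $|a_i|\leq 1$, the sequence $\{X_t\}$ satisfies
\[
\frac{1}{T}\sum_{t=1}^{T}\Tr[M_t X_t] \;\le\; \min_{X\succeq 0,\,\Tr[X]=1}\frac{1}{T}\sum_{t=1}^{T}\Tr[M_t X] \;+\; O\!\Bigl(\sqrt{\tfrac{\log n}{T}}\Bigr).
\]
Suppose $\mathcal{S}_0 \neq \emptyset$ and let $X^\star\in\mathcal{S}_0$ be a true feasible point; then $\Tr[M_tX^\star]\le 0$ for every $t$, so the right-hand side is at most $O(\sqrt{\log n/T})$. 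If the algorithm ran all $T$ rounds, then at every $t$ we had $\Tr[M_tX_t] > \epsilon/4$, forcing the left-hand side above $\epsilon/4$. Choosing $T=C\log n/\epsilon^2$ for a large enough constant $C$ produces a contradiction, so the algorithm must halt and output some $X_t\in\mathcal{S}_\epsilon$. Conversely, if $\mathcal{S}_\epsilon=\emptyset$, then every $X_t$ has at least one $\epsilon$-violation, so the oracle (with its slack of $\epsilon/2$) will always find one and the algorithm correctly outputs "infeasible".

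The main obstacle, and the place where the bulk of the work will go, is propagating the approximation errors of the sampling-based Gibbs state and the sampling-based trace oracle through the MMW regret analysis. The MMW bound as written above assumes exact access to $X_{t+1}=\exp(-\eta\sum_\tau M_\tau)/\Tr[\cdot]$ and to $\Tr[A_i X_t]$, but our algorithm only has succinct descriptions $\hat V,\hat D$ with $\hat V e^{-\eta \hat D}\hat V^\dag \approx X_{t+1}$ and additive estimates of the traces. I would proceed by fixing an overall target error budget, allotting $O(\epsilon/T)$ error to the Gibbs state per iteration (measured in trace norm of the normalized state) so that the cumulative damage to the regret bound is $O(\epsilon)$, and similarly allotting $\epsilon/4$ to the trace estimator so that the distinction between $\epsilon/2$-violations and true $\epsilon$-feasibility is preserved. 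Substituting these tolerances into the complexity guarantees of \cref{sec:matrix_samp,sec:approx-eigen,sec:gibbs} then yields an end-to-end algorithm whose per-iteration cost is $m\cdot\poly(\log n,r,1/\epsilon)$, which will feed directly into \cref{thm:main-informal} after binary search.
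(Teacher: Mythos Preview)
Your regret-bound argument in the first two paragraphs is essentially the standard MMW correctness proof and matches what the paper relies on; indeed, the paper does not prove \cref{thm:SDP-master} from scratch but simply cites \cite[Theorem~2.3]{brandao2018SDP} and checks that the weight matrices there give the same Gibbs state as in \cref{algo:matrixMW-gain}. Minor differences: the paper's Algorithm~\ref{algo:matrixMW-gain} looks for an $\epsilon$-violation (not $\epsilon/2$), and does not subtract $a_{j_t}I$ in the exponent---but as the paper itself observes, shifting by a multiple of the identity does not change the normalized Gibbs state.

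The genuine mismatch is one of scope. \Cref{thm:SDP-master} is a statement about the \emph{exact} master algorithm: exact Gibbs states $\rho_{t+1}=\exp[-\tfrac{\epsilon}{4}\sum_i A_{j_i}]/\Tr[\cdot]$ and exact trace tests $\Tr[A_j\rho_t]$. The paper explicitly flags this: ``this master algorithm is \emph{not} the final algorithm; the step of trace estimation \ldots\ will be fulfilled by our sampling-based approach.'' Your third paragraph---error budgeting for the approximate Gibbs state and the sampling-based trace oracle---does not belong in the proof of \cref{thm:SDP-master}; that analysis is the content of \cref{thm:est_gibbs} and \cref{thm:main} (the proof of \cref{thm:main} invokes \cref{thm:est_gibbs} to show that the approximate traces are within $\epsilon$ of the exact ones, and then falls back on \cref{thm:SDP-master} for correctness). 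So your plan is not wrong, but you have folded two theorems into one and made the job look harder than it is here.
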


\begin{algorithm}[htbp]
Set the initial Gibbs state $\rho_{1}=\frac{I_{n}}{n}$, and number of iterations $T=\frac{16\ln n}{\epsilon^{2}}$\;
\For{$t=1,\ldots,T$}{
Find a $j_{t}\in\range{m}$ such that $\Tr[A_{j_{t}}\rho_{t}]>a_{j_{t}}+\epsilon$. If we cannot find such $j_{t}$, claim that $\rho_{t}\in\mathcal{S}_{\epsilon}$ (the SDP is feasible) and terminate the algorithm\;
\label{line:observe-gain-matrix}
    Define the new weight matrix $W_{t+1}:=\exp[-\frac{\epsilon}{4}\sum_{i=1}^{t}A_{j_{i}}]$ and Gibbs state $\rho_{t+1}:=\frac{W_{t+1}}{\Tr[W_{t+1}]}$\label{line:Gibbs}\;
    }
Claim that the SDP is infeasible and terminate the algorithm\;
\caption{MMW for testing feasibility of SDPs.}
\label{algo:matrixMW-gain}
\end{algorithm}

This theorem is proved in \cite[Theorem 2.3]{brandao2018SDP}; note that the weight matrix therein is $W_{t+1}=\exp[\frac{\epsilon}{2}\sum_{i=1}^{t}M_{i}]$ where $M_{i}=\frac{1}{2}(I_{n}-A_{j_{i}})$, and this gives the same Gibbs state as in \cref{line:Gibbs} since for any Hermitian matrix $W\in\C^{n\times n}$ and real number $c\in\bbr$,
\begin{align}
\frac{e^{W+cI}}{\Tr[e^{W+cI}]}=\frac{e^{W}e^{c}I}{\Tr[e^{W}e^{c}I]}=\frac{e^{W}}{\Tr[e^{W}]}.
\end{align}
It should also be understood that this master algorithm is \emph{not} the final algorithm; the step of trace estimation with respect to the Gibbs state (in \cref{line:observe-gain-matrix} and \cref{line:Gibbs}) will be fulfilled by our sampling-based approach.


\section{The main algorithm}\label{sec:main_alg}
Our main algorithm is \cref{alg:fea_testing}, it depends on several building blocks. To begin with, we introduce a useful claim from~\cite[Lemma 11]{GLT18}.
\begin{claim}[Trace inner product estimation \cite{GLT18}]\label{claim:tr_prod}
Let $A\in \mathbb{C}^{n\times n}$ and $B\in \mathbb{C}^{n\times n}$ be two Hermitian matrices. Given sampling and query access to $A$ and query access to $B$. Then one can estimate $\Tr[AB]$ with the additive error $\epsilon_s$ with probability at least $1-\delta$ by using
\begin{align}
    O\Big(\frac{\|A\|_F^2\|B\|_F^2}{\epsilon_s^2} \big(Q(A)+Q(B)+S(A)+N(A)\big)\log\frac{1}{\delta}\Big)
\end{align}
time and queries, where $Q(B)$ is the cost of query access to $B$, and $Q(A), S(A), N(A)$ are the cost for query access, sampling access and row norm access to $A$.
\end{claim}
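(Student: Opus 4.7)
The plan is standard importance sampling followed by a median-of-means boost. I would form an unbiased estimator of $\Tr[AB] = \sum_{i,j} A(i,j)\,B(j,i)$ by drawing a pair $(i,j)$ from the joint distribution induced by the sampling access to $A$: first sample $i$ with probability $\|A(i,\cdot)\|^2/\|A\|_F^2$ (statement 1 of \cref{defn:sampling-informal}), then sample $j$ with probability $|A(i,j)|^2/\|A(i,\cdot)\|^2$ (statement 2). The joint probability is then $p_{ij} = |A(i,j)|^2/\|A\|_F^2$, and I define the random variable
\[ Z \;=\; \frac{\|A\|_F^2 \, A(i,j)\, B(j,i)}{|A(i,j)|^2}, \]
with the convention that terms with $A(i,j)=0$ are never drawn. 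By construction $\E[Z] = \sum_{i,j} A(i,j)\,B(j,i) = \Tr[AB]$.

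The key observation is that $p_{ij} \propto |A(i,j)|^2$ is exactly matched to the numerator of $|Z|^2$, so the second moment telescopes cleanly:
\[ \E\!\left[|Z|^2\right] \;=\; \sum_{i,j} p_{ij}\cdot\frac{|A(i,j)|^2 |B(j,i)|^2}{p_{ij}^2} \;=\; \|A\|_F^2 \sum_{i,j} |B(j,i)|^2 \;=\; \|A\|_F^2 \|B\|_F^2. \]
Hence $\Var[Z] \leq \|A\|_F^2\|B\|_F^2$. This is precisely why the claim asks only for query access to $B$: the importance distribution is driven entirely by $A$, and the variance absorbs $B$ through its Frobenius norm alone.

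To turn this into the stated guarantee, I would apply a two-level estimator. First, average $N = O(\|A\|_F^2\|B\|_F^2/\epsilon_s^2)$ independent copies of $Z$; by Chebyshev this is within $\epsilon_s$ of $\Tr[AB]$ with probability at least $2/3$. Then take the median of $K = O(\log(1/\delta))$ independent such batches to boost the failure probability to $\delta$. Since $A,B$ are Hermitian and $\Tr[AB]$ is real, I return the real part. Each sample costs: one $(i,j)$-sample from $A$ ($S(A)$), one lookup of $\|A\|_F^2$ computed once ($N(A)$), one entry query $A(i,j)$ ($Q(A)$), and one entry query $B(j,i)$ ($Q(B)$). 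Multiplying by $NK$ yields the total
\[ O\!\left(\frac{\|A\|_F^2\|B\|_F^2}{\epsilon_s^2}\bigl(Q(A)+Q(B)+S(A)+N(A)\bigr)\log\frac{1}{\delta}\right). \]

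There is no substantive obstacle here; the statement reduces to a one-line variance computation plus a textbook median-of-means amplification. The only subtlety worth double-checking is that the variance bound is \emph{specifically} enabled by sampling from $\mathcal{D}_A$ on pairs $(i,j)$ — any other importance distribution (e.g.\ uniform, or one tied to $B$) would not cancel the $|A(i,j)|^2$ in the denominator and would fail to produce the clean $\|A\|_F^2\|B\|_F^2$ bound that makes query-only access to $B$ sufficient.
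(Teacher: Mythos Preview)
Your proof is correct and is exactly the standard importance-sampling argument for this estimator. Note, however, that the paper does not actually prove this claim: it is quoted verbatim as \cite[Lemma~11]{GLT18} and used as a black box, so there is no ``paper's own proof'' to compare against. What you have written is essentially the proof one finds in that reference, so there is no discrepancy to report.
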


\subsection{Weighted sampling}
\label{sec:matrix_samp}

Given sample and query access to $A_1,\dots,A_{\tau}$, the objective of this section is to provide an algorithm to give sample and query access to a matrix $V$, which approximates the eigenvectors of $A := A_1+\cdots +A_{\tau}$. Specifically, we will show that $\|VV^{\dag} A VV^{\dag}-A\|_F$ can be bounded.

Note that trivially invoking the standard FKV sampling method~\cite{FKV04} is not capable of this task. In this paper, we propose the \emph{weighted sampling} method. The intuition is to assign each $A_{\ell}$ a different weight when computing the probability distribution, and then sampling a row/column index of $A$ according to this probability distribution. The main theorem we prove in this section is as follows.

We first give the method for sampling row indices of $A$ in \cref{proc:samp_row}. The objective of this procedure is to sample a submatrix $S$ such that $S^{\dag}S \approx A^{\dag}A$.

After applying \cref{proc:samp_row}, we obtain the row indices $i_1, \ldots, i_p$. Let $S_1, \ldots, S_{\tau}$ be matrices such that $S_{\ell}(t, \cdot) = A_{\ell}(i_t, \cdot)/\sqrt{p P_{i_t}}$ for all $t \in [p]$ and $\ell \in [\tau]$. Define $S$ as
\begin{align}
  \label{eq:s}
  S = S_1 + \cdots + S_{\tau}.
\end{align}
Next, we give the method for sampling column indices of $S$ as in \cref{proc:samp_col}: we need to sample a submatrix $W$ from $S$ such that $WW^{\dag} \approx SS^{\dag}$.

Now, we obtained column indices $j_1, \ldots, j_p$. Let $W_1, \ldots, W_{\tau}$ be matrices such that  $W_{\ell}(\cdot, t) = S_{\ell}(\cdot, j_t)/\sqrt{p P_{j_t}'}$ $\forall\,t \in [p], \ell \in [\tau]$, where $P_{j}' = \frac{1}{p}\sum_{t=1}^pQ_{j|i_t}$ for $j \in [n]$. Define $W$ as
\begin{align}
  \label{eq:w}
  W = W_1 + \cdots + W_{\tau}.
\end{align}

\begin{procedure}
  \SetKwInput{Input}{Input}\SetKwInOut{Output}{output}
  \Input{The sampling and query access to each $A_{\ell}$ as in \cref{defn:sampling-informal} for $A = \sum_{l=1}^{\tau}A_{\ell}$; integer $p$.}
  Sample $p$ indices $i_1, \ldots, i_p$ from $[n]$ according to the probability distribution $\{P_1, \ldots, P_n\}$ where $P_i = \sum_{j=1}^{\tau} \mathcal{D}_{\rows(A_j)}(i)\pnorm{F}{A_j}^2/\left(\sum_{\ell=1}^{\tau}\pnorm{F}{A_{\ell}}^2\right)$\label{step:it} by sampling $j$ according to $\pnorm{F}{A_j}^2/\left(\sum_{\ell=1}^{\tau}\pnorm{F}{A_{\ell}}^2\right)$ then sample according to $\mathcal{D}_{\rows(A_j)}$\;
  \caption{Weighted sampling of rows.()}
  \label[procedure]{proc:samp_row}
\end{procedure}

\begin{procedure}[ht]
  \SetKwInput{Input}{Input}\SetKwInOut{Output}{output}
  \Input{The sampling and query access to each $A_{\ell}$ as in \cref{defn:sampling-informal} for $A = \sum_{l=1}^{\tau}A_{\ell}$; $i_1, \ldots, i_p$ obtained in \cref{proc:samp_row}; integer $p$.}
  Do the following $p$ times independently to obtain samples $j_1, \ldots, j_p$. \Begin{
  Sample a row index $t \in [p]$ uniformly at random\;
  Sample a column index $j \in [n]$ from the probability distribution $\{Q_{1|i_t}, \ldots, Q_{n|i_t}\}$ where $Q_{j|i_t} = \sum_{k=1}^{\tau}\mathcal{D}_{A_{k}(i_t, \cdot)}(j)\norm{A_{k}(i_t, \cdot)}^2/\left(\sum_{\ell=1}^{\tau}\norm{A_{\ell}(i_t,\cdot)}^2\right)$ \label{step:q}\;
  }
  \caption{Weighted sampling of columns.()}
  \label[procedure]{proc:samp_col}
\end{procedure}

With the weighted sampling method, we obtained a small submatrix $W$ from $A$. Now, we use the singular values and singular vectors of $W$ to approximate the ones of $A$. This is shown in \cref{alg:approx_v}. The main consequence of \cref{alg:approx_v} is summarized in \cref{thm:vvavv_bound}.

\begin{algorithm}[ht]
  \SetKwInput{Input}{Input}\SetKwInOut{Output}{output}
  \Input{The sampling and query access to each $A_{\ell}$ as in \cref{defn:sampling-informal} for $A = \sum_{l=1}^{\tau}A_{\ell}$ with $\rank(A_{\ell}) \leq r$; error parameter $\epsilon$.}
  Set $p = 2\cdot10^{20}\frac{\tau^{12}r^{19}}{\epsilon^6}$, $\gamma = \frac{\epsilon^2}{3\times 10^6\tau^2r^6}$ \label{step:parameter}\;
  Use \cref{proc:samp_row} to obtain row indices $i_1, \ldots, i_p$\;
  Let $S_1, \ldots, S_{\tau}$ be matrices such that $S_{\ell}(t, \cdot) = A_{\ell}(i_t, \cdot)/\sqrt{p P_{i_t}}$ for all $t \in [p]$ and $\ell \in [\tau]$, where $P_i$ is defined in \cref{step:it} in \cref{proc:samp_row}. Let $S = S_1 + \cdots + S_{\tau}$ \label{step:alg-S}\;
  Use \cref{proc:samp_col} to obtain column indices $j_1, \ldots, j_p$\;
  Let $W_1, \ldots, W_{\tau}$ be matrices such that $W_{\ell}(\cdot, t) = S_{\ell}(\cdot, j_t)/\sqrt{p P_{j_t}'}$ for all $t \in [p]$ and $\ell \in [\tau]$, where $P_{j}' = \frac{1}{p}\sum_{t=1}^pQ_{j|i_t}$ for $j \in [n]$ and $Q_{j|i}$ is defined in \cref{step:q} in \cref{proc:samp_col}. Let $W = W_1 + \cdots + W_{\tau}$\;
  (Assume the rank of $A$ is $\hat{r}$.) Compute the top $\hat{r}$ singular values $\sigma_1, \ldots, \sigma_{\hat{r}}$ of $W$ and their corresponding left singular vectors $u_1, \ldots, u_{\hat{r}}$\;
  Discard the singular values and their corresponding singular vectors satisfying $\sigma_j^2 < \gamma\sum_{\ell=1}^\tau\pnorm{F}{W_{\ell}}^2$. Let the remaining number of singular values be $\tilde{r}$\;
  Output $i_1, \ldots, i_p$, $P_{i_1}, \ldots, P_{i_p}$, $\sigma_1, \ldots, \sigma_{\tilde{r}}$ and $u_1, \ldots, u_{\tilde{r}}$ \label{step:alg:v}\;
  \caption{Approximation of singular vectors.}
  \label{alg:approx_v}
\end{algorithm}

\begin{restatable}{theorem}{thmvvavv}
\label{thm:vvavv_bound}
Let $A = A_1 + \cdots + A_{\tau} \in \bbc^{n \times n}$ be a Hermitian matrix where $A_{\ell} \in \bbc^{n \times n}$ is Hermitian, $\norm{A_{\ell}} \leq 1$, and $\rank(A_{\ell}) \leq r$ for all $\ell \in [\tau]$. The sampling and query access to each $A_{\ell}$ is given as in \cref{defn:sampling-informal}. Take the error parameter $\epsilon$ as the input of \cref{alg:approx_v} to obtain the singular values $\sigma_1, \ldots, \sigma_{\tilde{r}}\in\bbr$ and singular vectors $u_1, \ldots, u_{\tilde{r}}\in\bbc^{p}$ for $p$ specified in \cref{step:parameter} of \cref{alg:approx_v}. Let $V \in \bbc^{n \times \tilde{r}}$ be the matrix such that $V(\cdot, j) = \frac{S^{\dag}}{\sigma_j}u_j$ for $j \in \{1, \ldots, \tilde{r}\}$, where $S$ is defined in \cref{step:alg-S} in \cref{alg:approx_v}. Then with probability at least $9/10$, it holds that $\pnorm{F}{VV^{\dag}AVV^{\dag} - A} \leq \ef(2+\ev)$.
\end{restatable}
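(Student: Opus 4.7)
The plan is to adapt the FKV low-rank approximation scheme of \cite{FKV04} to the weighted-sampling setting, while crucially exploiting the Hermiticity of $A$. The chain of approximations I aim to establish is: the sub-sampled matrix $S$ of \cref{eq:s} satisfies $S^{\dag}S\approx A^{\dag}A$ in Frobenius norm; the further sub-sampled $W$ of \cref{eq:w} satisfies $WW^{\dag}\approx SS^{\dag}$; the top left singular vectors $u_j$ of $W$ therefore approximately diagonalize $SS^{\dag}$; and lifting via $V(\cdot,j)=S^{\dag}u_j/\sigma_j$ produces approximately orthonormal vectors that span an approximate top right-singular subspace of $A$. Combined, these yield a bound on $\|A-AVV^{\dag}\|_F$ of order $\ef$. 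Since $A=A^{\dag}$, taking adjoints gives the same bound on $\|A-VV^{\dag}A\|_F$, and a triangle inequality
\[
\|A-VV^{\dag}AVV^{\dag}\|_F \le \|A-VV^{\dag}A\|_F + \|VV^{\dag}\|\cdot\|A-AVV^{\dag}\|_F
\]
then produces the claimed bound $\ef(2+\ev)$, with the tiny $\ev$ correction absorbing the lower-order sampling terms.

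For the first concentration step, I would compute $\mathbb{E}[S_\ell^{\dag}S_{\ell'}]=A_\ell^{\dag}A_{\ell'}$ directly from the definition of $P_i$ in \cref{proc:samp_row}: the weighted mixture is designed precisely so that $P_i\ge \mathcal{D}_{\rows(A_\ell)}(i)\cdot\pnorm{F}{A_\ell}^2/\sum_k\pnorm{F}{A_k}^2$, which makes the importance-sampling re-weighting $1/(pP_{i_t})$ unbiased for each cross-term. Summing over $\ell,\ell'$ gives $\mathbb{E}[S^{\dag}S]=A^{\dag}A$. A per-entry second-moment calculation analogous to that of \cite{FKV04}, accounting for the mixture weights, followed by Chebyshev's inequality over $p$ independent samples, yields
\[
\pnorm{F}{S^{\dag}S-A^{\dag}A}^2 = O\!\left(\tfrac{\tau}{p}\right)\cdot\bigl(\textstyle\sum_\ell\pnorm{F}{A_\ell}^2\bigr)^2
\]
with probability at least $1-\tfrac{1}{100}$. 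Conditioned on the row sampling, the same argument applied to \cref{proc:samp_col} gives the analogous bound for $\pnorm{F}{WW^{\dag}-SS^{\dag}}$.

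Given these two Frobenius bounds, I would invoke matrix perturbation theory to argue that eigenpairs of $WW^{\dag}$ with eigenvalue $\sigma_j^2\ge\gamma\sum_\ell\pnorm{F}{W_\ell}^2$ yield, through $v_j=S^{\dag}u_j/\sigma_j$, approximately orthonormal approximate right singular vectors of $S$, with orthogonality defect controlled by $\pnorm{F}{WW^{\dag}-SS^{\dag}}/\sigma_j^2$, i.e.\ by $1/\gamma$ times the sampling error. Projecting $A^{\dag}A$ onto $\sspan(V)$ then leaves a residual of two types: (i) a sampling-error term inherited from $S^{\dag}S-A^{\dag}A$; and (ii) a truncation term of size $\tilde r\gamma\sum_\ell\pnorm{F}{W_\ell}^2$ from the discarded small directions. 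The rank bound $\rank(A)\le\tau r$ caps $\tilde r\le \tau r$, which (together with $\pnorm{F}{A_\ell}^2\le r$ since $\norm{A_\ell}\le 1$ and $\rank(A_\ell)\le r$) is exactly where the $\tau$ and $r$ dependence in the parameter choices of \cref{step:parameter} materializes. Taking square roots converts the bound on $\pnorm{F}{A^{\dag}A-VV^{\dag}A^{\dag}AVV^{\dag}}$ into one on $\pnorm{F}{A-AVV^{\dag}}$ of order $\ef$.

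The main obstacle I expect is the bookkeeping around the weighted mixture: the variance of $S^{\dag}S$ scales not as $\pnorm{F}{A}^4/p$ but as $O(\tau)\cdot(\sum_\ell\pnorm{F}{A_\ell}^2)^2/p$ because sampling against a mixture effectively amplifies the per-component variance. Propagating this through the $1/\sigma_j\le 1/\sqrt{\gamma\sum_\ell\pnorm{F}{W_\ell}^2}$ lifting, through the conversion from a bound on $A^{\dag}A$ to a bound on $A$ (which loses a square root), and through the final Hermitian-adjoint triangle inequality, is what forces the polynomial dependence $p=\Theta(\tau^{12}r^{19}/\epsilon^6)$ and $\gamma=\Theta(\epsilon^2/(\tau^2r^6))$ in \cref{step:parameter}. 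Once those error budgets are tracked carefully, the remaining manipulations — the $VV^{\dag}AVV^{\dag}$ triangle step and the union bound over the two concentration events — are routine.
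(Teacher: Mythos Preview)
Your overall architecture matches the paper exactly: establish $S^{\dag}S\approx A^{\dag}A$ and $WW^{\dag}\approx SS^{\dag}$ (\cref{cor:aass}), use the lifting $v_j=S^{\dag}u_j/\sigma_j$ and show the $v_j$ are nearly orthonormal (\cref{lemma:almost-orthonormal}), obtain $\pnorm{F}{A-AVV^{\dag}}\le\ef$ (\cref{lemma:modfkv}), then use Hermiticity plus the triangle inequality you wrote to finish. Your final two lines are literally the paper's proof of the theorem given \cref{lemma:modfkv}.

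The one genuine gap is the step you call ``taking square roots converts the bound on $\pnorm{F}{A^{\dag}A-VV^{\dag}A^{\dag}AVV^{\dag}}$ into one on $\pnorm{F}{A-AVV^{\dag}}$.'' No such square-root identity holds between those two Frobenius norms. What \emph{is} true, when $P=VV^{\dag}$ is a projector, is
\[
\pnorm{F}{A-AP}^2=\Tr[(I-P)A^{\dag}A]=\Tr[A^{\dag}A]-\Tr[PA^{\dag}AP],
\]
so the right scalar to control is the \emph{trace} of $A^{\dag}A-PA^{\dag}AP$, not its Frobenius norm. You could patch this via $|\Tr M|\le\sqrt{\rank M}\,\pnorm{F}{M}$ at the cost of an extra $(\tau r)^{1/4}$ factor, but you would still need to actually bound $\pnorm{F}{A^{\dag}A-VV^{\dag}A^{\dag}AVV^{\dag}}$, which is itself nontrivial from the pieces you have.

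The paper avoids this detour entirely. Its proof of \cref{lemma:modfkv} works directly with the FKV functional $\Delta(M;x_i)=\pnorm{F}{M}^2-\pnorm{F}{M-M\sum_i x_ix_i^{\dag}}^2$ and chains
\[
\Delta(A;x_t)\to\Delta(S;x_t)\to\Delta(S^{\dag};y_t)\to\Delta(W^{\dag};u_t)\to\Delta(S^{\dag};u_t)\to\Delta(S;v_t)\to\Delta(A;v_t),
\]
each step losing a controlled additive error via \cref{lem:atos}, \cref{lem:a-ayy}, \cref{cor:s-syy}, and two non-obvious ``Facts'' handling the lifting $u_t\mapsto v_t$ when the $v_t$ are only approximately orthonormal. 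This gives $\pnorm{F}{A-AVV^{\dag}}^2$ directly, with no square-root conversion and no eigengap assumption. Your sketch correctly identifies the two error sources (sampling error from $S^{\dag}S-A^{\dag}A$ and truncation error $\sim\hat r\gamma\sum_\ell\pnorm{F}{W_\ell}^2$), but the mechanism for assembling them into a bound on $\pnorm{F}{A-AVV^{\dag}}$ should be this $\Delta$-chain rather than the Frobenius-norm-then-square-root route you describe.
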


We leave the proofs of \cref{thm:vvavv_bound} and the weighted sampling algorithm to \cref{sec:matrix_samp_proof}.


\subsection{Symmetric approximation of low-rank Hermitian matrices}\label{sec:approx-eigen}
Now we show that the spectral decomposition of the sum of low-rank Hermitian matrices can be approximated in time logarithmic in the dimension with the given data structure. We call this technique \emph{symmetric approximation}.

Briefly speaking, suppose we are given the approximated left singular vectors $V$ of $A$ outputted by \cref{alg:approx_v} such that $\norm{VV^\dag AVV^\dag-A}_{F}$ is bounded as in \cref{thm:vvavv_bound}, then we can approximately do spectral decomposition of $A$ as follows. First, we approximate the matrix $V^\dag A V$ by sampling. Then, since $V^\dag AV$ is a matrix with low dimension, we can do spectral decomposition of the matrix efficiently as $UDU^\dag$ (see \cref{lem:appx_vav}). Finally, we show that $VU$ is close to an isometry. Therefore, $(VU)D(VU)^\dag$ is an approximation to the spectral decomposition of $A$.

\begin{restatable}{lemma}{vavb}
    \label{lem:appx_vav}
Let $V\in \mathbb{C}^{n\times \tilde{r}}$ and $A = \sum_{\ell}^{\tau} A_{\ell} \in \mathbb{C}^{n\times n}$ be a Hermitian matrix. Given query access and sampling access to $A_{\ell}$ for $\ell\in[\tau]$, where each query and sample take $O(\log n)$ time , and query access to $V$ , where each query takes $O(p)$ time. Let $r$ be some integer such that $r\geq \tilde{r}$, and let $\epsilon_s = \frac{\epsilon}{400r^2}$.
Then, one can output a Hermitian matrix $\tilde{B}\in \mathbb{C}^{\tilde{r}\times \tilde{r}}$ such that $\|V^{\dag}AV-\tilde{B}\|_F \leq \epsilon_s$ with probability $1-\delta$ by using $O\L((p+\log n) \frac{r^{5}\tau^3}{\epsilon^2}\log\frac{1}{\delta}\R)$ time.
\end{restatable}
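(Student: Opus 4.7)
The plan is to estimate each entry of the $\tilde r \times \tilde r$ matrix $V^{\dag} A V$ separately using the trace inner product estimator of \cref{claim:tr_prod}, then symmetrize. Writing
\[
(V^{\dag} A V)_{jk} \;=\; \sum_{\ell=1}^{\tau} \Tr\bigl[A_{\ell}\, V(\cdot,k)\,V(\cdot,j)^{\dag}\bigr],
\]
the task reduces to estimating $\tilde r^{2}\tau$ scalars of the form $\Tr[A_{\ell} B_{jk}]$ with $B_{jk} := V(\cdot,k)\,V(\cdot,j)^{\dag}$. For each such scalar I would invoke \cref{claim:tr_prod}: sampling and query access to $A_{\ell}$ are provided by hypothesis at $O(\log n)$ per call, and any single entry of $B_{jk}$ is obtained from two queries to $V$ at total cost $O(p)$. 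The relevant norms are $\|A_{\ell}\|_F^{2}\leq r$ (each $A_{\ell}$ has rank at most $r$ with spectral norm at most $1$ in the SDP context) and $\|B_{jk}\|_F = \|V(\cdot,j)\|\,\|V(\cdot,k)\|$.

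To control the global error I would set the per-scalar target accuracy to $\epsilon' = \epsilon_s/(r\tau)$ and the per-scalar failure probability to $\delta' = \delta/(r^{2}\tau)$. By the triangle inequality, summing $\tau$ scalar estimates to form entry $(j,k)$ introduces error at most $\epsilon_s/r$, and since there are at most $\tilde r^{2}\leq r^{2}$ entries, the Frobenius error of the raw matrix estimate is bounded by $\sqrt{r^{2}(\epsilon_s/r)^{2}} = \epsilon_s$. A union bound over the $r^{2}\tau$ individual estimations keeps the total failure probability at most $\delta$. Hermiticity is then enforced by returning $\tilde B := \tfrac12(\tilde B_{\text{raw}} + \tilde B_{\text{raw}}^{\dag})$; since the true $V^{\dag} A V$ is Hermitian, this projection can only decrease the Frobenius error.

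The time accounting follows by plugging the stated norm and access costs into \cref{claim:tr_prod}, multiplying the per-scalar runtime by the $\tilde r^{2}\tau$ scalars, and substituting $\epsilon_s = \epsilon/(400 r^{2})$ to re-express in terms of $\epsilon$; the $\log(1/\delta')$ factor from \cref{claim:tr_prod} absorbs the union-bound overhead of $\log(r^{2}\tau)$ into a single $\log(1/\delta)$ term.

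The main obstacle I anticipate is controlling $\|B_{jk}\|_F = \|V(\cdot,j)\|\,\|V(\cdot,k)\|$, which is not immediately bounded by the lemma's local hypotheses. The relevant $V$ in the SDP solver is however constructed in \cref{alg:approx_v} as $V(\cdot,j) = S^{\dag} u_{j}/\sigma_{j}$ for a unit left singular vector $u_{j}$ of $W$ with singular value $\sigma_{j}$; using the same concentration estimates that underlie \cref{thm:vvavv_bound} (namely $SS^{\dag}\approx WW^{\dag}$), one shows that each $\|V(\cdot,j)\|$ is close to $1$ and hence $\|B_{jk}\|_F = O(1)$. A secondary, purely bookkeeping obstacle is tracking the precise polynomial dependence on $r$ and $\tau$ after substituting $\epsilon_s = \epsilon/(400 r^{2})$ so as to match the claimed $O\bigl((p+\log n)\, r^{5}\tau^{3}/\epsilon^{2}\,\log(1/\delta)\bigr)$ runtime.
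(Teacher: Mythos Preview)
Your proposal is correct and essentially identical to the paper's proof: both write $(V^{\dag}AV)_{jk}=\sum_{\ell}\Tr[A_{\ell}\,V(\cdot,k)V(\cdot,j)^{\dag}]$, invoke \cref{claim:tr_prod} for each of the $\tilde r^{2}\tau$ scalars with per-scalar accuracy $\epsilon_s/(\tilde r\tau)$ and a union-bounded failure probability, and then assemble the Frobenius bound. The paper handles Hermiticity by computing only the $(\tilde r^{2}+\tilde r)/2$ upper-triangular entries (since each $A_{\ell}$ is Hermitian) rather than symmetrizing afterward, and it resolves exactly the obstacle you flagged by citing the near-orthonormality of the columns of $V$ (your $\|V(\cdot,j)\|\approx 1$) together with $\|A_{\ell}\|_F\leq\sqrt{r}$ to close the runtime bookkeeping.
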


The algorithm for approximating the spectral decomposition of $A$ is described in \cref{alg:appx_VVAVV}, and the effectiveness of \cref{alg:appx_VVAVV} is summarized in \cref{thm:our_appx}:
\begin{algorithm}
  \SetKwInput{Input}{Input}\SetKwInOut{Output}{output}
  \Input{The sampling and query access to each $A_{\ell}$ as in \cref{defn:sampling-informal} for $A = \sum_{l=1}^{\tau}A_{\ell}$ and query access to $V\in \bbc^{n \times \tilde{r}}$ (obtained from \cref{alg:approx_v}, also see \cref{thm:vvavv_bound}); error parameter $\epsilon$.}
  Compute a matrix $\tilde{B}\in \bbc^{\tilde{r} \times \tilde{r}}$ whose spectrum approximates that of $A$ (this is achieved by \cref{lem:appx_vav})\;
  Compute the spectral decomposition $UDU^\dag$ of matrix $\tilde{B}$\;
  Output an isometry $U\in \bbc^{\tilde{r} \times \tilde{r}}$ and a diagonal matrix $D\in \bbc^{\tilde{r} \times \tilde{r}}$ such that $UDU^\dag$ is the spectral decomposition of $\tilde{B}$.
  \caption{Approximation of the spectral decomposition of $A$.}
  \label{alg:appx_VVAVV}
\end{algorithm}

\begin{restatable}{lemma}{thmsvd}
\label{thm:our_appx}
\cref{alg:appx_VVAVV} outputs a Hermitian matrix $\tilde{B}\in \mathbb{C}^{\tilde{r}\times \tilde{r}}$ with probability at least $1-\delta$ with time and query complexity $O((p+\log n) \frac{r^{5}\tau^3}{\epsilon^2}\log\frac{1}{\delta})$ such that
  \begin{align}
    \|V\tilde{B}V^\dag - A\|\leq \Bigl(1+\ev\Bigr)^2\es + \Bigl(2+\ev\Bigr)\ef.
  \end{align}
\end{restatable}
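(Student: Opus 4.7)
The plan is to bound $\|V\tilde{B}V^\dag - A\|$ via the triangle inequality, writing
$$\|V\tilde{B}V^\dag - A\| \leq \|V\tilde{B}V^\dag - VV^\dag AVV^\dag\| + \|VV^\dag AVV^\dag - A\|.$$
The second term has already been controlled by the previous section: since $\|\cdot\|\le\|\cdot\|_F$, \cref{thm:vvavv_bound} applied to the same $V$ produced by \cref{alg:approx_v} yields $\|VV^\dag AVV^\dag - A\| \leq (2+\ev)\ef$ with probability at least $9/10$, which is precisely the second summand in the stated bound.

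For the first term, I would factor $V\tilde{B}V^\dag - VV^\dag AVV^\dag = V(\tilde{B} - V^\dag AV)V^\dag$ and invoke submultiplicativity of the operator norm,
$$\|V(\tilde{B} - V^\dag AV)V^\dag\| \leq \|V\|^2\cdot\|\tilde{B} - V^\dag AV\|.$$
By \cref{lem:appx_vav}, the second factor is bounded by $\|\tilde{B} - V^\dag AV\|_F \leq \es$ with probability at least $1-\delta$, and that lemma is also what dominates the runtime, giving the claimed $O((p+\log n)r^5\tau^3\epsilon^{-2}\log(1/\delta))$ bound (the subsequent dense spectral decomposition of the $\tilde{r}\times\tilde{r}$ matrix $\tilde{B}$ is an $O(\tilde{r}^3)$ lower-order term).

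The main obstacle, and the quantitative source of the $(1+\ev)^2$ prefactor, is controlling $\|V\|$. Because each column $V(\cdot,j) = S^\dag u_j/\sigma_j$ is built from left singular data of the smaller sketch $W$ rather than of $S$ itself, $V$ is only approximately an isometry. The plan is to analyze $V^\dag V$ entrywise: its $(i,j)$ entry equals $u_i^\dag (SS^\dag) u_j/(\sigma_i\sigma_j)$, so replacing $SS^\dag$ by $WW^\dag$ gives $\delta_{ij}$ using the orthonormality of the $u_j$ and $WW^\dag u_j = \sigma_j^2 u_j$, while the deviation is governed by $\|SS^\dag - WW^\dag\|_F$. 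Exactly this Frobenius-norm concentration, together with the singular-value cutoff $\sigma_j^2 \geq \gamma\sum_\ell \|W_\ell\|_F^2$ enforced in \cref{alg:approx_v}, is furnished by the weighted-sampling analysis of \cref{sec:matrix_samp_proof}; the parameters $p$ and $\gamma$ of \cref{step:parameter} were chosen precisely so that the resulting estimate gives $\|V^\dag V - I\|\leq \ev$, and hence $\|V\|\leq 1+\ev$.

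Combining these three ingredients produces
$$\|V\tilde{B}V^\dag - A\| \leq (1+\ev)^2\es + (2+\ev)\ef,$$
as required. The success probability is taken to be at least $1-\delta$ by folding the constant-probability event of \cref{thm:vvavv_bound} and the probability $\geq 1-\delta$ event of \cref{lem:appx_vav} into a single union bound (adjusting the internal confidence parameters by constants, which does not affect the stated complexity).
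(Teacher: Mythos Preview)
Your proposal is correct and follows essentially the same route as the paper: the same triangle-inequality split, the same use of \cref{lem:appx_vav} for $\|\tilde{B}-V^\dag AV\|_F\le\es$, of \cref{thm:vvavv_bound} for $\|VV^\dag AVV^\dag-A\|$, and of the near-isometry bound $\|V\|\le 1+\ev$ (which the paper packages as statement~2 of \cref{lemma:almost-orthonormal}, proved exactly by the entrywise $V^\dag V$ analysis you outline). Your handling of the success probability via a union bound is actually more careful than the paper's proof, which invokes the $9/10$ event and the $1-\delta$ event without explicitly combining them.
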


We leave the proof of \cref{thm:our_appx} to the end of  \cref{sec:matrix_samp_proof}.


\subsection{Approximating Gibbs states}
\label{sec:gibbs}
In this subsection, we combine our techniques from \cref{sec:matrix_samp} and \cref{sec:approx-eigen} to give a sampling-based estimator of the traces of a Gibbs state times a constraint $A_\ell$. This is formulated as \cref{alg:appx_tr}.

\begin{algorithm}
  \SetKwInput{Input}{Input}\SetKwInOut{Output}{output}
  \Input{The sampling and query access to each $A_{\ell}$ as in \cref{defn:sampling-informal} for $A = \sum_{l=1}^{\tau}A_{\ell}$; query access to $V\in \bbc^{n \times \tilde{r}}$ (obtained from \cref{alg:approx_v}, also see \cref{thm:vvavv_bound}); matrices $U$ and $D$ (obtained from \cref{alg:appx_VVAVV}) such that $(VU)D(VU)^\dag$ is an approximated spectral decomposition of $A$ as in \cref{thm:our_appx}.}
  Compute $\eta=\Tr[e^{-\frac{\epsilon}{2} D}]$\;
  Approximate $\Tr[A_\ell (VU)(e^{-\frac{\epsilon}{2} D}/\Tr[e^{-\frac{\epsilon}{2} D}])(VU)^\dag]$ by $\zeta$ according to \cref{claim:tr_prod}\;
  Output $\zeta$, $\eta$.
  \caption{Approximation of the trace.}
  \label{alg:appx_tr}
\end{algorithm}

We show that the output of \cref{alg:appx_tr} $\epsilon$-approximates $\Tr[A_\ell\rho]$ for $\rho=e^{-\frac{\epsilon}{2}A}/\Tr[e^{-\frac{\epsilon}{2}A}]$. Let $\tilde{A} = VV^\dag AVV^\dag$. Let $U$ and $D$ be outputs of \cref{alg:appx_VVAVV}, which will be used in \cref{alg:appx_tr}. We suppose $\norm{\tilde{A}-A}_{F}\leq (2+\ev)\ef$ as in \cref{thm:vvavv_bound}.

\begin{restatable}{lemma}{estgibbs}
\label{thm:est_gibbs}
Let $\rho = \frac{e^{-\frac{\epsilon}{2} A}}{\Tr{e^{-\frac{\epsilon}{2} A}}}$ and $\hat{\rho} = \frac{(VU)e^{-\frac{\epsilon}{2} D}(VU)^\dag}{\Tr{e^{-\frac{\epsilon}{2} D}}}$. Suppose $\|A - \tilde{A}\|_F\leq (2+\ev)\ef$. Let $A_\ell$ be a Hermitian matrix with the promise that $\|A_\ell\| \leq 1$ and $\rank(A_\ell)\leq r$. Then \cref{alg:appx_tr} outputs $\zeta$ such that
\begin{align}
    |\Tr{[A_\ell\rho]}-\zeta| \leq \epsilon
\end{align}
with probability $1-\delta$ in time $O(\frac{4}{\epsilon^2}(\log n+\tau pr)\log\frac{1}{\delta})$.
\end{restatable}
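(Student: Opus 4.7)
The plan is to bound the target via the triangle inequality
\[
|\Tr[A_\ell\rho] - \zeta| \;\leq\; |\Tr[A_\ell\rho] - \Tr[A_\ell\hat\rho]| \;+\; |\Tr[A_\ell\hat\rho] - \zeta|,
\]
and drive each term below $\epsilon/2$: the first via the spectral-approximation promise from \cref{thm:our_appx}, the second via the sampling trace inner-product routine of \cref{claim:tr_prod}.

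For the approximation piece, the key structural observation is that $(VU)$ is an isometry, so the approximant $A' := (VU)D(VU)^\dag$ (the output of \cref{alg:appx_VVAVV} re-expanded through $V$) vanishes on $\mathrm{range}(VU)^\perp$; consequently
\[
e^{-\frac{\epsilon}{2}A'} \;=\; (VU)e^{-\frac{\epsilon}{2}D}(VU)^\dag + \bigl(I - (VU)(VU)^\dag\bigr),
\]
so $\hat\rho$ is precisely the non-trivial block of $e^{-\frac{\epsilon}{2}A'}$ renormalized by $\eta$. I would then chain the spectral bound $\|A - A'\| \leq (1+\ev)^2\es + (2+\ev)\ef$ from \cref{thm:our_appx} with the standard matrix-exponential Lipschitz inequality $\|e^{-\alpha X} - e^{-\alpha Y}\|_F \leq \alpha\,e^{\alpha\max(\|X\|,\|Y\|)}\|X-Y\|_F$ (with $\alpha=\epsilon/2$ and $\|A\|,\|A'\|\leq\tau$, using $\|A_\ell\|\leq 1$) to bound $\|e^{-\frac{\epsilon}{2}A} - e^{-\frac{\epsilon}{2}A'}\|_F$, and apply $|\Tr[A_\ell M]|\leq\|A_\ell\|_F\|M\|_F \leq \sqrt r\,\|M\|_F$ to the numerator. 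The identity-on-complement contribution is shared by $\Tr[e^{-\frac{\epsilon}{2}A}]$ and $\Tr[e^{-\frac{\epsilon}{2}A'}]$, so it cancels in the ratio-difference of the form $\tfrac{a}{c}-\tfrac{b}{d}$, leaving an error driven only by the Frobenius perturbation times $\sqrt r$ and $e^{\epsilon\tau/2}$; the polynomially small $(2+\ev)\ef$ promise is tailored to absorb these factors below $\epsilon/2$.

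For the sampling piece, I would apply \cref{claim:tr_prod} to $\Tr[A_\ell\hat\rho]$ with target additive error $\epsilon/2$. The relevant norms are $\|A_\ell\|_F^2 \leq r$ (since $\rank(A_\ell)\leq r$ and $\|A_\ell\|\leq 1$) and $\|\hat\rho\|_F^2 = \Tr[\hat\rho^2]\leq\Tr[\hat\rho]=1$ (because $(VU)$ is an isometry). Sampling, query, and row-norm access to $A_\ell$ each cost $O(\log n)$ by assumption, while an entry $\hat\rho(i,j) = \sum_k (VU)(i,k)\,e^{-\frac{\epsilon}{2}D_{kk}}\,(VU)^*(j,k)/\eta$ is a sum of $\tilde r = O(\tau r)$ terms, each requiring $O(p)$ work to read an entry of $V$ and $O(1)$ to combine through $U$ and $D$; so $Q(\hat\rho) = O(\tau p r)$. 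Plugging these into the complexity formula of \cref{claim:tr_prod} yields the stated runtime $O(\frac{1}{\epsilon^2}(\log n + \tau p r)\log(1/\delta))$.

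The main obstacle is the approximation step: a naive first look at the two ratios suggests their normalizations differ wildly ($\Tr[e^{-\frac{\epsilon}{2}A}]$ is of order $n$, while $\eta$ is of order $\tilde r$). The resolution is exactly the structural identity displayed above, which shows that the $n-\tilde r$ identity block of $e^{-\frac{\epsilon}{2}A'}$ mirrors the corresponding block of $e^{-\frac{\epsilon}{2}A}$ up to the tiny Frobenius perturbation, so the comparison is effectively between the restrictions of both to $\mathrm{range}(VU)$, where the matrix-exponential Lipschitz bound applies cleanly. The delicate bookkeeping that remains is to verify that the product of $\sqrt r$, $e^{\epsilon\tau/2}$, and the reciprocal normalization still fits within the $(2+\ev)\ef$ Frobenius budget inherited from \cref{thm:vvavv_bound}.
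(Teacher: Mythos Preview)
There are two genuine gaps.

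First, your matrix-exponential Lipschitz step carries a factor $e^{\epsilon\tau/2}$, and in this application $\tau$ ranges up to $T=16\ln n/\epsilon^{2}$, so that factor is $n^{8/\epsilon}$ --- far too large for any budget that is polynomial in $r,\tau,1/\epsilon$ to absorb. The paper avoids this entirely by never comparing the unnormalized exponentials; it passes through a chain of \emph{normalized} Gibbs states $\rho\to\tilde\rho\to\rho'\to\hat\rho$ and bounds each hop with the fidelity inequality of \cref{lem:fidelity}, which gives $\|\rho_A-\rho_B\|\lesssim\sqrt{\epsilon\,\|A-B\|}$ with no dependence on $\|A\|$ itself. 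That is the key lemma you are missing.

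Second, and more seriously, your ``cancellation of the identity-on-complement block'' does not occur. Even granting your structural identity (which already needs $VU$ to be an exact isometry; it is only approximately one by \cref{lem:almost_orth_2}), the state $\hat\rho$ is \emph{not} $e^{-\frac{\epsilon}{2}A'}/\Tr[e^{-\frac{\epsilon}{2}A'}]$: its denominator is $\eta=\Tr[e^{-\frac{\epsilon}{2}D}]=O(\tilde r)$, whereas $\Tr[e^{-\frac{\epsilon}{2}A'}]=(n-\tilde r)+\eta$. So after your identity you are comparing a ratio with denominator $\Theta(n)$ to one with denominator $\Theta(\tilde r)$, and nothing cancels. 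Concretely, take $n$ large, $\tau=r=1$, and $A=A_\ell=|v\rangle\langle v|$: then $\Tr[A_\ell\rho]=e^{-\epsilon/2}/(n-1+e^{-\epsilon/2})\approx 1/n$, while $\hat\rho=|v\rangle\langle v|$ gives $\Tr[A_\ell\hat\rho]=1$, so the first term of your two-term split is already $\Theta(1)$, not $O(\epsilon)$. The paper's four-term chain isolates this full-rank-to-low-rank passage as the single hop $\tilde\rho\to\rho'$ and applies \cref{lem:fidelity} there as well; that invocation is itself delicate for exactly this normalization reason, but the other three hops (including the one your Lipschitz bound was meant to cover) are handled cleanly by the fidelity route.
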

We leave the proof of \cref{thm:est_gibbs} to \cref{sec:gibbs_proof}.

\subsection{Proof of the main algorithm}\label{sec:proof-main}
We finally state our main result on solving SDPs via sampling.
\begin{theorem}~\label{thm:main}
Given Hermitian matrices $\{A_1,\dots,A_m\}$ with the promise that each of them has rank at most $r$, spectral norm at most $1$, and the sampling access of each $A_i$ is given by \cref{defn:sampling-informal}. Also given $a_{1},\ldots,a_{m}\in\bbr$ and $\epsilon>0$. Then \cref{alg:fea_testing} gives a succinct description and any entry (see \cref{remark:SDP-solution}) of the solution of the SDP feasibility problem
\begin{align}
  \label{eqn:SDP-formal}
\Tr[A_{i} X]\leq a_{i}+\epsilon\ \  \forall\,i\in\range{m};\quad X\succeq 0;\quad
\Tr[X]=1
\end{align}
with probability at least $2/3$ in $O(\frac{mr^{57}\ln^{37}n}{\epsilon^{92}})$ time.
\end{theorem}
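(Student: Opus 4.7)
The plan is to prove the theorem by instantiating the master algorithm of \cref{thm:SDP-master} with the sampling primitives developed in \cref{sec:matrix_samp,sec:approx-eigen,sec:gibbs}. Specifically, \cref{algo:matrixMW-gain} runs for $T=16\ln n/\epsilon^2$ iterations, and in each iteration only two subroutines interact with the (possibly huge) input matrices: (i) for every $\ell\in[m]$, test whether $\Tr[A_\ell\rho_t]>a_\ell+\epsilon$; and (ii) form the next Gibbs state $\rho_{t+1}=\exp[-\tfrac{\epsilon}{4}\sum_{i\le t}A_{j_i}]/\Tr[\cdot]$. Because both (i) and (ii) are implemented through trace inner products with the Gibbs state, it suffices to implement (i) using sampling-based estimators that take a succinct representation of $\rho_t$ as input, and to let (ii) consist of merely updating that succinct representation.

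Concretely, \cref{alg:fea_testing} will perform the following at iteration $t$, with $\tau=t-1$: first run \cref{alg:approx_v} on the constraints $\{\frac{\epsilon}{4}A_{j_1},\dots,\frac{\epsilon}{4}A_{j_\tau}\}$ (each of spectral norm at most $1$) to obtain a sampleable $V\in\bbc^{n\times\tilde r}$ with $\|VV^\dag AVV^\dag-A\|_F\le\ef(2+\ev)$ by \cref{thm:vvavv_bound}; then run \cref{alg:appx_VVAVV} to extract $(U,D)$ approximating the spectral decomposition of $A$ as in \cref{thm:our_appx}; finally, for each $\ell\in[m]$, call \cref{alg:appx_tr} to estimate $\Tr[A_\ell\rho_t]$ to additive error $\epsilon/2$ with failure probability $\delta$ via \cref{thm:est_gibbs}. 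If every estimate lies below $a_\ell+\epsilon/2$, declare feasibility and output $\rho_t$ (the tuple $(V,U,D)$ together with $\eta=\Tr[e^{-\epsilon D/2}]$ is a succinct description from which any entry of $\rho_t$ can be extracted by multiplying a row of $VU$ by $e^{-\epsilon D/2}/\eta$ and a column of $(VU)^\dag$); otherwise pick any violator as $j_t$ and continue.

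The proof then consists of three bookkeeping steps. For \textbf{correctness}, one checks that the combined error introduced by the spectral approximation in \cref{thm:our_appx} and the trace estimation in \cref{thm:est_gibbs} is at most $\epsilon/2$, so that the sampling-based tests decide correctly the condition $\Tr[A_\ell\rho_t]>a_\ell+\epsilon$ up to a $\epsilon/2$ gap; the standard MMW analysis of \cite{brandao2018SDP} then guarantees either a feasible output or a correct ``infeasible'' answer. For the \textbf{failure probability}, each iteration makes $1$ call each to \cref{alg:approx_v,alg:appx_VVAVV} and $m$ calls to \cref{alg:appx_tr}, for a total of $O(mT)$ randomized calls; setting each $\delta=\Theta(1/(mT))$ and taking a union bound keeps the overall error below $1/3$. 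For the \textbf{runtime}, substitute $\tau\le T=O(\log n/\epsilon^2)$ into the parameter $p=2\cdot10^{20}\tau^{12}r^{19}/\epsilon^6$ of \cref{alg:approx_v}, then multiply the per-trace cost $O((\log n+\tau p r)/\epsilon^2\cdot\log(1/\delta))$ of \cref{thm:est_gibbs} by $m$ traces per iteration and $T$ iterations; collecting all polynomial factors yields the claimed $O(mr^{57}\ln^{37}n/\epsilon^{92})$ bound.

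The main obstacle is the correctness step: the spectral-decomposition error $\ef(2+\ev)$ guaranteed by \cref{thm:vvavv_bound} and \cref{thm:our_appx} has to be propagated through the matrix exponential $e^{-\epsilon A/4}$ and the trace normalization, both of which can amplify small Frobenius-norm perturbations. This is precisely why the three error parameters $\ev$, $\ef$, and $\es$ are chosen with the $r^2$ and $(\tau+1)$ factors in their denominators: so that, when combined with the trace-inner-product estimator of \cref{claim:tr_prod} inside \cref{alg:appx_tr}, the final per-constraint error in $\Tr[A_\ell\rho_t]$ is dominated by $\epsilon/2$. Once this calibration is verified, the remaining arguments are routine union bounds and arithmetic on the parameter values.
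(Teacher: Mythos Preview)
Your overall plan is the same as the paper's: instantiate the MMW master algorithm with \cref{alg:approx_v,alg:appx_VVAVV,alg:appx_tr}, then verify correctness, failure probability, and runtime. The correctness sketch and the union-bound probability argument are fine (your choice $\delta=\Theta(1/(mT))$ is in fact more careful than the paper, which simply takes $\delta$ constant).

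The gap is in the runtime bookkeeping. You only tally the per-trace cost $O((\log n+\tau pr)/\epsilon^2\cdot\log(1/\delta))$ from \cref{thm:est_gibbs}, multiply by $mT$, and assert this yields $O(mr^{57}\ln^{37}n/\epsilon^{92})$. It does not: substituting $\tau\le T=O(\ln n/\epsilon^2)$ and $p=O(\tau^{12}r^{19}/\epsilon^6)$ gives roughly $m\,r^{20}(\ln n)^{14}/\epsilon^{36}$ for that term, which is far smaller than the stated bound. The exponents $57$, $37$, $92$ in the theorem come from a cost you omitted entirely: computing the SVD of the $p\times p$ matrix $W$ inside \cref{alg:approx_v}, which takes $O(p^3)$ time. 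With $p=\Theta(\tau^{12}r^{19}/\epsilon^6)$ and $\tau=\Theta(\ln n/\epsilon^2)$ one gets $p^3=\Theta(r^{57}(\ln n)^{36}/\epsilon^{90})$, and multiplying by the number of calls to \cref{alg:approx_v} (the paper counts $Tm$ of them) gives exactly $O(mr^{57}\ln^{37}n/\epsilon^{92})$. So your runtime derivation, as written, cannot produce the claimed exponents; you must include the $O(p^3)$ SVD step, which is the dominant cost.
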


\begin{algorithm}
Set the initial Gibbs state $\rho_{1}=\frac{I_{n}}{n}$, and number of iterations $T=\frac{16\ln n}{\epsilon^{2}}$\;
\For{$t=1,\ldots,T$}{
Find a $j_{t}\in\range{m}$ such that $\Tr[A_{j_{t}}\rho_{t}]>a_{j_{t}}+\epsilon$ using \cref{alg:appx_tr}. If we cannot find such $j_{t}$, claim that $\rho_{t}\in\mathcal{S}_{\epsilon}$ and terminate the algorithm. Output $\rho_{t}(\ell, j) = \sum_{k=1}^{\tilde{r}}V(\ell, k)e^{\sigma_k\epsilon/2}V(j, k)^*/\eta$, where $V(\ell, j) = \sum_{s=1}^p \frac{A^*(i_s, \ell)u_j(t)}{\sqrt{P_{i_s}}\sigma_j}$, $i_1, \ldots, i_p$, $P_{1_1}, \ldots, P_{i_p}$, $\sigma_1, \ldots, \sigma_{\tilde{r}}$ and $u_1, \ldots, u_{\tilde{r}}$ are obtained from \cref{alg:approx_v} and $\eta$ is obtained from \cref{alg:appx_tr}\;
    Define the new weight matrix $W_{t+1}:=\exp[-\frac{\epsilon}{4}\sum_{i=1}^{t}A_{j_{i}}]$ and Gibbs state $\rho_{t+1}:=\frac{W_{t+1}}{\Tr[W_{t+1}]}$\;
    }
Claim that the SDP is infeasible and terminate the algorithm\;
\caption{Feasibility testing of SDPs by our sampling-based approach.}
\label{alg:fea_testing}
\end{algorithm}
The algorithm follows the master algorithm in \cref{thm:SDP-master}. The main challenge is to estimate $\Tr[A_{j_t}\rho_t]$ where $\rho_t$ is the Gibbs state at iteration $t$; this is achieved by \cref{thm:est_gibbs} in \cref{sec:gibbs}.

\begin{proof}
\textbf{Correctness:} The correctness of \cref{alg:fea_testing} directly follows from \cref{thm:est_gibbs}. Specifically, we have shown that one can estimate the quantity $\Tr[A_{j_t}\rho_t]$ with precision $\epsilon$ with high probability by applying \cref{alg:approx_v,alg:appx_VVAVV,alg:appx_tr}.

\hd{Time complexity:} First, we show that given the data structure in \cref{thm:ds}, \cref{alg:approx_v} can be computed in time $O(p^3+p\tau\log \tau \log n)$. \cref{proc:samp_row} and \cref{proc:samp_col} both can be done in time $O(p\tau\log \tau \log n)$. There are many ways to implement \cref{proc:samp_row}. For example, build a binary tree as in \cref{thm:ds} for $\norm{A_1}_F, \ldots, \norm{A_\tau}_F$ to sample $j\in[\tau]$ according to $\norm{A_j}_F^2/(\sum_{\ell=1}^\tau\norm{A_\ell}_F^2)$, and then use the data structures in \cref{thm:ds} to sample from $\mathcal{D}_{\rows(A_j)}$. The time complexity is $O(\tau\log \tau \log n)$. Similarly, we can implement \cref{proc:samp_col} in time $O(\tau\log \tau \log n)$. Hence, the time complexity to construct the matrix $W$ and compute its SVD is $O(p\tau\log \tau \log n+p^3)$. \cref{alg:approx_v} succeeds with probability $9/10$.

Then, by \cref{thm:our_appx} and \cref{thm:est_gibbs},   \cref{alg:appx_VVAVV,alg:appx_tr} takes time $O((p+\log n) \frac{r^{5}\tau^3}{\epsilon^2}\log\frac{1}{\delta})$ and $O(\frac{4}{\epsilon^2}(\log n+\tau pr)\log\frac{1}{\delta})$ respectively to succeed with probability at least $1-\delta$.  Recall from the previous paragraph,  \cref{alg:approx_v} takes time $O(p\tau\log \tau \log n+p^3)$. In total, \cref{alg:appx_VVAVV,alg:appx_tr,alg:approx_v}  are each called $Tm=\frac{16 m \ln n}{\eps^2}$ times. We specify that
\begin{itemize}
    \item $\tau=T=\frac{16\ln n}{\epsilon^{2}}$, and
    \item $p = 2\cdot10^{20}\frac{\tau^{12}r^{19}}{\epsilon^6}$
\end{itemize}
(see also \cref{alg:approx_v}). By setting $\delta$ as a small enough constant (say $\delta=1/6$) and noticing the $p$ dominates other terms,  \cref{alg:fea_testing} succeeds with probability at least $2/3$ in time $O(T m p^3) = O\L(\frac{mr^{57}\ln^{37}n}{\epsilon^{92}}\R)$.
\end{proof}

\begin{remark}\label{remark:SDP-solution}
\cref{thm:main} solves the SDP feasibility problem, i.e., to decide $\mathcal{S}_{0}=\emptyset$ or $\mathcal{S}_{\epsilon}\neq\emptyset$. For the SDP optimization problem in \crefrange{eqn:SDP-OPT}{eqn:SDP-X}, an approximation to the optimal value can be found by a binary search with feasibility subroutines. (see \cref{footnote:binary-search}); however, writing down the approximate solution would take $n^{2}$ space, ruining the poly-logarithmic complexity in $n$. Nevertheless,
\begin{itemize}
    \item we have its \emph{succinct representation} $i_1, \ldots, i_p$, $P_{i_1}, \ldots, P_{i_p}$, $\sigma_1, \ldots, \sigma_{\tilde{r}}$, $u_1, \ldots, u_{\tilde{r}}$, and $\eta$;
    \item we can compute any entry of the solution matrix according to this succinct description as in Step 3 of \cref{alg:fea_testing}.
\end{itemize}
With the succinct description given in the first statement, one can perform other operations on the solution matrix using the similar sampling-based methods.
\end{remark}

\begin{remark}\label{remark:practicality}
The large polynomial overhead in \cref{thm:main} may not be necessary in practice and can potentially be reduced by more fine-grained analysis. This is also suggested by numerics in practice (see~\cite{arrazola2019quantum}).
\end{remark}

\hd{Application to shadow tomography.}
As a direct corollary of \cref{thm:main}, we have:
\begin{restatable}{corollary}{shadowtomography}
  \label{cor:shadow}
Given Hermitian matrices $\{E_1,\dots,E_m\}$ with the promise that each of $E_1,\dots,E_m$ has rank at most $r$, $0\preceq E_i\preceq I$ and the sampling access to $E_i$ is given as in \cref{defn:sampling-informal} for all $i\in\range{m}$. Also given $p_{1},\ldots,p_{m}\in\bbr$. Then for any $\epsilon>0$, the shadow tomography problem
\begin{align}
  \label{eqn:shadow-formal}
  \text{Find $\sigma$ such that}\quad |\Tr[\sigma E_{i}]-p_{i}|\leq\epsilon\ \ \forall\,i\in\range{m}\ \ \text{subject to } \sigma\succeq 0,\ \ \Tr[\sigma]=1
\end{align}
can be solved with probability $1-\delta$ with cost $O(m\cdot\poly(\log n, 1/\epsilon,\log(1/\delta),r))$.
\end{restatable}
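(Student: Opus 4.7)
The plan is to reduce the shadow tomography problem in \cref{eqn:shadow-formal} to the SDP feasibility problem solved by \cref{thm:main}. Each two-sided constraint $|\Tr[\sigma E_i]-p_i|\leq \epsilon$ is equivalent to the conjunction of the two one-sided constraints $\Tr[\sigma E_i]\leq p_i+\epsilon$ and $\Tr[\sigma(-E_i)]\leq -p_i+\epsilon$. I would therefore feed \cref{alg:fea_testing} the $2m$ constraint matrices $E_1,-E_1,\ldots,E_m,-E_m$ with thresholds $p_1,-p_1,\ldots,p_m,-p_m$ at precision $\epsilon$, together with $X\succeq 0$ and $\Tr[X]=1$; any feasible $\sigma$ for the resulting instance is, by construction, a valid shadow tomography solution (and the true $\rho$ is always feasible, so $\mathcal{S}_0\neq\emptyset$).

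Next I would verify that the hypotheses of \cref{thm:main} are satisfied. The assumption $0\preceq E_i\preceq I$ gives $\|E_i\|\leq 1$ and $\|-E_i\|\leq 1$, and both $E_i$ and $-E_i$ inherit the rank bound $r$. Sampling and query access in the sense of \cref{defn:sampling-informal} to $-E_i$ is immediate from that of $E_i$: the row-norm, row-sampling, and entry-sampling probabilities depend only on $|(-E_i)(j,k)|^2=|E_i(j,k)|^2$, while entry queries simply negate the value returned for $E_i$, so the data structure from \cref{thm:ds} is reused with no overhead. Invoking \cref{thm:main} therefore yields, with probability at least $2/3$, a succinct description and per-entry query access (per \cref{remark:SDP-solution}) to a valid $\sigma$ in time $O(m\cdot\poly(\log n, r, 1/\epsilon))$.

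Finally, to boost the success probability from $2/3$ to $1-\delta$, I would run the feasibility algorithm at precision $\epsilon/3$ a total of $O(\log(1/\delta))$ independent times and verify each candidate witness: using query access to $\sigma$ (guaranteed by \cref{remark:SDP-solution}) and the given sampling access to $E_i$, \cref{claim:tr_prod} estimates every $\Tr[\sigma E_i]$ to additive error $\epsilon/3$; accept the first $\sigma$ whose $m$ estimates all fall within $2\epsilon/3$ of the corresponding $p_i$. Because $\|\sigma\|_F\leq\sqrt{\Tr[\sigma^2]}\leq\sqrt{\Tr[\sigma]}=1$ and $\|E_i\|_F\leq\sqrt{r}$, each estimation costs $\poly(r,1/\epsilon,\log(m/\delta))$, and a standard Chernoff/union bound over the independent trials and the $m$ constraints shows that with probability $\geq 1-\delta$ at least one invocation returns a genuine $\epsilon/3$-approximate witness that passes verification, and that no $\sigma$ violating the desired bound by more than $\epsilon$ is ever accepted.

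There is no substantive technical obstacle beyond this bookkeeping; the corollary is essentially a packaging of \cref{thm:main}. The only mildly delicate point is keeping the verification step within the available access model, and this works because \cref{claim:tr_prod} requires sampling access on only one of its two arguments, which matches the asymmetric access we have (sampling on $E_i$, query-only on $\sigma$).
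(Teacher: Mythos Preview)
Your reduction is exactly the paper's: split each two-sided constraint $|\Tr[\sigma E_i]-p_i|\le\epsilon$ into the two one-sided constraints with matrices $E_i$ and $-E_i$, and invoke \cref{thm:main} on the resulting $2m$-constraint feasibility instance. In fact you are more careful than the paper, which ends its proof with ``the result hence follows from \cref{thm:main}'' and neither verifies that sampling access to $-E_i$ is inherited from that of $E_i$ nor explains how to amplify the success probability from $2/3$ to $1-\delta$; your verify-and-repeat argument using \cref{claim:tr_prod} fills that gap cleanly.
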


The proof of this corollary is deferred to \cref{sec:proof-shadow}. Here, $p_{i}=\Tr[\rho E_{i}]$ in \cref{eqn:SDP-shadow-1} for all $i\in\range{m}$. Notice that the assumption of knowing $p_{1},\ldots,p_{m}$ makes our problem slightly different from the shadow tomography problem in \cite{aaronson2017quantum,brandao2018SDP,vanApeldoorn2018SDP} where we are only given copies of the quantum state $\rho$ without the knowledge of $\Tr[\rho E_{1}],\ldots,\Tr[\rho E_{m}]$. However, quantum state is a concept without a counterpart in classical computing, hence we follow the conventional assumption in SDPs that these real numbers are given.

\begin{remark}\label{remark:shadow}
Similar to \cref{remark:SDP-solution}, $\sigma$ can be stored as a succinct representation because we can  have $\sigma=\frac{\exp[\frac{\epsilon}{2}\sum_{\tau=1}^{t}(-1)^{i_{\tau}}A_{j_{\tau}}]}{\Tr\big[\exp[\frac{\epsilon}{2}\sum_{\tau=1}^{t}(-1)^{i_{\tau}}A_{j_{\tau}}]\big]}$ in \cref{cor:shadow}, where $t\leq T$ and $i_{\tau}\in\{0,1\}$, $j_{\tau}\in\range{m}$ for all $\tau\in\range{t}$. Storing all $i_{\tau},j_{\tau}$ takes $t(\log_{2}m+1)=O(\log m\log n/\epsilon^{2})$ bits.
\end{remark}

\section*{Acknowledgement}
We thank Scott Aaronson, Andr{\'a}s Gily{\'e}n, Ewin Tang, Ronald de Wolf, and anonymous reviewers for their detailed feedback on preliminary versions of this paper. NHC, HHL, and CW were supported by Scott Aaronson's Vannevar Bush Faculty Fellowship. TL was supported by IBM PhD Fellowship, QISE-NET Triplet Award (NSF DMR-1747426), and the U.S. Department of Energy, Office of Science, Office of Advanced Scientific Computing Research, Quantum Algorithms Teams program.



\appendix
\section{Proof of \cref{thm:vvavv_bound}}\label{sec:matrix_samp_proof}
Before showing $S^{\dag}S \approx A^{\dag}A$ and $SS^{\dag} \approx WW^{\dag}$, we first prove the following general result.

\begin{lemma}
  \label{lemma:mmnn}
  Let $M_1,\ldots, M_{\tau} \in \bbc^{n \times n}$ be a matrices. Independently sample $p$ rows indices $i_1, \ldots, i_p$ from $M = M_1 + \cdots + M_{\tau}$ according to the probability distribution $\{P_1, \ldots, P_n\}$ where
  \begin{align}
    \label{eq:prob-cond}
    P_i \geq \frac{\sum_{j=1}^{\tau} \mathcal{D}_{\rows(M_j)}(i)\pnorm{F}{M_j}^2}{(\tau+1)\sum_{\ell=1}^{\tau}\pnorm{F}{M_{\ell}}^2}.
  \end{align}
  Let $N_1, \ldots, N_{\tau} \in \bbc^{p \times n}$ be matrices with
  \begin{align}
    N_{\ell}(t, \cdot) = \frac{M_{\ell}(i_t,\cdot)}{\sqrt{pP_{i_t}}},
  \end{align}
  for $t \in [p]$ and $\ell \in [\tau]$. Define $N = N_1 + \cdots +N_{\tau}$. Then for all $\theta > 0$, it holds that
  \begin{align}
    \Pr\left(\pnorm{F}{M^{\dag}M - N^{\dag}N} \geq \theta\sum_{\ell=1}^{\tau}\pnorm{F}{M_{\ell}}^2\right) \leq \frac{(\tau+1)^2}{\theta^2p}.
  \end{align}
\end{lemma}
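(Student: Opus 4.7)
The plan is to treat this as a standard matrix-multiplication-by-sampling estimate in the spirit of Drineas, Kannan, and Mahoney, and then to track how the weighted lower bound on $P_i$ controls the variance. First, by construction $N(t,\cdot) = M(i_t,\cdot)/\sqrt{p P_{i_t}}$, so each summand of $N^\dag N = \sum_{t=1}^{p} N(t,\cdot)^\dag N(t,\cdot)$ satisfies $\E[N(t,\cdot)^\dag N(t,\cdot)] = \sum_i P_i \cdot M(i,\cdot)^\dag M(i,\cdot)/(p P_i) = M^\dag M/p$. Hence $\E[N^\dag N] = M^\dag M$, and the task reduces to controlling $\E\pnorm{F}{N^\dag N - M^\dag M}^2$ and invoking Markov's inequality on the squared Frobenius norm.

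Next, a standard entry-wise variance calculation gives, for each pair $(s,j)$,
\begin{align*}
\E\bigl|(N^\dag N)_{s,j} - (M^\dag M)_{s,j}\bigr|^2 = p\cdot\Var\!\left(\frac{\overline{M(i,s)}M(i,j)}{p P_i}\right) \leq \frac{1}{p}\sum_i \frac{|M(i,s)|^2|M(i,j)|^2}{P_i},
\end{align*}
and summing over $s$ and $j$ collapses the absolute values into row norms:
\begin{align*}
\E\pnorm{F}{N^\dag N - M^\dag M}^2 \leq \frac{1}{p}\sum_i \frac{\norm{M(i,\cdot)}^4}{P_i}.
\end{align*}

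The crux of the argument is bounding $\sum_i \norm{M(i,\cdot)}^4/P_i$ via the hypothesis on $P_i$. Observing that $\mathcal{D}_{\rows(M_j)}(i)\pnorm{F}{M_j}^2 = \norm{M_j(i,\cdot)}^2$, the lower bound rewrites as $1/P_i \leq (\tau+1)\sum_\ell \pnorm{F}{M_\ell}^2 / \sum_j \norm{M_j(i,\cdot)}^2$. Combined with the Cauchy--Schwarz estimate $\norm{M(i,\cdot)}^2 = \bigl\|\sum_j M_j(i,\cdot)\bigr\|^2 \leq \tau \sum_j \norm{M_j(i,\cdot)}^2$, this cancels the $\sum_j \norm{M_j(i,\cdot)}^2$ denominator and produces $\norm{M(i,\cdot)}^4/P_i \leq \tau(\tau+1)\norm{M(i,\cdot)}^2 \sum_\ell \pnorm{F}{M_\ell}^2$. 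Summing over $i$ and applying the Frobenius-level Cauchy--Schwarz bound $\pnorm{F}{M}^2 \leq \tau \sum_\ell \pnorm{F}{M_\ell}^2$ yields a second-moment estimate of the form $(\tau+1)^2\bigl(\sum_\ell \pnorm{F}{M_\ell}^2\bigr)^2/p$ (after arranging the Cauchy--Schwarz steps so the constant matches the claim), and Markov's inequality applied to $\pnorm{F}{M^\dag M - N^\dag N}^2$ delivers the stated tail bound.

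The only delicate point is the bookkeeping around the weighted distribution: because $P_i$ is a mixture with slack factor $\tau+1$ rather than the canonical row-norm distribution $\norm{M(i,\cdot)}^2/\pnorm{F}{M}^2$, the variance calculation must pay a Cauchy--Schwarz penalty whenever norms of $M$ are replaced by norms of the individual $M_j$'s. Arranging these two penalties --- one at the row level and one at the Frobenius level --- so that the final pre-factor is exactly $(\tau+1)^2$, rather than a stray higher power of $\tau+1$, is the part of the proof that needs the most care.
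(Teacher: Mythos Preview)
Your proposal is correct and follows essentially the same route as the paper's proof: verify $\E[N^\dag N]=M^\dag M$ entrywise, bound each entrywise variance by a second moment, sum over entries to obtain $\frac{1}{p}\sum_i \norm{M(i,\cdot)}^4/P_i$, plug in the lower bound on $P_i$, apply the Cauchy--Schwarz row bound $\norm{M(i,\cdot)}^2\le \tau\sum_j\norm{M_j(i,\cdot)}^2$ to cancel the denominator, and finish with Markov's inequality. Your caution about the final constant is apt; the paper arrives at the prefactor $\tau(\tau+1)$ via the same Cauchy--Schwarz steps and then relaxes it to $(\tau+1)^2$.
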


\begin{proof}
  We first show that the expected value of each entry of $N^{\dag}N$ is the corresponding entry of $M^{\dag}M$ as follows.
  \begin{align}
    \E\left(N^{\dag}(i, \cdot)N(\cdot, j)\right) &= \sum_{t=1}^p\E\left(N^*(t, i)N(t, j)\right) \\
                                                 &= \sum_{t=1}^p\sum_{k=1}^n P_k\frac{M^*(k, i)M(k, j)}{pP_k} \\
                                                 &= M^{\dag}(i, \cdot)M(\cdot, j).
  \end{align}
  \begin{align}
    &\E\left(|N^{\dag}(i, \cdot)N(\cdot, j) - M^{\dag}(i, \cdot)M(\cdot, j)|^2\right) \leq \sum_{t=1}^p \E \left((N^*(t, i) N(t, j))^2\right) \\
                                                                                     &= \sum_{t=1}^p\sum_{k=1}^nP_k\frac{(M^*(k, i))^2(M(k, j))^2}{p^2P_k^2} \\
                                                                                     &\leq \frac{(\tau+1)\sum_{\ell=1}^{\tau}\pnorm{F}{M_{\ell}}^2}{p}\sum_{k=1}^n\frac{(M^*(k, i))^2(M(k, j))^2}{\sum_{\ell'=1}^{\tau}\mathcal{D}_{\rows(M_{\ell'})}(k)\pnorm{F}{M_{\ell'}}^2} \\
                                                                                     &= \frac{(\tau+1)\sum_{\ell=1}^{\tau}\pnorm{F}{M_{\ell}}^2}{p}\sum_{k=1}^n\frac{(M^*(k, i))^2(M(k, j))^2}{\sum_{\ell'=1}^{\tau}\norm{M_{\ell'}(k, \cdot)}^2}
  \end{align}
  Now, we bound the expected distance between $N^{\dag}N$ and $M^{\dag}M$:
  \begin{align}
    \E\left(\pnorm{F}{M^{\dag}M - N^{\dag}N}^2\right) &= \sum_{i,j=1}^n\E\left(|N^{\dag}(i, \cdot)N(\cdot, j) - M^{\dag}(i, \cdot)M(\cdot, j)|^2\right) \\
                                                      &\leq \frac{(\tau+1)\sum_{\ell=1}^{\tau}\pnorm{F}{M_{\ell}}^2}{p}\sum_{k=1}^n\frac{\sum_{i,j=1}^n(M^*(k, i))^2(M(k, j))^2}{\sum_{\ell'=1}^{\tau}\norm{M_{\ell'}(k, \cdot)}^2} \\
                                                      &= \frac{(\tau+1)\sum_{\ell=1}^{\tau}\pnorm{F}{M_{\ell}}^2}{p}\sum_{k=1}^n\frac{\norm{M(k, \cdot)}^4}{\sum_{\ell'=1}^{\tau}\norm{M_{\ell'}(k, \cdot)}^2} \\
                                                      &= \frac{(\tau+1)\sum_{\ell=1}^{\tau}\pnorm{F}{M_{\ell}}^2}{p}\sum_{k=1}^n\frac{\norm{M(k, \cdot)}^2\left(\tau\sum_{\ell'=1}^{\tau}\norm{M_{\ell'}(k, \cdot)}^2\right)}{\sum_{\ell''=1}^{\tau}\norm{M_{\ell''}(k, \cdot)}^2} \\
&= \frac{\tau(\tau+1)\sum_{\ell=1}^{\tau}\pnorm{F}{M_{\ell}}^2}{p}\pnorm{F}{M}^2 \\
&\leq \frac{\tau(\tau+1)\left(\sum_{\ell=1}^{\tau}\pnorm{F}{M_{\ell}}^2\right)^2}{p} \\
&\leq \frac{(\tau+1)^2\left(\sum_{\ell=1}^{\tau}\pnorm{F}{M_{\ell}}^2\right)^2}{p}.
  \end{align}
Consequently, the result of this lemma follows from Markov's inequality.
\end{proof}

The following technical claim will be used multiple times in this paper. It relates the three quantities: $\sum_{\ell=1}^{\tau}\pnorm{F}{A_{\ell}}^2$, $\sum_{\ell=1}^{\tau}\pnorm{F}{S_{\ell}}^2$, and $\sum_{\ell=1}^{\tau}\pnorm{F}{W_{\ell}}^2$:
\begin{claim}
  \label{claim:asa}
  Let $A = A_1 + \cdots + A_m$ be a matrix with the sampling access for each $A_{\ell}$ as in \cref{defn:sampling-informal}. Let $S$ and $W$ be defined by \cref{eq:s,eq:w}. Then, with probability at least $1-2\tau^2/p$ it holds that
  \begin{align}
    \label{eq:asa}
    \frac{1}{\tau+1} \sum_{\ell=1}^{\tau}\pnorm{F}{A_{\ell}}^2 \leq \sum_{\ell=1}^{\tau}\pnorm{F}{S_{\ell}}^2 \leq \frac{2\tau+1}{\tau+1} \sum_{\ell=1}^{\tau}\pnorm{F}{A_{\ell}}^2,
  \end{align}
  and
  \begin{align}
    \label{eq:sws}
    \frac{1}{\tau+1} \sum_{\ell=1}^{\tau}\pnorm{F}{S_{\ell}}^2 \leq \sum_{\ell=1}^{\tau}\pnorm{F}{W_{\ell}}^2 \leq \frac{2\tau+1}{\tau+1} \sum_{\ell=1}^{\tau}\pnorm{F}{S_{\ell}}^2,
  \end{align}
\end{claim}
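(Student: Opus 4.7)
The plan is to show that both inequalities in \eqref{eq:asa} and \eqref{eq:sws} actually hold as exact equalities with probability $1$, which certainly satisfies the $1-2\tau^2/p$ probability bound. The key observation is that the sampling distribution $P_i$ and the induced distribution $P'_j$ are each normalized precisely so that weighting rows/columns by $1/\sqrt{pP}$ causes the summed Frobenius norms to collapse to deterministic constants.

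For the $A\to S$ step, unpacking the formula in \cref{proc:samp_row} yields the identity $P_i\sum_{\ell=1}^{\tau}\|A_\ell\|_F^2 = \sum_{\ell=1}^{\tau}\|A_\ell(i,\cdot)\|^2$ for every $i\in[n]$. Therefore for each sampled row index $i_t$, $\sum_\ell \|S_\ell(t,\cdot)\|^2 = \sum_\ell \|A_\ell(i_t,\cdot)\|^2/(pP_{i_t}) = (\sum_\ell \|A_\ell\|_F^2)/p$, which is independent of $i_t$. Summing over $t\in[p]$ gives $\sum_\ell \|S_\ell\|_F^2 = \sum_\ell \|A_\ell\|_F^2$ deterministically, which in particular lies in $\bigl[\tfrac{1}{\tau+1},\tfrac{2\tau+1}{\tau+1}\bigr]\cdot\sum_\ell \|A_\ell\|_F^2$, establishing \eqref{eq:asa}.

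For the $S\to W$ step, I first unfold the definition in \cref{proc:samp_col} to obtain the closed form $Q_{j|i} = \sum_{\ell}|A_\ell(i,j)|^2 / \sum_\ell \|A_\ell(i,\cdot)\|^2$. Combining this with the previous identity together with $\|S_\ell(\cdot,j)\|^2 = \sum_t |A_\ell(i_t,j)|^2/(pP_{i_t})$, one finds $\sum_\ell \|S_\ell(\cdot,j)\|^2 = P'_j \sum_\ell \|A_\ell\|_F^2$ for every $j\in[n]$. The same cancellation then gives $\sum_\ell \|W_\ell(\cdot,t)\|^2 = (\sum_\ell \|A_\ell\|_F^2)/p$ for each $t$, whence $\sum_\ell \|W_\ell\|_F^2 = \sum_\ell \|A_\ell\|_F^2 = \sum_\ell \|S_\ell\|_F^2$ holds deterministically, establishing \eqref{eq:sws}.

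The only nontrivial step is the bookkeeping in the $S\to W$ case, where one must track nested normalizations to verify the closed form for $Q_{j|i}$; once that is in place, the argument is a one-line cancellation. Alternatively, if one wants a proof that genuinely uses the stated $1-2\tau^2/p$ probability slack---for instance, in a more general setting where $P_i$ is only lower-bounded as in the hypothesis of \cref{lemma:mmnn} rather than equal to the precise mixture---one can apply a Chebyshev-style variance bound to the scalar random variables $\sum_\ell \|A_\ell(i_t,\cdot)\|^2/(pP_{i_t})$, and an analogous conditional variance bound for the $W$ step given $i_1,\ldots,i_p$, to obtain each of the two desired inequalities with failure probability at most $\tau^2/p$; a union bound then yields $1-2\tau^2/p$.
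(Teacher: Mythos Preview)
Your proof is correct and in fact sharper than the paper's. Your key observation---that with the exact sampling probabilities $P_i$ of \cref{proc:samp_row}, the identity $P_i\sum_\ell\|A_\ell\|_F^2=\sum_\ell\|A_\ell(i,\cdot)\|^2$ makes $\sum_\ell\|S_\ell(t,\cdot)\|^2$ independent of the sampled index $i_t$---is sound, and the analogous bookkeeping for $P_j'$ in the $S\to W$ step checks out as well. Hence both \eqref{eq:asa} and \eqref{eq:sws} hold as equalities with probability~$1$.

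The paper takes a different route: it controls each individual term $\|S_\ell\|_F^2$ (which, unlike the sum, genuinely fluctuates), shows $\E[\|S_\ell\|_F^2]=\|A_\ell\|_F^2$, bounds the variance, and applies Chebyshev to obtain $|\|S_\ell\|_F^2-\|A_\ell\|_F^2|\le\frac{1}{\tau+1}\sum_{\ell'}\|A_{\ell'}\|_F^2$ with failure probability $2\tau^2/p$; summing over $\ell$ then yields \eqref{eq:asa}. Your approach bypasses all of this by working with the aggregate $\sum_\ell\|S_\ell\|_F^2$ directly, which turns out to be deterministic. The advantage of your argument is brevity and exactness; the advantage of the paper's argument, as you correctly anticipate in your final paragraph, is robustness: it only uses the lower bound $P_i\ge\sum_\ell\|A_\ell(i,\cdot)\|^2\big/\bigl((\tau+1)\sum_\ell\|A_\ell\|_F^2\bigr)$ (compare the hypothesis of \cref{lemma:mmnn}), so it would survive in settings where $P_i$ is not the precise mixture.
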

\begin{proof}
  We first evaluate $\E(\normalpnorm{F}{S_{\ell}}^2)$ as follows. For all $\ell \in [\tau]$,
  \begin{align}
    \E\left(\pnorm{F}{S_{\ell}}^2\right) = \sum_{i=1}^p\E\left(\norm{S_{\ell}(i, \cdot)}^2\right) = \sum_{i=1}^p \sum_{j=1}^n P_{j}\frac{\norm{A_{\ell}(j, \cdot)}^2}{pP_{j}} = \sum_{j=1}^n\norm{A_{\ell}(j, \cdot)}^2 = \pnorm{F}{A_{\ell}}^2.
  \end{align}
  Then we have
  \begin{align}
    \norm{S_{\ell}(i, \cdot)}^2 = \sum_{j=1}^n\frac{|A_{\ell}(i, j)|^2}{pP_i} &\leq \sum_{j=1}^n\frac{2|A_{\ell}(i, j)|^2\sum_{\ell=1}^{\tau}\pnorm{F}{A_{\ell}}^2}{p\sum_{j=1}^{\tau}\norm{A_j(i, \cdot)}^2} \\
                                                                              &= \frac{2\norm{A_{\ell}(i, \cdot)}^2\sum_{\ell=1}^{\tau}\pnorm{F}{A_{\ell}}^2}{p\sum_{j=1}^{\tau}\norm{A_j(i, \cdot)}^2} \leq \frac{2\sum_{\ell=1}^{\tau}\pnorm{F}{A_{\ell}}^2}{p}.
  \end{align}
  Note that the quantity $\pnorm{F}{S_{\ell}}^2$ can be viewed as a sum of $p$ independent random variables $\norm{S_{\ell}(1, \cdot)}^2, \ldots, \norm{S_{\ell}(p, \cdot)}^2$. As a result,
  \begin{align}
    \Var(\pnorm{F}{S_{\ell}}^2) = p\Var(\norm{S_{\ell}(i, \cdot)}^2) &\leq p\E\left(\norm{S_{\ell}(i, \cdot)}^4\right) \\
                                                                     & \leq p\sum_{i=1}^nP_i\left(\frac{2\sum_{j=1}^{\tau}\pnorm{F}{A_j}^2}{p}\right)^2 = \frac{2\left(\sum_{\ell=1}^{\tau}\pnorm{F}{A_{\ell}}^2\right)^2}{p}.
  \end{align}
  According to Chebyshev's inequality, we have
  \begin{align}
    \Pr\left(\left|\pnorm{F}{S_{\ell}}^2 - \pnorm{F}{A_{\ell}}^2\right| \geq \frac{\sum_{\ell=1}^{\tau}\pnorm{F}{A_{\ell}}^2}{\tau}\right) \leq \frac{2\left(\sum_{\ell=1}^{\tau}\pnorm{F}{A_{\ell}}^2\right)^2}{p} = \frac{2\tau^2}{p}.
  \end{align}
  Therefore, with probability at least $1-\frac{2\tau^2}{p}$, it holds that
  \begin{align}
    -\frac{1}{\tau+1}\sum_{j\neq \ell}\pnorm{F}{A_j}^2 + \frac{\tau}{\tau+1}\pnorm{F}{A_{\ell}}^2
\leq \pnorm{F}{S_{\ell}}^2 \leq \frac{1}{\tau+1}\sum_{j\neq \ell}\pnorm{F}{A_j}^2 + \frac{\tau+2}{\tau+1}\pnorm{F}{A_{\ell}}^2,
  \end{align}
  which implies that
  \begin{align}
    \frac{1}{\tau+1} \sum_{\ell=1}^{\tau}\pnorm{F}{A_{\ell}}^2 \leq \sum_{\ell=1}^n\pnorm{F}{S_{\ell}}^2 \leq \frac{2\tau+1}{\tau+1} \sum_{\ell=1}^{\tau}\pnorm{F}{A_{\ell}}^2.
  \end{align}
  \cref{eq:sws} can be proven in a similar way.
\end{proof}

Now, the main result of the weighted sampling method, namely $A^{\dag}A \approx S^{\dag}S$ and $WW^{\dag} \approx SS^{\dag}$, is a consequence of \cref{lemma:mmnn}:

\begin{corollary}
  \label{cor:aass}
  Let $A = A_1 + \cdots + A_m$ be a matrix with the sampling access for each $A_{\ell}$ as in \cref{defn:sampling-informal}. Let $S$ and $W$ be defined by \cref{eq:s,eq:w}. Letting $\theta = (\tau+1)\sqrt{\frac{100}{p}}$, then, with probability at least $9/10$, the following holds:
  \begin{align}
    \label{eq:aass}
    \pnorm{F}{A^{\dag}A - S^{\dag}S} &\leq \theta \sum_{\ell = 1}^{\tau}\pnorm{F}{A_{\ell}}^2, \text{ and}\\
    \label{eq:ssww}
    \pnorm{F}{SS^{\dag} - WW^{\dag}} &\leq \theta \sum_{\ell=1}^{\tau}\pnorm{F}{S_{\ell}}^2 \leq 2\theta\sum_{\ell=1}^{\tau}\pnorm{F}{A_{\ell}}^2.
  \end{align}
\end{corollary}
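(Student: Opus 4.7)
The plan is to derive both bounds as essentially direct corollaries of Lemma~\ref{lemma:mmnn}, applied once for the row-sampling step ($S$ approximating $A$) and once for the column-sampling step ($W$ approximating $S$), combined with Claim~\ref{claim:asa} to translate a bound involving $\sum_\ell \pnorm{F}{S_\ell}^2$ back into one involving $\sum_\ell \pnorm{F}{A_\ell}^2$. The parameter $\theta=(\tau+1)\sqrt{100/p}$ is chosen exactly so that $(\tau+1)^2/(\theta^2 p)=1/100$, giving Markov-style failure probability $1/100$ from each application.

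For the first inequality, I would take $M_\ell=A_\ell$ and $N_\ell=S_\ell$ in Lemma~\ref{lemma:mmnn}. The sampling distribution used in \cref{proc:samp_row} is
\[
P_i \;=\; \frac{\sum_j \mathcal{D}_{\rows(A_j)}(i)\pnorm{F}{A_j}^2}{\sum_\ell \pnorm{F}{A_\ell}^2},
\]
which is $(\tau+1)$ times the lower bound required in \eqref{eq:prob-cond}, so the hypothesis of the lemma is satisfied. Invoking the lemma with threshold $\theta\sum_\ell\pnorm{F}{A_\ell}^2$ immediately yields \eqref{eq:aass} with failure probability at most $1/100$.

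For the second inequality I would apply Lemma~\ref{lemma:mmnn} to the matrices $S_\ell^\dag$ (so that $N^\dag N$ becomes $W W^\dag$), with the samples $j_1,\ldots,j_p$ from \cref{proc:samp_col} playing the role of the row indices. The main technical step is to verify that the effective marginal distribution $P'_j = \tfrac{1}{p}\sum_t Q_{j|i_t}$ satisfies the analogue of \eqref{eq:prob-cond} for the matrices $S_\ell^\dag$. Unwinding the definitions, both $P'_j$ and the required lower bound $\sum_k \pnorm{}{S_k(\cdot,j)}^2 / \bigl((\tau+1)\sum_\ell \pnorm{F}{S_\ell}^2\bigr)$ can be expressed as sums over $t$ of ratios in which the numerator is $\sum_k |A_k(i_t,j)|^2$; comparing denominators and using that, by construction of $P_{i_t}$, the weight $\sum_\ell \norm{A_\ell(i_t,\cdot)}^2$ is proportional to $P_{i_t}$, one obtains the desired inequality (up to the $(\tau+1)$ slack). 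This is where I expect the main obstacle to lie, since $S$ is itself random and one has to be careful that the condition holds pointwise in the $i_t$'s (or at least conditionally, so that the lemma can be applied conditionally on the first-stage samples). Once the hypothesis is checked, Lemma~\ref{lemma:mmnn} gives $\pnorm{F}{SS^\dag - WW^\dag}\le \theta\sum_\ell \pnorm{F}{S_\ell}^2$ with failure probability at most $1/100$.

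Finally, the second half of \eqref{eq:ssww} follows from the upper bound in \eqref{eq:asa} of Claim~\ref{claim:asa}, which gives $\sum_\ell \pnorm{F}{S_\ell}^2 \le \tfrac{2\tau+1}{\tau+1}\sum_\ell \pnorm{F}{A_\ell}^2 \le 2\sum_\ell \pnorm{F}{A_\ell}^2$, holding except with probability $2\tau^2/p$. A union bound over the two applications of Lemma~\ref{lemma:mmnn} and Claim~\ref{claim:asa} gives a total failure probability of at most $1/100 + 1/100 + 2\tau^2/p$, which is at most $1/10$ for the value of $p$ chosen in \cref{step:parameter} of \cref{alg:approx_v}, completing the proof.
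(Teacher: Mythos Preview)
Your proposal is correct and follows essentially the same approach as the paper: apply \cref{lemma:mmnn} once for rows (giving \eqref{eq:aass}) and once for columns (giving the first half of \eqref{eq:ssww}), then invoke \cref{claim:asa} for the final inequality. The one point worth sharpening is that the ``$(\tau+1)$ slack'' in the verification of $P_j'\ge \sum_k\norm{S_k(\cdot,j)}^2/\bigl((\tau+1)\sum_\ell\pnorm{F}{S_\ell}^2\bigr)$ does not come merely from unwinding definitions: the chain of equalities actually gives $P_j'=\sum_k\norm{S_k(\cdot,j)}^2/\sum_\ell\pnorm{F}{A_\ell}^2$ exactly, and it is the \emph{lower} bound in \cref{claim:asa} that converts the denominator $\sum_\ell\pnorm{F}{A_\ell}^2$ to $(\tau+1)\sum_\ell\pnorm{F}{S_\ell}^2$. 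So \cref{claim:asa} is invoked twice (lower bound for the probability condition, upper bound for the last inequality), not once.
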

\begin{proof}
  First note that \cref{eq:aass} follows from \cref{lemma:mmnn}. For the second statement, we need the probability $P'_j$ to satisfy \cref{eq:prob-cond}\ in \cref{lemma:mmnn}. In fact,
  \begin{align}
    P_j' = \sum_{t=1}^p\frac{Q_{j|i_t}}{p} &= \frac{1}{p}\sum_{t=1}^p\frac{\sum_{k=1}^{\tau}\mathcal{D}_{A_k(i_t, \cdot)}(j)\norm{A_k(i_t, \cdot)}^2}{\sum_{\ell=1}^{\tau}\norm{A_{\ell}(i_t, \cdot)}^2} \\
                                           &= \frac{1}{p}\sum_{t=1}^p\frac{\sum_{k=1}^{\tau}|A_k(i_t, j)|^2}{\sum_{\ell=1}^{\tau}\norm{A_{\ell}(i_t, \cdot)}^2} \\
                                           & = \frac{1}{p}\sum_{t=1}^p\frac{pP_{i_t}\sum_{k=1}^{\tau}|S_k(i_t, j)|^2}{\sum_{\ell=1}^{\tau}\norm{A_{\ell}(i_t, \cdot)}^2} \\
                                           & = \sum_{t=1}^p\frac{\sum_{j=1}^{\tau}\norm{A_j(i_t, \cdot)}^2}{\sum_{\ell=1}^{\tau}\pnorm{F}{A_{\ell}}^2}\frac{\sum_{k=1}^{\tau}|S_k(i_t, j)|^2}{\sum_{\ell=1}^{\tau}\norm{A_{\ell}(i_t, \cdot)}^2} \\
                                           & = \sum_{t=1}^p\frac{\sum_{k=1}^{\tau}|S_k(i_t, j)|^2}{\sum_{\ell=1}^{\tau}\pnorm{F}{A_{\ell}}^2} \\
                                           & = \frac{\sum_{k=1}^{\tau}\norm{S_k(\cdot, j)}^2}{\sum_{\ell=1}^{\tau}\pnorm{F}{A_{\ell}}^2} \\
                                           & \geq \frac{\sum_{k=1}^{\tau}\norm{S_k(\cdot, j)}^2}{(\tau+1)\sum_{\ell=1}^{\tau}\pnorm{F}{S_{\ell}}^2},
    \end{align}
    where the last inequality follows from \cref{claim:asa}. Note that the probability satisfies \cref{eq:prob-cond}; as a result of \cref{lemma:mmnn}, \cref{eq:ssww} holds.
\end{proof}

An important result of \cref{alg:approx_v} is that the vectors $u_1, \ldots, u_{\tilde{r}}$ are approximately orthonormal, as stated in the following lemma:
\begin{lemma}
  \label{lemma:almost-orthonormal}
  Let $A = A_1 + \cdots + A_{\tau}$ be a matrix with the sampling access to each $A_{\ell}$ as in \cref{defn:sampling-informal}. Assume $\norm{A_{\ell}} \leq 1$ and $\rank(A_{\ell}) \leq r$ for all $\ell \in [\tau]$. Take $A$ and error parameter $\epsilon$ as the input of \cref{alg:approx_v} and obtain the $\sigma_1, \ldots, \sigma_{\tilde{r}}$ and $u_1, \ldots, u_{\tilde{r}}$. Let $V \in \bbc^{n \times \tilde{r}}$ be the matrix such that $V(\cdot, j) = \frac{S^{\dag}}{\sigma_j}u_j$ for $j \in \{1, \ldots, \tilde{r}\}$. Then, with probability at least $9/10$, the following statements hold:
  \begin{enumerate}
    \item There exists an isometry $U \in \bbc^{n \times {\tilde{r}}}$ whose column vectors span the column space of $V$ satisfying $\pnorm{F}{U - V} \leq \ev$.
    \item $|\norm{V} - 1| \leq \ev$.
    \item Let $\Pi_V$ be the projector on the column space of $V$, then it holds that $\pnorm{F}{VV^{\dag} - \Pi_V} \leq \ev$.
    \item $\pnorm{F}{V^{\dag}V - I} \leq \ev$.
  \end{enumerate}
\end{lemma}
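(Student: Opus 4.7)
The plan is to treat statement 4 as the workhorse and derive the other three from it by standard perturbation arguments. The core observation is that because $u_1,\ldots,u_{\tilde r}$ are orthonormal left singular vectors of $W$ with singular values $\sigma_1,\ldots,\sigma_{\tilde r}$, one has the exact identity $u_j^\dag (WW^\dag) u_k = \sigma_j^2\,\delta_{jk}$. Substituting the definition $V(\cdot,j)=S^\dag u_j/\sigma_j$, we get
\begin{align}
(V^\dag V)_{jk} \;=\; \frac{u_j^\dag (SS^\dag) u_k}{\sigma_j\sigma_k}
\;=\; \delta_{jk} \;+\; \frac{u_j^\dag (SS^\dag - WW^\dag)\,u_k}{\sigma_j\sigma_k},
\end{align}
so the task reduces to bounding the second term entry by entry and then summing.

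For the entrywise bound I would use Cauchy--Schwarz: $|u_j^\dag (SS^\dag-WW^\dag) u_k|\leq \|SS^\dag-WW^\dag\|_F$ since $u_j,u_k$ are unit vectors. By Corollary~\ref{cor:aass}, with probability at least $9/10$, $\|SS^\dag-WW^\dag\|_F\leq 2\theta\sum_\ell\|A_\ell\|_F^2$ for $\theta=(\tau+1)\sqrt{100/p}$, and by Claim~\ref{claim:asa} this is comparable to $2\theta\sum_\ell\|W_\ell\|_F^2$ up to a constant depending on $\tau$. On the other hand, the discarding step of \cref{alg:approx_v} guarantees $\sigma_j^2\geq \gamma\sum_\ell\|W_\ell\|_F^2$ for every retained index, hence $\sigma_j\sigma_k\geq \gamma\sum_\ell\|W_\ell\|_F^2$. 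Combining these gives $|(V^\dag V)_{jk}-\delta_{jk}|=O(\theta/\gamma)$ entrywise, and summing over the at most $\tilde r\leq r$ rows and columns yields $\|V^\dag V-I\|_F=O(r\theta/\gamma)$. Plugging in $p=2\cdot 10^{20}\tau^{12}r^{19}/\epsilon^6$ and $\gamma=\epsilon^2/(3\cdot 10^6\tau^2 r^6)$ is exactly calibrated so that $r\theta/\gamma\leq\epsilon/(300 r^2(\tau+1))$, proving statement~4.

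The remaining statements follow from statement~4 by a short polar-decomposition argument. Write the SVD $V=U_0\Sigma R^\dag$ with $\Sigma$ diagonal and $U_0,R$ isometries (so $U_0$'s columns span the column space of $V$); then $V^\dag V-I=R(\Sigma^2-I)R^\dag$, which gives $\|\Sigma^2-I\|_F\leq \ev$. Since $\Sigma^2\succeq 0$, this forces every singular value $s_i$ of $V$ to satisfy $|s_i^2-1|\leq\ev$, and hence $|s_i-1|\leq\ev$ (using $1+s_i\geq 1$). Statement~2 ($|\|V\|-1|\leq\ev$) is then immediate from $\|V\|=\max_i s_i$. For statement~1, take $U:=U_0 R^\dag$, an isometry whose columns span the column space of $V$; then $\|U-V\|_F^2=\|U_0(I-\Sigma)R^\dag\|_F^2=\sum_i (1-s_i)^2\leq\ev^2$, where the tail $\tilde r\leq r$ also folds naturally into the stated bound. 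For statement~3, $\Pi_V=U_0 U_0^\dag$ and $VV^\dag=U_0\Sigma^2 U_0^\dag$, so $\|VV^\dag-\Pi_V\|_F=\|\Sigma^2-I\|_F\leq\ev$.

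The main obstacle I anticipate is bookkeeping the constants: one has to chain Corollary~\ref{cor:aass} (which gives an error in terms of $\sum_\ell\|A_\ell\|_F^2$) with Claim~\ref{claim:asa} (which relates $\sum_\ell\|W_\ell\|_F^2$ to $\sum_\ell\|A_\ell\|_F^2$ up to $\tau$-dependent factors), then divide by $\sigma_j\sigma_k$ and sum $\tilde r^2$ entries while keeping the extra $r^2(\tau+1)$ margin in the denominator of the target bound $\ev$. The choice of $p$ and $\gamma$ in \cref{step:parameter} is exactly tight enough to absorb all of these factors; verifying this is a purely arithmetic matter, but it must be done carefully to confirm the stated probability $9/10$ and the claimed error rate. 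A secondary subtlety is that the $9/10$-event in Corollary~\ref{cor:aass} and the $9/10$-event in Claim~\ref{claim:asa} have to be combined via a union bound; since $2\tau^2/p$ is negligible compared to $1/10$, this is easy but worth noting.
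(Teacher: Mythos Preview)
Your approach is correct and actually somewhat cleaner than the paper's, but the route is genuinely different. The paper obtains the same entrywise bound $|(V^\dag V)_{jk}-\delta_{jk}|\le(\tau+1)\theta/\gamma$ that you do, but then proves statement~1 \emph{first}: it takes the QR factorization $V=QR$, observes that $R^\dag R=V^\dag V$ is an approximate Cholesky factor of $I$, and invokes a perturbation theorem for Cholesky factorizations from \cite{CPS96} to conclude $\|R-I\|_F\lesssim \tilde r(\tau+1)\theta/\gamma$; the isometry is then $U=Q R'$ with $R'$ the identity padded by zeros. Statements~2--4 are derived from statement~1 by triangle inequalities. Your order is reversed: you sum the entrywise bounds to get statement~4 directly, then read off statements~1--3 from the SVD/polar decomposition of $V$. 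Your argument is more elementary and self-contained (no appeal to \cite{CPS96}); the paper's QR route is more in the spirit of numerical linear algebra but requires an external lemma.

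One small correction: you write $\tilde r\le r$, but in fact only $\tilde r\le\tau r$ is available (the rank of $A=\sum_{\ell=1}^\tau A_\ell$ can be as large as $\tau r$, and the paper explicitly uses this). This changes your Frobenius bound to $\|V^\dag V-I\|_F\le \tilde r\cdot(\tau+1)\theta/\gamma\le \tau r(\tau+1)\theta/\gamma$ rather than $r(\tau+1)\theta/\gamma$. The calibration of $p$ and $\gamma$ in \cref{step:parameter} still absorbs the extra factor of $\tau$ with room to spare (indeed the paper remarks that the analysis actually gives a tighter bound $\sqrt{2}\epsilon/(\tau^3 r^2)+O(\epsilon^2)$, well below $\ev$), so your conclusion survives, but the arithmetic check you flag as ``purely arithmetic'' should be done with $\tau r$ in place of $r$.
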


\begin{proof}
      Most of the arguments in this proof are similar to the proofs in \cite[Lemma 6.6, Corollary 6.7, Proposition 6.11]{tang2018quantumv1}. Let $v_j \in \bbc^n$ denote the column vector $V(\cdot, j)$, i.e., $v_j = \frac{S^{\dag}}{\hat{\sigma}_j}\hat{u}_j$. Choose $\theta = (\tau+1)\sqrt{40/p}$. When $i \neq j$, with probability at least $9/10$, it holds that
  \begin{align}
    |v_i^{\dag}v_j| = \frac{|u_i^{\dag}SS^{\dag}u_j|}{\sigma_i\sigma_j} \leq \frac{|u_i^{\dag}(SS^{\dag}-WW^{\dag})u_j|}{\sigma_i\sigma_j} \leq \frac{\theta\sum_{\ell=1}^{\tau}\pnorm{F}{S_{\ell}}^2}{\gamma\sum_{\ell=1}^{\tau}\pnorm{F}{W_{\ell}}^2} \leq \frac{(\tau+1)\theta}{\gamma},
  \end{align}
  where the second inequality follows from \cref{cor:aass}, and the last inequality uses \cref{claim:asa}. Similarly, when $i = j$, the following holds with probability at least $9/10$.
  \begin{align}
    |\norm{v_i}-1| = \frac{|u_i^{\dag}SS^{\dag}u_i-\sigma_i^2|}{\sigma_i^2} \leq \frac{|u_i^{\dag}(SS^{\dag}-WW^{\dag})u_i|}{\sigma_i^2} \leq \frac{(\tau+1)\theta}{\gamma}.
  \end{align}
  Since $|(V^{\dag}V)(i,j)| = |v_i^{\dag}v_j|$, each diagonal entry of $V^{\dag}V$ is at most $(\tau+1)\theta/\gamma$ away from $1$ and each off-diagonal entry is at most $(\tau+1)\theta/\gamma$ away from $0$. More precisely, let $M \in \bbc^{n \times n}$ be the matrix with all ones, i.e., $M(i,j) = 1$ for all $i, j \in \{1, \ldots, n\}$, then for all $i, j \in \{1, \ldots, n\}$, we have
  \begin{align}
    \left(I - \frac{(\tau+1)\theta}{\gamma} M\right)(i, j) \leq (V^{\dag}V)(i, j) \leq \left(I + \frac{(\tau+1)\theta}{\gamma} M\right)(i, j).
  \end{align}

  To prove statement 1, we consider the QR decomposition of $V$. Let $Q \in \bbc^{n \times n}$ be a unitary and $R \in \bbc^{n \times k}$ be upper-triangular with positive diagonal entries satisfying $V = QR$. Since $V^{\dag}V = R^{\dag}R$, we have
  \begin{align}
    \left(I - \frac{(\tau+1)\theta}{\gamma} M\right)(i, j) \leq (R^{\dag}R)(i, j) \leq \left(I + \frac{(\tau+1)\theta}{\gamma} M\right)(i, j).
  \end{align}
  Let $\widehat{R}$ be the upper $k \times k$ part of $R$. Since $R$ is upper-triangular, $\widehat{R}^{\dag}\widehat{R} = R^{\dag}R$. Hence, $\widehat{R}$ can be viewed as an approximate Cholesky factorization of $I$ with error $\frac{(\tau+1)\theta}{\gamma}$. As a consequence of \cite[Theorem 1]{CPS96}, we have $\pnorm{F}{R - I} \leq \frac{\tilde{r}(\tau+1)\theta}{\sqrt{2}\gamma} + O((\tilde{r}(\tau+1)\theta/\gamma)^2)$. Now, we define $R' \in \bbc^{n \times k}$ as the matrix with $I$ on the upper $k \times k$ part and zeros everywhere else. Let $U = QR'$. Clearly, $U$ is isometry as it contains the first $k$ columns of $Q$. To see the column vectors of $V$ span the column space of $V$, note that $UU^{\dag} = QR'R'^{\dag}Q^{\dag}$ where $R'R'^{\dag}$ only contains $I$ on its upper-left $(k \times k)$-block, and $VV^{\dag} = QRR^{\dag}Q^{\dag}$ where $R^{\dag}R$ only contains a diagonal matrix on its upper-left $(k \times k)$-block. To bound the distance between $U$ and $V$, we have $\pnorm{F}{U - V} = \pnorm{F}{Q(R' - R)} = \pnorm{F}{R'-R} \leq \frac{\tilde{r}(\tau+1)\theta}{\sqrt{2}\gamma} + O((\tilde{r}(\tau+1)\theta/\gamma)^2)$. Then statement follows from the choice of $p$ and $\gamma$ in \cref{step:parameter} of \cref{alg:approx_v} and the fact that $\tilde{r} \leq \tau r$.

  Statement 2 follows from the triangle inequality:
  \begin{align}
    \norm{V} - 1 &= \norm{V} - \norm{U} \leq \norm{V - U} \leq \pnorm{F}{V - U}, \text{ and }\\
    1 - \norm{V} &= \norm{U} - \norm{V} \leq \norm{U - V} \leq \pnorm{F}{U - V}.
  \end{align}

  For statement 3, we have
  \begin{align}
    \pnorm{F}{VV^{\dag} - \Pi_V} &= \pnorm{F}{VV^{\dag} - UU^{\dag}} \\
                                 &\leq \pnorm{F}{V(V^{\dag} - U^{\dag})} + \pnorm{F}{(V-U)U^{\dag}} \\
                                 &\leq \norm{V}\pnorm{F}{V^{\dag} - U^{\dag}} + \pnorm{F}{V-U}\norm{U^{\dag}} \\
                                 &\leq \frac{\sqrt{2}\tilde{r}(\tau+1)\theta}{\gamma} + O\left(\left(\frac{\sqrt{2}\tilde{r}(\tau+1)\theta}{\gamma}\right)^2\right),
  \end{align}
  and statement 3 follows from the choice of $p$ and $\gamma$ and the fact that $\tilde{r} \leq \tau r$ in \cref{step:parameter} of \cref{alg:approx_v}.

  Similarly, statement 4 follows by bounding the distance $\pnorm{F}{V^{\dag}V - U^{\dag}U}$.
\end{proof}

Note that the analysis in the proof gives a tighter bound of $\frac{\sqrt{2}\epsilon}{\tau^3r^2} + O(\epsilon^2)$. However, for the convenience of the analysis in the rest of the paper, we choose a looser bound $\ev$ as in \cref{lemma:almost-orthonormal}.

\cref{alg:approx_v} is similar to the main algorithm in~\cite{FKV04} except for the different sampling method used here. In terms of the low-rank approximation, a similar result holds as follows.

\begin{lemma}
  \label{lemma:modfkv}
  Let $A = A_1 + \cdots + A_{\tau} \in \bbc^{n \times n}$ be a Hermitian matrix where $A_{\ell} \in \bbc^{n \times n}$ is Hermitian, $\norm{A_{\ell}} \leq 1$, and $\rank(A_{\ell}) \leq r$ for all $\ell \in [\tau]$. The sampling access each $A_{\ell}$ is given as in \cref{defn:sampling-informal}. Take $A$ and error parameter $\epsilon$ as the input of \cref{alg:approx_v} to obtain the $\sigma_1, \ldots, \sigma_{\tilde{r}}$ and $u_1, \ldots, u_{\tilde{r}}$. Let $V \in \bbc^{n \times \tilde{r}}$ be the matrix such that $V(\cdot, j) = \frac{S^{\dag}}{\sigma_j}u_j$ for $j \in \{1, \ldots, \tilde{r}\}$. Then, with probability at least $9/10$, it holds that $\pnorm{F}{AVV^{\dag} - A} \leq \ef$.
\end{lemma}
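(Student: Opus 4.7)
The plan is to adapt the classical FKV low-rank approximation argument to the weighted sampling setting, using the building blocks already proved in \cref{cor:aass}, \cref{claim:asa}, and \cref{lemma:almost-orthonormal}. The starting point is the expansion
\[
\pnorm{F}{A - AVV^{\dag}}^{2} = \Tr[A^{\dag}A] - 2\Tr[V^{\dag}A^{\dag}AV] + \Tr[V^{\dag}VV^{\dag}A^{\dag}AV].
\]
Using $\pnorm{F}{V^{\dag}V - I} \leq \ev$ from \cref{lemma:almost-orthonormal} and the operator bound $\norm{V} \leq 1 + \ev$, this reduces, up to a controllable near-isometry error, to showing that $\pnorm{F}{A}^{2} - \Tr[V^{\dag}A^{\dag}AV]$ is small. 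Since $\rank(A_{\ell}) \leq r$ and $\norm{A_{\ell}} \leq 1$, one has $\sum_{\ell}\normF{A_{\ell}}^{2} \leq \tau r$, which I would use throughout to convert Frobenius-norm errors into numerical bounds.

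The core chain of approximations for $\Tr[V^{\dag}A^{\dag}AV]$ proceeds as follows. First, by \cref{cor:aass}, replace $\Tr[V^{\dag}A^{\dag}AV]$ by $\Tr[V^{\dag}S^{\dag}SV]$ at an additive cost controlled by $\theta\sum_{\ell}\normF{A_{\ell}}^{2}\norm{V}^{2}$ with $\theta = (\tau+1)\sqrt{100/p}$. Next, from the definition $V(\cdot,j) = S^{\dag}u_{j}/\sigma_{j}$, compute
\[
(SV)(\cdot, j) = \frac{SS^{\dag}u_{j}}{\sigma_{j}},
\]
and use the second bound of \cref{cor:aass}, namely $\pnorm{F}{SS^{\dag} - WW^{\dag}}$ small, together with the exact identity $WW^{\dag}u_{j} = \sigma_{j}^{2}u_{j}$, to conclude that $SV(\cdot,j) \approx \sigma_{j}u_{j}$. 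Orthonormality of $\{u_{j}\}$ then gives $\Tr[V^{\dag}S^{\dag}SV] \approx \sum_{j=1}^{\tilde{r}}\sigma_{j}^{2}$. Since every discarded singular value of $W$ satisfies $\sigma_{j}^{2} < \gamma\sum_{\ell}\normF{W_{\ell}}^{2}$ and $\rank(W) \leq \rank(S) \leq \rank(A) \leq \tau r$, the dropped contribution is at most $\tau r \cdot \gamma\sum_{\ell}\normF{W_{\ell}}^{2}$, giving
\[
\sum_{j=1}^{\tilde{r}}\sigma_{j}^{2} \geq \pnorm{F}{W}^{2} - \tau r \gamma \sum_{\ell}\normF{W_{\ell}}^{2}.
\]
Finally, the norm chain $\pnorm{F}{W}^{2} \approx \pnorm{F}{S}^{2} \approx \pnorm{F}{A}^{2}$ via \cref{claim:asa} (and the $\theta$-bounds on $\pnorm{F}{A^{\dag}A - S^{\dag}S}$ and $\pnorm{F}{SS^{\dag} - WW^{\dag}}$) closes the loop and shows $\pnorm{F}{A}^{2} - \Tr[V^{\dag}A^{\dag}AV]$ is bounded by a sum of $\theta$-terms and $\gamma$-terms.

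The hard part is not any single estimate but the careful bookkeeping: each of the five error sources (the two applications of \cref{cor:aass}, the near-isometry error of \cref{lemma:almost-orthonormal}, the discarded-singular-value contribution scaling with $\gamma$, and the three-way gap between $\sum_{\ell}\normF{A_{\ell}}^{2}$, $\sum_{\ell}\normF{S_{\ell}}^{2}$, and $\sum_{\ell}\normF{W_{\ell}}^{2}$ from \cref{claim:asa}) carries a polynomial prefactor in $\tau$ and $r$, and one must verify that the concrete choices $p = 2\cdot 10^{20}\tau^{12}r^{19}/\epsilon^{6}$ and $\gamma = \epsilon^{2}/(3\cdot 10^{6}\tau^{2}r^{6})$ from \cref{step:parameter} of \cref{alg:approx_v}, combined with $\sum_{\ell}\normF{A_{\ell}}^{2} \leq \tau r$, sum to give the final target $\ef = \epsilon/(300 r^{2})$. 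The $9/10$ success probability then comes from a union bound over the high-probability events in \cref{cor:aass}, \cref{claim:asa}, and \cref{lemma:almost-orthonormal}.
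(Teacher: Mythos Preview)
Your reduction to controlling $\pnorm{F}{A}^{2} - \Tr[V^{\dag}A^{\dag}AV]$ hides a term that is \emph{not} negligible with the stated parameters. Expanding as you do,
\[
\pnorm{F}{A-AVV^{\dag}}^{2}
= \bigl(\pnorm{F}{A}^{2} - \Tr[V^{\dag}A^{\dag}AV]\bigr) + \Tr\bigl[(V^{\dag}V-I)\,V^{\dag}A^{\dag}AV\bigr],
\]
and the second summand is only bounded by $\norm{V^{\dag}V-I}\cdot\Tr[V^{\dag}A^{\dag}AV]\le \ev\cdot(1+\ev)^{2}\pnorm{F}{A}^{2}$. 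In the worst case $\pnorm{F}{A}^{2}$ can be as large as $\tau^{2}r$ (since $\pnorm{F}{A}\le \tau\sqrt{r}$), so this ``near-isometry error'' is of order $\ev\,\tau^{2}r \approx \epsilon\tau/(300r)$. The target, however, is $(\ef)^{2}=\epsilon^{2}/(9\cdot 10^{4}r^{4})$; the ratio is $\Theta(\tau r^{3}/\epsilon)$, so the bound from \cref{lemma:almost-orthonormal} is far too weak to absorb this term. The same obstruction appears if you first swap $VV^{\dag}$ for the projector $\Pi_{V}$: the correction $\pnorm{F}{A(\Pi_{V}-VV^{\dag})}\le \norm{A}\cdot\ev$ already sits at essentially $\ef$, leaving no room for the main term.

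This is why the paper does \emph{not} route the argument through \cref{lemma:almost-orthonormal}. Instead it stays inside the $\Delta(\cdot;\cdot)$ machinery: it first invokes the existence lemmas (\cref{lem:a-ayy} and \cref{cor:s-syy}) to get good orthonormal sets in the row spaces of $S$ and $W^{\dag}$, then chains $\Delta(A;\cdot)\to\Delta(S;\cdot)\to\Delta(S^{\dag};\cdot)\to\Delta(W^{\dag};\cdot)$ and back via \cref{lem:atos}, and handles the non-orthonormality of the $v_{t}$ by an ad hoc ``Fact~2'' that bounds the cross terms $ (v_{t}^{\dag}v_{t'})(v_{t}^{\dag}S^{\dag}Sv_{t'})$ directly in terms of $\theta/\gamma$ and $\theta^{2}/\gamma^{2}$, never multiplying a $\pnorm{F}{A}^{2}$-sized quantity by the coarse near-isometry error. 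A secondary point: your step $\pnorm{F}{W}^{2}\approx\pnorm{F}{S}^{2}\approx\pnorm{F}{A}^{2}$ does not follow from \cref{claim:asa} (which concerns $\sum_{\ell}\normF{\cdot_{\ell}}^{2}$, not $\normF{\cdot}^{2}$); one can recover it from \cref{cor:aass} by projecting onto the $\le\tau r$-dimensional row space of $A$ (and column space of $S$), but that argument is missing from your sketch.
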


To begin with, we first prove the following lemma, which is adapted from~\cite{FKV04}.

\begin{lemma}
  \label{lem:a-ayy}
  Let $A = A_1 + \cdots + A_{\tau} \in \bbc^{n \times n}$ be a Hermitian matrix where $A_{\ell} \in \bbc^{n \times n}$ is Hermitian and $\rank(A_{\ell}) \leq r$ for $\ell \in [\tau]$. The sampling access each $A_{\ell}$ is given as in \cref{defn:sampling-informal}. Apply \cref{proc:samp_row} on $A$ (to obtain row indices $i_1, \ldots, i_p$) and let $S$ be defined as \cref{eq:s}. Then with probability at least $9/10$, there exists an orthonormal set of vectors $\{y_1, \ldots, y_{\hat{r}'}\}$ in the row space of $S$ such that
  \begin{align}
    \pnorm{F}{A - A\sum_{j=1}^{\hat{r}'}y_iy_i^{\dag}}^2 \leq \frac{10\hat{r}\tau}{p}\sum_{\ell=1}^{\tau}\pnorm{F}{A_{\ell}}^2,
  \end{align}
  where $\hat{r}'$ and $\hat{r}$ satisfy $\hat{r}' \leq \hat{r} \leq \tau r$.
\end{lemma}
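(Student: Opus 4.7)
My plan is to adapt the standard FKV low-rank approximation analysis to the weighted-sampling setting. First, since each $A_\ell$ has rank at most $r$, I have $\hat r := \rank(A) \leq \sum_\ell \rank(A_\ell) \leq \tau r$, which is the claimed rank bound. I will choose $y_1,\ldots,y_{\hat r'}$ to be right singular vectors of $S$ for its top $\hat r' := \min(\rank(S),\hat r)$ nonzero singular values; these are orthonormal and automatically lie in the row space of $S$.

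\textbf{Concentration.} The weighted probabilities $P_i$ used in \cref{proc:samp_row} satisfy the hypothesis of \cref{lemma:mmnn} (taking $M_\ell = A_\ell$ and hence $N = S$) with equality up to the factor $\tau+1$. Combining the second-moment bound proved inside \cref{lemma:mmnn} with Markov's inequality gives, with probability at least $9/10$,
\[
\|A^\dag A - S^\dag S\|_F \leq \frac{C(\tau+1)}{\sqrt{p}}\sum_{\ell=1}^\tau \|A_\ell\|_F^2
\]
for an absolute constant $C$.

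\textbf{Structural step.} Write $E := A^\dag A - S^\dag S$ and let $\hat Y$ be the isometry with columns $y_1,\ldots,y_{\hat r'}$, so $\hat Y\hat Y^\dag$ is an orthogonal projector of rank $\hat r'$. The identity
\[
\|A - A\hat Y\hat Y^\dag\|_F^2 = \|A\|_F^2 - \sum_{j=1}^{\hat r'}\sigma_j^2(S) - \Tr[\hat Y^\dag E \hat Y]
\]
is then immediate. Two further bounds finish the proof: (i) Ky Fan's variational principle, applied with the test isometry $V_A$ onto the row space of $A$ and using $\rank(A)=\hat r$ so that $\Tr[V_A^\dag A^\dag A V_A] = \|A\|_F^2$, gives $\sum_{j\leq \hat r'}\sigma_j^2(S) \geq \|A\|_F^2 - \sqrt{\hat r}\,\|E\|_F$; (ii) the rank-bounded trace-Cauchy--Schwarz inequality $|\Tr[PX]| \leq \sqrt{\rank(P)}\,\|X\|_F$ applied to the projector $\hat Y \hat Y^\dag$ yields $|\Tr[\hat Y^\dag E \hat Y]| \leq \sqrt{\hat r}\,\|E\|_F$. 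Combining these gives $\|A - A\hat Y\hat Y^\dag\|_F^2 \leq 2\sqrt{\hat r}\,\|E\|_F$, and substituting the concentration bound from the previous step recovers the claim up to absolute constants.

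\textbf{Main obstacle.} Step (i) is the crucial one: it leverages the exact-rank equality $\rank(A) = \hat r$ so that $V_A$ can serve as a test isometry of the right dimension, eliminating the $\|A - A_{\hat r}\|_F^2$ residual that would otherwise appear in the standard FKV bound. A secondary subtlety is the edge case $\hat r' < \hat r$, which arises only when $\rank(S) < \hat r$ (equivalently, $p$ is very small); there the same argument goes through with $\hat r'$ in place of $\hat r$, since the chosen $y_j$'s already span all of the row space of $S$ and the trace identities above remain valid.
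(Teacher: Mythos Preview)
Your argument is internally sound, but the final sentence---that it ``recovers the claim up to absolute constants''---is incorrect. Substituting $\|E\|_F \le C(\tau+1)p^{-1/2}\sum_\ell\|A_\ell\|_F^2$ into $\|A-A\hat Y\hat Y^\dag\|_F^2 \le 2\sqrt{\hat r}\,\|E\|_F$ produces a bound of order $\sqrt{\hat r}\,\tau/\sqrt{p}\cdot\sum_\ell\|A_\ell\|_F^2$, whereas the lemma asserts $\hat r\tau/p\cdot\sum_\ell\|A_\ell\|_F^2$. These differ by the factor $\sqrt{p/\hat r}$, which is \emph{not} an absolute constant: in the application $p=\poly(\tau,r,1/\epsilon)\gg\hat r$. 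So you have proved only a strictly weaker statement.

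The loss is intrinsic to routing through the global Frobenius deviation $\|A^\dag A-S^\dag S\|_F$: that quantity concentrates at rate $p^{-1/2}$, and your trace--Cauchy--Schwarz step then contributes an additional $\sqrt{\hat r}$. The paper avoids this by working directly with per-direction averages. For each left singular vector $u_t$ of $A$ it defines $w_t=\tfrac1p\sum_{s=1}^p\tfrac{u_t(i_s)}{P_{i_s}}A(i_s,\cdot)$, verifies $\E[w_t]=\sigma_tv_t^\dag$ and $\E\|w_t-\sigma_tv_t^\dag\|^2\le(\tau/p)\sum_\ell\|A_\ell\|_F^2$, takes the $y_j$ to be an orthonormal basis of $\sspan\{w_t\}$ (which lies in the row space of $S$ since each $A(i_s,\cdot)$ is a scalar multiple of $S(s,\cdot)$), and proves the structural inequality $\bigl\|A-A\sum_jy_jy_j^\dag\bigr\|_F^2\le\sum_{t=1}^{\hat r}\|\sigma_tv_t^\dag-w_t\|^2$. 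Summing $\hat r$ per-direction variances, each scaling as $1/p$, gives the stated $\hat r\tau/p$ bound in expectation, and Markov finishes. The essential point is that one must exploit the $1/p$ variance scaling of an averaged estimator rather than the $1/\sqrt p$ deviation scaling of a matrix-norm bound.
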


\begin{proof}
  Denote the rank of $A$ by $\hat{r}$. First note that $\hat{r} \leq \tau r$. We write the singular value decomposition of $A$ as
  \begin{align}
    A = \sum_{t=1}^{\hat{r}}\sigma_tu_tv_t^{\dag}.
  \end{align}
  For the convenience of analysis, we define $\mathcal{S}$ as the set of row indices $\{i_1, \ldots, i_p\}$ that was sampled from \cref{step:it} of \cref{proc:samp_row}. For $t \in [\hat{r}]$, we define the vector-valued random variable
  \begin{align}
    w_t = \frac{1}{p}\sum_{i \in \mathcal{S}}\frac{u_t(i)}{P_i}A(i, \cdot).
  \end{align}
  Note that $w_t$ can be viewed as the average of $p$ i.i.d.~random variables $X_1, \ldots, X_p$ where each $X_j$ is defined as
  \begin{align}
    X_j = \frac{u_t(i)}{P_i}A(i, \cdot), \quad \text{ with probability }P_i, \quad \text{ for }i \in [n].
  \end{align}
  The expected value of $X_j$ can be computed as
  \begin{align}
    \E(X_j) = \sum_{i=1}^n\frac{u_t(i)}{P_i}A(i, \cdot)P_i = u_t^{\dag}A = \sigma_tv_t^{\dag}.
  \end{align}
  Hence, we have
  \begin{align}
    \E(w_t) = \sigma_tv_t^{\dag}.
  \end{align}
  As in \cref{proc:samp_row}, $P_i = \sum_{\ell=1}^{\tau}\norm{A_{\ell}(i, \cdot)}^2/\sum_{\ell'=1}^{\tau}\pnorm{F}{A_{\ell'}}^2$. It follows that
  \begin{align}
    \E\left(\norm{w_t - \sigma_tv_t^{\dag}}^2\right) &= \frac{1}{p}\sum_{i=1}^{\tau}\frac{|u_t(i)|^2\norm{A(i, \cdot)}^2}{P_i} - \frac{\sigma_t^2}{p} \\
                                          &\leq \frac{1}{p}\sum_{i=1}^n|u_t(i)|^2\norm{\sum_{\ell=1}^mA_{\ell}(i, \cdot)}^2\frac{\sum_{\ell'=1}^{\tau}\pnorm{F}{A_{\ell'}}^2}{\sum_{\ell=1}^{\tau}\norm{A_{\ell}(i, \cdot)}^2} \\
                                          &\leq \frac{1}{p}\sum_{i=1}^n|u_t(i)|^2 \tau \sum_{\ell'=1}^{\tau}\pnorm{F}{A_{\ell'}}^2 \\
    \label{eq:wt-stvt}
    &= \frac{\tau}{p}\sum_{\ell=1}^{\tau}\pnorm{F}{A_{\ell}}^2,
  \end{align}
  where the second inequality follows from the Cauchy–Schwarz inequality, and the last equality follows from the fact that $u_t$ is a unit vector.

  If all the random variables $w_t$ happen to be $\sigma_tv_t^{\dag}$, then $\pnorm{F}{A-A\sum_{j=1}^{\hat{r}}w_tw_t^{\dag}}^2 = 0$. In the following, we bound the distance of $w_t$ from its expected value. For $t \in [\hat{r}]$, define $\hat{y}_t = \frac{1}{\sigma_t}w_t^{\dag}$. Let $\{y_1, \ldots, y_n\}$ be an orthonormal basis of $\bbc^n$ with $\sspan(\hat{y}_1, \ldots, \hat{y}_{\hat{r}}) = \sspan(y_1, \ldots, y_{\hat{r}'})$, where $\hat{r}'$ is the dimension of $\sspan(\hat{y}_1, \ldots, \hat{y}_{\hat{r}})$. Define the matrix $F$ as
  \begin{align}
    F = \sum_{t=1}^{\hat{r}'}Ay_ty_t^{\dag},
  \end{align}
  which will be used to approximate $A$. We also define an matrix $\widehat{F}$ as follows that will be used in the intermediate steps to bound the distance:
  \begin{align}
    \widehat{F} = \sum_{t=1}^{\hat{r}}Av_t\hat{y}_t^{\dag}.
  \end{align}
  We have
  \begin{align}
    \label{eq:a-f}
    \pnorm{F}{A-F}^2 = \sum_{t=1}^n\norm{(A-F)y_t}^2 = \sum_{t=\hat{r}'+1}^n\norm{Ay_t}^2 = \sum_{t=\hat{r}'+1}^n\norm{(A-\widehat{F})y_t}^2 \leq \pnorm{F}{A-\widehat{F}}^2,
  \end{align}
  where the third equality follows from the fact that $\hat{y}_i^{\dag}y_j = 0$ for all $i \leq \hat{r}$ and $j > \hat{r}'$.
  In addition, we have
  \begin{align}
    \pnorm{F}{A - \widehat{F}}^2 = \sum_{t=1}^n\norm{u_t^{\dag}(A - \widehat{F})}^2 = \sum_{t=1}^{\hat{r}}\norm{\sigma_tv_t^{\dag} - w_t}^2.
  \end{align}
  Taking the expected value $\pnorm{F}{A - \widehat{F}}^2$ and using \cref{eq:wt-stvt}, we have
  \begin{align}
    \E(\pnorm{F}{A - \widehat{F}}^2) \leq \frac{\hat{r}\tau}{p}\sum_{\ell=1}^{\tau}\pnorm{F}{A_{\ell}}^2.
  \end{align}
  Therefore,
  \begin{align}
    \Pr\left(\pnorm{F}{A-F} \geq \frac{10\hat{r}\tau}{p}\sum_{\ell=1}^{\tau}\pnorm{F}{A_{\ell}}^2\right) \leq \Pr\left(\pnorm{F}{A-\widehat{F}} \geq \frac{10\hat{r}\tau}{p}\sum_{\ell=1}^{\tau}\pnorm{F}{A_{\ell}}^2\right) \leq \frac{1}{10}.
  \end{align}
\end{proof}

Similarly, from $S$ to $W$ using \cref{proc:samp_col}, we have the following corollary.
\begin{corollary}
  \label{cor:s-syy}
  Let $A = A_1 + \cdots + A_{\tau} \in \bbc^{n \times n}$ be a Hermitian matrix where $A_{\ell} \in \bbc^{n \times n}$ is Hermitian and $\rank(A_{\ell}) \leq r$ for $\ell \in [\tau]$. The sampling access each $A_{\ell}$ is given as in \cref{defn:sampling-informal}. Apply \cref{proc:samp_row} and \cref{proc:samp_col} on $A$ (to obtain row indices $i_1, \ldots, i_p$ and column indices $j_1, \ldots, j_p$) and let $S$ be defined as \cref{eq:s} and $W$ be defined as in \cref{eq:w}. Then with probability at least $9/10$, there exists an orthonormal set of vectors $\{y_1, \ldots, y_{\hat{r}'}\}$ in the row space of $S$ such that
  \begin{align}
    \pnorm{F}{S - S\sum_{j=1}^{\hat{r}'}y_iy_i^{\dag}}^2 \leq \frac{10\hat{r}\tau(\tau+1)}{p}\sum_{\ell=1}^{\tau}\pnorm{F}{S_{\ell}}^2,
  \end{align}
  where $\hat{r}$ and $\hat{r}'$ satisfy $\hat{r}' \leq \hat{r} \leq \tau r$.
\end{corollary}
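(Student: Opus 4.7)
The plan is to apply \cref{lem:a-ayy} to the transposed matrix $S^\dag$---viewing column-sampling of $S$ via \cref{proc:samp_col} as row-sampling of $S^\dag$---and then perform a purely linear-algebraic reduction from the resulting left-multiplication bound to the right-multiplication form asserted in the corollary. The one computation beyond the proof of \cref{lem:a-ayy} is the lower bound on the marginal column-sampling distribution $P'_j = \frac{1}{p}\sum_{t=1}^p Q_{j\mid i_t}$, which was already established in the proof of \cref{cor:aass}:
\[
P'_j \;\geq\; \frac{\sum_{k=1}^\tau \norm{S_k(\cdot,j)}^2}{(\tau+1)\sum_{\ell=1}^\tau \pnorm{F}{S_\ell}^2}.
\]
This is exactly the hypothesis of \cref{lemma:mmnn}, with a $(\tau+1)$-fold slack that accounts for the extra factor of $\tau+1$ appearing in the final bound.

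The execution proceeds in three steps. Step 1: applying \cref{lem:a-ayy} to $S^\dag = \sum_\ell S_\ell^\dag$ (noting $\rank(S^\dag) = \rank(S) \leq \sum_\ell \rank(S_\ell) \leq \tau r$) with the row-sampling distribution $P'_j$ gives, with probability at least $9/10$, an orthonormal set $\{z_1,\ldots,z_{\hat r'}\}$ in the row space of the ``small matrix'' of this transposed picture---that is, in the column space of $W$---such that $\pnorm{F}{S^\dag - S^\dag\sum_i z_iz_i^\dag}^2 \leq \tfrac{10\hat r\tau(\tau+1)}{p}\sum_\ell \pnorm{F}{S_\ell}^2$, using $\pnorm{F}{S_\ell^\dag} = \pnorm{F}{S_\ell}$. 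Taking adjoints and setting $P := \sum_i z_iz_i^\dag$, we obtain $\pnorm{F}{S - PS}^2 \leq \tfrac{10\hat r\tau(\tau+1)}{p}\sum_\ell \pnorm{F}{S_\ell}^2$. Step 2: let $\{y_1,\ldots,y_{\hat r'}\}$ be any orthonormal basis (e.g., via Gram--Schmidt) for the row span of $PS$, viewed as a subspace of $\bbc^n$; by construction each $y_i$ lies in the row space of $S$, since rows of $PS$ are linear combinations of rows of $S$. Setting $Q := \sum_i y_iy_i^\dag$, we have $PS = (PS)Q$. Step 3: for any matrix $M$ with $M = MQ$,
\[
\langle S - SQ,\, M\rangle_F \;=\; \Tr(S^\dag M) - \Tr(QS^\dag M) \;=\; 0
\]
by $M = MQ$ and the cyclic property of the trace, so $S - SQ$ is Frobenius-orthogonal to every matrix with row span in $\sspan(y_i)$. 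Applied to $M = SQ - PS$ (which satisfies $MQ = M$ since $Q^2 = Q$ and $PS \cdot Q = PS$), Pythagoras gives $\pnorm{F}{S - PS}^2 = \pnorm{F}{S - SQ}^2 + \pnorm{F}{SQ - PS}^2$, hence $\pnorm{F}{S - SQ}^2 \leq \pnorm{F}{S - PS}^2$, which finishes the proof with the desired constant.

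The main obstacle is the directional mismatch between the natural output of the transposed FKV analysis and the statement of the corollary: column-sampling of $S$ naturally yields vectors in $\bbc^p$ (the column space of $W$) and therefore a bound on $\pnorm{F}{S - PS}^2$, whereas the corollary asserts a bound on $\pnorm{F}{S - SQ}^2$ with the $y_i$'s living in the row space of $S \subseteq \bbc^n$. The Frobenius-orthogonality reduction in Step 3---built on the elementary fact that orthogonal projections are Frobenius-minimizers within their image---is the key bridge between these two viewpoints; once in place, everything else is a direct transcription of the proof of \cref{lem:a-ayy}, with the uniform $(\tau+1)$ factor traced back to \cref{claim:asa} being the only arithmetic change.
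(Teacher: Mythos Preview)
Your proof is correct. Step~1 is precisely the paper's own argument: the paper says only that the proof ``is similar to that of \cref{lem:a-ayy}, except that the sampling probability satisfies $P'_j \geq \frac{\sum_k \norm{S_k(\cdot,j)}^2}{(\tau+1)\sum_\ell \pnorm{F}{S_\ell}^2}$,'' and this is exactly your transposed FKV analysis with the extra $(\tau+1)$ factor traced to \cref{claim:asa}.

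Where you go beyond the paper is in taking the corollary's phrasing literally. The direct analog of \cref{lem:a-ayy} produces orthonormal vectors in the row space of $W^\dag$ (i.e., in $\bbc^p$) together with a bound on $\pnorm{F}{S^\dag - S^\dag\sum_i z_iz_i^\dag}^2$; this is also how the corollary is actually invoked in the proof of \cref{lemma:modfkv}, where the vectors are explicitly placed ``in the row space of $W^\dag$'' and then fed into $\Delta(W^\dag; z_t)$. So the printed statement (``row space of $S$'' with right-multiplication) appears to be a transposition slip. Your Steps~2--3 supply a clean Pythagorean bridge from the left-projection bound $\pnorm{F}{S - PS}^2$ to the right-projection bound $\pnorm{F}{S - SQ}^2$ with $y_i$ in the row space of $S$; this is correct and is additional content beyond the paper's sketch, but it is not needed downstream since the $\bbc^p$ form is what the paper actually uses.
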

The proof of this corollary is similar to that of \cref{lem:a-ayy}, except that the sampling probability satisfies
\begin{align}
  P_j' \geq \frac{\sum_{\ell'=1}^{\tau}\norm{S_{\ell'}(\cdot, j)}^2}{(\tau+1)\sum_{\ell=1}^{\tau}\pnorm{F}{S_{\ell}}^2}.
\end{align}

Now, we introduce a new notation which is defined in~\cite{FKV04} for the proofs in this section. For a matrix $M$ and a set of vectors $x_i$, $i\in \mathcal{I}$.
\begin{eqnarray}
  \Delta(M;x_i,i \in \mathcal{I}) := \|M\|^2_F - \pnorm{F}{M-M\sum_{i\in \mathcal{I}}x_ix_i^{\dag}}^2 .
\end{eqnarray}
Note that when $x_i$ forms a set of orthogonal unit vectors,
\begin{eqnarray}
  \Delta(M;x_i,i \in \mathcal{I}) &=& \sum_{i\in \mathcal{I}} x_i^{\dag} M^{\dag}M x_i.
\end{eqnarray}

The following lemma adapted from~\cite[Lemma 3]{FKV04} will be useful for our analysis.
\begin{lemma}
  \label{lem:atos}
  Let $A = \sum_{\ell=1}^{\tau} A_\ell$ where $A_\ell\in \mathbb{C}^{n\times n}$ for $\ell\in [\tau]$. Let $A$ and $S\in \mathbb{C}^{k\times n}$ be matrices with same number of columns, and $\|A^{\dag}A - S^{\dag}S\|_F\leq \theta \sum_{\ell=1}^{\tau}\|A_\ell\|^2_F$. Then,
\begin{enumerate}
    \item For any unit vectors $z$ and $z'$ in the row space of $A$,
\begin{align}
      |z^{\dag}A^{\dag}Az' - z^{\dag}S^{\dag}Sz'|\leq \theta\sum_\ell \|A_\ell\|_F^2.
\end{align}
    \item For any set of unit vectors $z_1,\dots,z_h$ in the row space of $A$ and $h\leq k$,
\begin{align}
      |\Delta(A;z_i,i\in [h])-\Delta(S;z_i,i\in [h])|\leq k^2\theta\left(\sum_{\ell=1}^{\tau} \|A_\ell\|_F^2\right).
\end{align}
\end{enumerate}
\end{lemma}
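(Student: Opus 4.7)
Statement 1 is essentially a one-line observation. Writing $z^{\dag}A^{\dag}Az' - z^{\dag}S^{\dag}Sz' = z^{\dag}(A^{\dag}A - S^{\dag}S)z'$, I would bound the bilinear form by the operator norm of $A^{\dag}A - S^{\dag}S$ (which is upper bounded by its Frobenius norm) together with $\|z\|=\|z'\|=1$. The hypothesis $\|A^{\dag}A - S^{\dag}S\|_F \leq \theta\sum_\ell\|A_\ell\|_F^2$ then delivers the stated bound directly. Nothing here uses the rank structure or the fact that $z,z'$ live in the row space of $A$; the row-space hypothesis in the statement is just what makes the conclusion useful later in the FKV-style argument.

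For statement 2 the plan is to reduce everything to statement 1 by expanding $\Delta$ as a trace polynomial in $Z := \sum_{i=1}^h z_iz_i^{\dag}$. Expanding $\|M - MZ\|_F^2$ using $Z = Z^{\dag}$ and cyclicity of trace gives
\begin{align*}
\Delta(M; z_i, i\in[h]) \;=\; 2\Tr[Z M^{\dag}M] \;-\; \Tr[Z M^{\dag}M Z].
\end{align*}
Subtracting the two versions yields
\begin{align*}
\Delta(A;z_i) - \Delta(S;z_i) \;=\; 2\Tr[Z(A^{\dag}A - S^{\dag}S)] \;-\; \Tr[Z(A^{\dag}A - S^{\dag}S)Z].
\end{align*}
I would then expand the first trace as $\sum_{i=1}^h z_i^{\dag}(A^{\dag}A - S^{\dag}S)z_i$ and apply statement 1 to each of the $h$ summands, producing a bound of $h\theta\sum_\ell\|A_\ell\|_F^2$. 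For the second trace I would write $Z(A^{\dag}A - S^{\dag}S)Z = \sum_{i,j} z_iz_i^{\dag}(A^{\dag}A - S^{\dag}S)z_jz_j^{\dag}$, take the trace to get $\sum_{i,j}(z_j^{\dag}z_i)\,z_i^{\dag}(A^{\dag}A - S^{\dag}S)z_j$, and again use statement 1 on each of the $h^2$ inner bilinear forms together with $|z_j^{\dag}z_i| \leq 1$. Combining gives $(2h + h^2)\theta\sum_\ell\|A_\ell\|_F^2$, which is absorbed into $k^2\theta\sum_\ell\|A_\ell\|_F^2$ under $h\leq k$ (up to harmless constants that can be reallocated into $\theta$ if one wants the constant to be exactly $1$, or one uses $h^2+2h\leq 3k^2$).

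The only conceptual subtlety, and what accounts for the $k^2$ rather than $k$ in the conclusion, is that the $z_i$ are merely unit vectors rather than orthonormal. If they were orthonormal then $Z^2 = Z$, the ``double-$Z$'' trace would collapse to the single-$Z$ trace, and one would save a factor of $k$. In the general case the $h^2$ cross terms $(z_j^{\dag}z_i)\,z_i^{\dag}(A^{\dag}A - S^{\dag}S)z_j$ cannot be killed and each must be charged individually to statement 1. I do not anticipate any real obstacle beyond bookkeeping; there is no technical hurdle, just a careful expansion and a uniform application of statement 1 to each bilinear form that appears.
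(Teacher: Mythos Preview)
Your approach is essentially identical to the paper's: both prove statement~1 by bounding $|z^{\dag}(A^{\dag}A-S^{\dag}S)z'|$ via $\|A^{\dag}A-S^{\dag}S\|\leq\|A^{\dag}A-S^{\dag}S\|_F$, and both prove statement~2 by expanding $\Delta(M;z_i)=2\Tr[ZM^{\dag}M]-\Tr[ZM^{\dag}MZ]$ and applying statement~1 termwise.

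The one refinement you are missing is a cancellation that gives the constant $k^{2}$ exactly rather than your $2h+h^{2}$. Since the $z_i$ are unit vectors, the diagonal part of $\Tr[ZM^{\dag}MZ]=\sum_{i,i'}(z_i^{\dag}z_{i'})\,z_{i'}^{\dag}M^{\dag}Mz_i$ is $\sum_i z_i^{\dag}M^{\dag}Mz_i=\Tr[ZM^{\dag}M]$, so
\[
\Delta(M;z_i)=\sum_i z_i^{\dag}M^{\dag}Mz_i \;-\; \sum_{i\neq i'}(z_i^{\dag}z_{i'})\,z_{i'}^{\dag}M^{\dag}Mz_i.
\]
Taking the difference and applying statement~1 now gives $h$ diagonal terms plus $h^{2}-h$ off-diagonal terms, for a total of $h^{2}\leq k^{2}$. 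With this observation your proof matches the paper's and yields the stated constant without any slack.
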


\begin{proof}
The first part is true by following the submultiplicity of matrix norms.
\begin{align}
  |zA^{\dag}Az' - zS^{\dag}Sz'| &= |z(A^{\dag}A - S^{\dag}S)z'|\\
                                &\leq \norm{z}\|A^{\dag}A - S^{\dag}S\|\\
                                &\leq \|A^{\dag}A - S^{\dag}S\| \leq \theta\sum_{\ell=1}^{\tau}  \|A_\ell\|^2_F.
\end{align}

For the second part of the lemma, we see that
\begin{align}
  \Delta(A;z_i,i\in [h]) &= \|A\|^2_F - \|A-A\sum_iz_iz_i^{\dag}\|_F^2\\
                         &= \Tr(AA^{\dag}) -\Tr\left((A-A\sum_iz_iz_i^{\dag})(A^{\dag}-\sum_iz_iz_i^{\dag}A^{\dag})\right)\\
                         &= 2\Tr\left(A(\sum_iz_iz_i^{\dag})A^{\dag}\right) - \Tr\left(A(\sum_iz_iz_i^{\dag})(\sum_iz_iz_i^{\dag})A^{\dag}\right)\\
                         &= 2\sum_{i}z_i^{\dag}A^{\dag}Az_{i} - \sum_{i}z_i^{\dag}A^{\dag}Az_{i} -\sum_{i\neq i'} (z_i^{\dag}z_{i'})z_{i'}^{\dag}A^{\dag}Az_i\\
                         &=\sum_{i}z_i^{\dag}A^{\dag}Az_{i} -\sum_{i\neq i'} (z_i^{\dag}z_{i'})z_{i'}^{\dag}A^{\dag}Az_i.
\end{align}
Similarly, we have
\begin{align}\label{eq:referred-by-eq153}
\Delta(S;z_i,i\in [h]) = \sum_{i}z_i^{\dag}S^{\dag}Sz_{i} -\sum_{i\neq i'} (z_i^{\dag}z_{i'})z_{i'}^{\dag}S^{\dag}Sz_i.
\end{align}
Then, by applying the first part of the lemma,
\begin{align}
|\Delta(A;z_i,i\in [h])-\Delta(S;z_i,i\in [h])| &=
|(\sum_{i}z_i^{\dag}A^{\dag}Az_{i} -\sum_{i\neq i'} (z_i^{\dag}z_{i'})z_{i'}^{\dag}A^{\dag}Az_i) \nonumber\\
&\quad- (\sum_{i}z_i^{\dag}S^{\dag}Sz_{i} -\sum_{i\neq i'} (z_i^{\dag}z_{i'})z_{i'}^{\dag}S^{\dag}Sz_i)|\\
&\leq |\sum_{i}z_i^{\dag}A^{\dag}Az_{i} - \sum_{i}z_i^{\dag}S^{\dag}Sz_{i}| \nonumber\\
&\quad-|\sum_{i\neq i'} (z_i^{\dag}z_{i'})z_{i'}^{\dag}A^{\dag}Az_i - \sum_{i\neq i'} (z_i^{\dag}z_{i'})z_{i'}^{\dag}S^{\dag}Sz_i|\\
&\leq k\theta \sum_\ell \|A_\ell\|_F^2 + (k^2-k) \theta \sum_\ell \|A_\ell\|_F^2 \\
&= k^2\theta \sum_\ell \|A_\ell\|_F^2.
\end{align}
\end{proof}

Now, we are ready to prove \cref{lemma:modfkv}.
\begin{proof}[Proof of \cref{lemma:modfkv}]
  In this proof, we assume that \cref{eq:aass,eq:ssww} hold. (Note that \cref{cor:aass} states that these hold with high probability.)

  We first show two facts that will be used later. The first fact states that the vectors $v_t$ are almost unit vectors:
\\\\
  \noindent
  \textbf{Fact 1.}
  \begin{align}
    \label{eq:fact1}
    \norm{v_t}^2 \leq 1 + \frac{\theta(\tau+1)}{\gamma}.
  \end{align}
  To see this, observe first that the first part of \cref{lem:atos} implies
  \begin{align}
  \norm{S^\dag u_t}^2 -\norm{W^\dag u_t}^2 \leq \theta \sumnormF{S}.
  \end{align}
  Therefore,
  \begin{align} \label{eq:normofv}
  \left|\frac{\norm{S^\dag u_t}2}{\norm{W^\dag u_t}^2}  -1 \right| \leq \frac{\theta \sumnormF{S} }{ \gamma \sumnormF{W} } \leq \frac{\theta (\tau+1)}{\gamma}.
  \end{align}
  by  the bound in \cref{alg:approx_v} and \cref{claim:asa}. \cref{eq:fact1} follows immediately.
\\\\
  \noindent
  \textbf{Fact 2.}
  \begin{align}
    \label{eq:fact2}
    \Delta(S;v_t,t\in T)\geq \Delta(S;u_t,t\in T) - \L(\frac{\theta(\tau+1)}{\gamma}+ \frac{6\tau(\tau+1)^2 \theta^2}{\gamma^2}\R) \sum_\ell \pnorm{F}{A_\ell}^2.
  \end{align}
  To show this, observe that by \cref{cor:aass} we have
  \begin{align} \label{eq:ssss-wwww}
    \pnorm{F}{SS^\dag S S^\dag -W W^\dag W W^\dag}  &\leq \normF{SS^\dag (S S^\dag -W W^\dag)} +\normF{(S S^\dag -W W^\dag) W W^\dag} \\
&\leq \theta \sum_\ell \normF{S_\ell}^2 (\normF{S}^2 +\normF{W}^2).
  \end{align}
  and that for $T\neq  T'\in T$,
  \begin{align} \label{eq:uwwu}
    u_t^\dag W W^\dag u_{t'}^\dag =  u_t^\dag W W^\dag W W^\dag u_{t'}^\dag =0.
  \end{align}
  Now consider  $T\neq  T'\in T$. Recall that $v_t=\frac{S^\dag u_t}{\norm{W^\dag u_t}}$, we have
  \begin{align} \label{eq:claim_cross_term}
    (v_t^\dag v_{t'} ) (v_t^\dag S^\dag S v_{t'} ) =\frac{ (u_t^\dag S S^\dag u_{t'} )(u_t^\dag S S^\dag S S^\dag u_{t'} )}{\norm{W^\dag u_t}^2\norm{W^\dag u_{t'}}^2}.
  \end{align}
  This is the cross term to be used in the calculation of \cref{eq:fact2}. We now bounds its norm. First note that
  \begin{align} \label{eq:ussu}
    \norm{u_t^\dag S S^\dag u_{t'}} = \norm{u_t^\dag (S S^\dag -W W^\dag )u_{t'}} \leq \theta \sum_\ell \normF{S_\ell}^2,
  \end{align}
  where we used \cref{eq:uwwu} in the equality. Similarly, using \cref{eq:ssss-wwww},
  \begin{align} \label{eq:ussssu}
    \norm{u_t^\dag S S^\dag S S^\dag u_{t'} }=\norm{u_t^\dag (S S^\dag S S^\dag -W W^\dag W W^\dag) u_{t'} } \leq \theta \sum_\ell \normF{S_\ell}^2 (\normF{S}^2 +\normF{W}^2).
  \end{align}
  Using \cref{eq:ussu}, \cref{eq:ussssu}, and  the bound in \cref{alg:approx_v} in \cref{eq:claim_cross_term},
  \begin{align}
    \norm{(v_t^\dag v_{t'} ) (v_t^\dag S^\dag S v_{t'} )} &\leq \frac{\theta^2 \L(\sum_\ell \normF{S_\ell}^2 \R)^2 \L(\normF{S}^2 +\normF{W}^2 \R) }{\gamma^2 \sum_\ell (\normF{W_\ell}^2)^2} \\
                                                          &\leq \frac{\theta^2 \L(\sum_\ell \normF{S_\ell}^2 \R)^2 \L(\tau \sum_\ell \normF{S_\ell}^2 +\tau \sum_\ell \normF{W_\ell}^2 \R) }{\gamma^2 \sum_\ell (\normF{W_\ell}^2)^2} \label{eq:claim1corss-2} \\
                                                          &\leq \frac{\tau \theta^2 \L(\sum_\ell \normF{S_\ell}^2 \R)^2 \L( \sum_\ell \normF{S_\ell}^2 + 2 \sum_\ell \normF{S_\ell}^2 \R) }{\gamma^2 \sum_\ell \L((\frac{1}{\tau+1} \normF{S_\ell}^2\R)^2} \label{eq:claim1corss-3} \\
                                                          &\leq \frac{6\tau(\tau+1)^2 \theta^2}{\gamma^2} \sumnormF{A},  \label{eq:claim1corss-4}
  \end{align}
  where we used Cauchy-Schwarz inequality in \cref{eq:claim1corss-2} and \cref{claim:asa} in \cref{eq:claim1corss-3,eq:claim1corss-4}, respectively. Next, we bound the diagonal terms.  By Chauchy-Schwarz inequality
  \begin{align}
    \norm{u} \norm{S S^\dag u} \geq u^\dag S S^\dag u = \norm{S^\dag u}^2.
  \end{align}
  So for $t\in [\tilde{r}]$,
  \begin{align}
    v_t^\dag S^\dag S v_t = \frac{u_t^\dag S S^\dag S S^\dag u_t}{\norm{W^\dag u_t}^2} \geq \frac{\norm{S^\dag u_t}^4}{\norm{W^\dag u_t}^2}.
  \end{align}
  We then have
  \begin{align}
    \sum_{t\in [\tilde{r}]} v_t^\dag S^\dag S v_t &\geq \sum_{t \in [\tilde{r}]} \frac{\norm{S^\dag u_t}^4}{\norm{W^\dag u_t}^2} \\
                                                 &= \sum_{t \in [\tilde{r}]} (u_t^\dag S^\dag S u_t )\L( \frac{\norm{S^\dag u_t}^2}{\norm{W^\dag u_t}^2} \R) \\
                                                 &\geq \L(1 -\frac{\theta(\tau+1)}{\gamma}\R)  \sum_{t \in [\tilde{r}]} u_t\ct S S\ct u_t \label{eq:claim1diag-3} \\
                                                 &= \L(1 -\frac{\theta(\tau+1)}{\gamma}\R) \Delta(S\ct; u_t, t\in [\tilde{r}]), \label{eq:claim1diag}
  \end{align}
  where we used \cref{eq:normofv} in \cref{eq:claim1diag-3}. Putting \cref{eq:claim1corss-4,eq:claim1diag} into \cref{eq:referred-by-eq153} of \cref{lem:atos},
  \begin{align}
    \Delta(S;v_t,t\in [\tilde{r}]) &=\sum_{t \in [\tilde{r}]} v_t\ct S\ct S v_t -\sum_{t\neq t'} (v_t\ct c_{t'}) v_{t'}\ct S\ct S v_t \\
                         &\geq \L(1 -\frac{\theta(\tau+1)}{\gamma} \R) \Delta(S\ct;u_t,t\in [\tilde{r}]) - \frac{6\tau(\tau+1)^2 \theta^2\tilde{r}^2}{\gamma^2} \sumnormF{A} \\
                         & \geq \Delta(S\ct;u_t,t\in [\tilde{r}]) -\L(\frac{2\tau\theta(\tau+1)}{\gamma}+ \frac{6\tau(\tau+1)^2 \theta^2\tilde{r}^2}{\gamma^2}\R)\sumnormF{A},
  \end{align}
  where we used $\Delta(S\ct;u_t,t\in [\tilde{r}]) \leq \normF{S}^2 \leq 2\tau \sumnormF{A}$ in the last line. This completes the proof of \cref{eq:fact2}.

  By \cref{lemma:modfkv}, with probability at least $9/10$, there exist orthonormal vectors $x_t$ for $t \in [\hat{r}']$ (for $\hat{r}' \leq \hat{r}$) in the row space of $S$ such that
  \begin{align}
    \Delta(A; x_t, t \in [\hat{r}']) \geq \pnorm{F}{A}^2 - \pnorm{F}{A - A\sum_{t=1}^{\hat{r}'}x_tx_t^{\dag}}^2 \geq \pnorm{F}{A}^2 - \frac{10\hat{r}\tau}{p}\sum_{\ell=1}^{\tau}\pnorm{F}{A_{\ell}}^2.
  \end{align}
  According to the statement 2 of \cref{lem:atos}, we have
  \begin{align}
    \Delta(S; x_t, t \in [\hat{r}']) \geq \Delta(A; x_t, t \in [\hat{r}']) - \hat{r}'^2\theta\sum_{\ell=1}^{\tau}\pnorm{F}{A_{\ell}}^2 \geq \pnorm{F}{A}^2 - \left(\frac{10\hat{r}\tau}{p} + \hat{r}'^2\theta\right)\sum_{\ell=1}^{\tau}\pnorm{F}{A_{\ell}}^2.
  \end{align}
  Since $S$ and $S^{\dag}$ have the same singular values, there exist orthonormal vectors $y_t$ for $t \in [\hat{r}']$ in the row space of $S^{\dag}$ satisfying
  \begin{align}
    \Delta(S^{\dag}; y_t, t \in [\hat{r}']) \geq \pnorm{F}{A}^2 - \left(\frac{10\hat{r}\tau}{p} + \hat{r}'^2\theta\right)\sum_{\ell=1}^{\tau}\pnorm{F}{A_{\ell}}^2.
  \end{align}
  Now, applying \cref{cor:s-syy}, it holds that with probability at least $9/10$, there exist orthonormal vectors $z_t$ for $t \in [\hat{r}']$ in the row space of $W^{\dag}$ such that
  \begin{align}
    \Delta(S^{\dag}; z_t, t \in [\hat{r}']) &= \pnorm{F}{S}^2 - \pnorm{F}{S - S\sum_{j=1}^{\hat{r}'}z_jz_j^{\dag}}^2 \\
                                            &\geq \pnorm{F}{S}^2 - \frac{10\hat{r}\tau(\tau+1)}{p}\sum_{\ell=1}^{\tau}\pnorm{F}{S_{\ell}}^2 \\
                                            &\geq \pnorm{F}{S}^2 - \pnorm{F}{S - S\sum_{j=1}^{\hat{r}'}y_jy_j^{\dag}}^2 - \frac{10\hat{r}\tau(\tau+1)}{p}\sum_{\ell=1}^{\tau}\pnorm{F}{S_{\ell}}^2 \\
                                            &\geq \Delta(S^{\dag}; y_t, t \in [\hat{r}']) - \frac{10\hat{r}\tau(\tau+1)}{p}\sum_{\ell=1}^{\tau}\pnorm{F}{S_{\ell}}^2 \\
                                            &\geq \pnorm{F}{A}^2 - \left(\frac{20\hat{r}\tau(\tau+1)}{p} + \hat{r}'^2\theta\right)\sum_{\ell=1}^{\tau}\pnorm{F}{A_{\ell}}^2.
  \end{align}
  Again, by the statement 2 of \cref{lem:atos}, we have
  \begin{align}
    \Delta(W^{\dag}; z_t, t \in [\hat{r}']) &\geq \Delta(S^{\dag}; z_t, t\in [\hat{r}']) - \hat{r}'^2\theta\sum_{\ell=1}^{\tau}\pnorm{F}{S_{\ell}}^2 \\
                                            &\geq \pnorm{F}{A}^2 - \left(\frac{20\hat{r}\tau(\tau+1)}{p} + \hat{r}'^2\theta\right)\sum_{\ell=1}^{\tau}\pnorm{F}{A_{\ell}}^2 - \hat{r}'^2\theta\sum_{\ell=1}^{\tau}\pnorm{F}{S_{\ell}}^2 \\
                                            &\geq \pnorm{F}{A}^2 - \left(\frac{20\hat{r}\tau(\tau+1)}{p} + 3\hat{r}'^2\theta\right)\sum_{\ell=1}^{\tau}\pnorm{F}{A_{\ell}}^2.
  \end{align}
  As $u_t$ for $t \in [\hat{r}']$ (computed in \cref{alg:approx_v}) are the left singular vectors of $W$, we have
  \begin{align}
    \Delta(W^{\dag}; u_t, t \in [\hat{r}']) \geq \Delta(W^{\dag}; z_t, t \in [\hat{r}']) \geq \pnorm{F}{A}^2 - \left(\frac{20\hat{r}\tau(\tau+1)}{p} + 3\hat{r}'^2\theta\right)\sum_{\ell=1}^{\tau}\pnorm{F}{A_{\ell}}^2.
  \end{align}

  Recall that $u_t$ for $t \in [\tilde{r}]$ are the singular vectors after the filter in \cref{alg:approx_v} and $u_t$ for $t \in [\hat{r}']$ are orthonormal vectors. It holds that
  \begin{align}
    \Delta(W^{\dag}; u_t, t\in[\tilde{r}]) &\geq \Delta(W^{\dag}; u_t, t\in[\hat{r}']) - \hat{r}\gamma\sum_{\ell=1}^{\tau}\pnorm{F}{W_{\ell}}^2 \\
                                              & \geq \pnorm{F}{A}^2 - \left(\frac{20\hat{r}'\tau(\tau+1)}{p} + 3\hat{r}'^2\theta + 4\hat{r}\gamma\right)\sum_{\ell=1}^{\tau}\pnorm{F}{A_{\ell}}^2.
  \end{align}
  Now, we apply the statement 2 of \cref{lem:atos} one more time. It follows that
  \begin{align}
    \Delta(S^{\dag}; u_t, t \in [\tilde{r}]) &\geq \Delta(W^{\dag}; u_t, t\in[\tilde{r}]) - \tilde{r}^2\theta\sum_{\ell=1}^{\tau}\pnorm{F}{S_{\ell}}^2 \\
                                                & \geq \pnorm{F}{A}^2 - \left(\frac{20\hat{r}\tau(\tau+1)}{p} + 5\hat{r}^2\theta + 4\hat{r}\gamma\right)\sum_{\ell=1}^{\tau}\pnorm{F}{A_{\ell}}^2.
  \end{align}
  \cref{lem:atos} implies that
  \begin{align}
    \Delta(A; v_t, t \in [\tilde{r}]) &\geq \Delta(S; v_t, t \in [\tilde{r}]) - \left(1+ \frac{\theta(\tau+1)}{\gamma}\right)\tilde{r}^2\theta\sum_{\ell=1}^{\tau}\pnorm{F}{A}^2 \\
                                          &\geq \pnorm{F}{A}^2 - \left(\frac{20\hat{r}\tau(\tau+1)}{p} + 5\hat{r}^2\theta + 4\hat{r}\gamma + \left(1+\frac{\theta(\tau+1)}{\gamma}\right)\hat{r}^2\theta+\right.\\
                                          &\left.\left(\frac{2\tau(\tau+1)\theta}{\gamma}+\frac{6\tau(\tau+1)^2\theta^2\hat{r}^2}{\gamma^2}\right)\right)\sum_{\ell=1}^{\tau}\pnorm{F}{A_{\ell}}^2.
  \end{align}
  By the definition of the $\Delta$ function, we conclude that
  \begin{align}
    \pnorm{F}{A - A\sum_{t \in [\tilde{r}]}v_tv_t^{\dag}}^2 &\leq  \left(\frac{20\hat{r}\tau(\tau+1)}{p} + 5\hat{r}^2\theta + 4\hat{r}\gamma + \left(1+\frac{\theta(\tau+1)}{\gamma}\right)\hat{r}\theta \right.\\
                                                               &\left. + \left(\frac{2\tau(\tau+1)\theta}{\gamma}+\frac{6\tau(\tau+1)^2\theta^2\hat{r}^2}{\gamma^2}\right)\right)\sum_{\ell=1}^{\tau}\pnorm{F}{A_{\ell}}^2.
  \end{align}
  The claim bound follows by the choice of $p$ and $\gamma$ in \cref{step:parameter} of \cref{alg:approx_v}, $\theta$ in \cref{cor:aass}, and the fact that $\hat{r} \leq \tau r$.
\end{proof}

Now we are ready to prove \cref{thm:vvavv_bound}.
\thmvvavv*

\begin{proof}
  By \cref{lemma:modfkv}, we have
  \begin{align}
    \pnorm{F}{AVV^{\dag} - A} \leq \frac{\epsilon}{300r^2}.
  \end{align}
  By taking adjoint, we have
  \begin{align}
    \pnorm{F}{VV^{\dag}A - A} \leq \frac{\epsilon}{300r^2}.
  \end{align}
  Then,
  \begin{align}
    \pnorm{F}{VV^{\dag}AVV^{\dag} - A} &\leq \pnorm{F}{VV^{\dag}AVV^{\dag} - AVV^{\dag}} + \pnorm{F}{AVV^{\dag} - A} \\
                                       &\leq \ef\left(1+\ev\right) + \ef,
  \end{align}
  where the last inequality follows from \cref{lemma:almost-orthonormal}. Then the result follows.
\end{proof}


By using \cref{claim:tr_prod}, we approximate $V^\dag AV$ as follows.

\vavb*

\begin{proof}
Let $B_t = V^{\dag}A_tV$ for $t\in [\tau]$ and $B= \sum_{t=1}^{\tau} B_t$. Note that $B_t(i,j) = V^{\dag}(i,\cdot)A_tV(\cdot,j) = \Tr[A_t (V(\cdot,j) \times V^{\dag}(i,\cdot))]$, where $\times$ denotes outer product, and we can get one query access to $(V(\cdot,j) \times V^{\dag}(i,\cdot))$ by using two queries to $V$. There, by \cref{claim:tr_prod}, one can estimate $B_t(i,j)$ with error at most $\epsilon_s/\tilde{r}\tau$ with probability $1-\frac{2\delta}{\tau(\tilde{r}^2+\tilde{r})}$ by using
\[
  O(\|A_t\|_F^2\|V(i,\cdot)\|\|V(j,\cdot)\|\frac{\tilde{r}^2\tau^2}{\epsilon^2_s}\log\frac{(\tilde{r}^2+\tilde{r})\tau}{2\delta} (p+\log n))
\]
queries. We denote the estimation to $B_t(i,j)$ as $\tilde{B}_t(i,j)$.

Since $A_t$ is a Hermitian matrix, we only need to compute $(\tilde{r}^2+\tilde{r})/2$ elements. By union bound over $i,j$, for all $t\in [\tau]$ we have
\begin{align}
    \Pr\L[|B_t(i,j) - \tilde{B}_t(i,j)|\leq \epsilon_s/\tilde{r}\tau\,\,\mbox{for all }\,i,j\in [\tilde{r}]\R]\geq 1-\delta/\tau.
\end{align}
Then, by union bound over $t$, we have
\begin{align}\label{eqn:bt}
    \Pr\L[|B_t(i,j) - \tilde{B}_t(i,j)|\leq \epsilon_s/\tilde{r}\tau\,\mbox{ for all }\,i,j\in [\tilde{r}],\, t\in[\tau]\R]\geq 1-\delta.
\end{align}

 Let $\tilde{B} = \sum_t \tilde{B}_t$. With probability $1-\delta$,
\begin{eqnarray}
    \|B - \tilde{B}\|_F &\leq& \sum_{t=1}^{\tau} \|B_t - \tilde{B}_t\|_F \nn\\
    &\leq&\tau \sqrt{\tilde{r}^2 (\epsilon^2_s/\tilde{r}^2\tau^2)} = \epsilon_s.
\end{eqnarray}
The time complexity is
\begin{align*}
&O(\|A_t\|_F^2\|V(i,\cdot)\|\|V(j,\cdot)\|\frac{\tilde{r}^2\tau^2}{\epsilon^2_s}\left(\log\frac{(\tilde{r}^2+\tilde{r})\tau}{2\delta}\right)(\log n +p) \frac{\tau(\tilde{r}^2+\tilde{r})}{2} \\
&=O\left((p+\log n)\frac{r^{5}\tau^3}{\epsilon_s^2}\log\frac{1}{\delta}\right).
\end{align*}
The equality follows from the facts that $\tilde{r} \leq r$, $\|A_t\|_F \leq \sqrt{r}$ for all $t$ (as $\norm{A_t} \leq 1$), and $\|V(i,\cdot)\|\leq 2$ for all $i$.
\end{proof}

Then, we prove that the matrix multiplication of an isometry and a matrix satisfying \cref{lemma:almost-orthonormal} is still close to an isometry.

\begin{corollary} \label{lem:almost_orth_2}
Let $U\in \mathbb{C}^{r\times \tilde{r}}$ be a unitary matrix and $V\in \mathbb{C}^{n\times \tilde{r}}$ be a matrix which satisfies \cref{lemma:almost-orthonormal} with error parameter $\ev$. Then the following properties hold for the matrix $VU$.
\begin{enumerate}
    \item There exists an isometry $W\in \mathbb{C}^{n\times r}$ such that $W$ spans the column space of $VU$ and $\|VU-W\|_F\leq \ev$.
    \item $|\|VU\|-1|\leq \ev$.
    \item $\|(VU)^{\dag}(VU)-I_r\|_F \leq \ev$.
    \item Let $\Pi_{VU}$ be the projector of the column space of $UV$. Then $\|(VU)(VU)^{\dag}-\Pi_{VU}\|_F\leq \frac{3\epsilon}{300r^2(\tau+1)}$.
\end{enumerate}
\end{corollary}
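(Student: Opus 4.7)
The plan is to lift the isometry guaranteed by \cref{lemma:almost-orthonormal} through the unitary $U$. Specifically, let $\tilde{V}\in\mathbb{C}^{n\times\tilde{r}}$ denote the isometry whose columns span the column space of $V$ and satisfy $\|V-\tilde{V}\|_F\le\ev$, as given by statement~1 of \cref{lemma:almost-orthonormal}. I then define $W:=\tilde{V}U$ as the candidate isometry for statement~1 of the corollary. Checking the isometry property is immediate from $W^\dag W=U^\dag\tilde{V}^\dag\tilde{V}U=U^\dag U=I$, and the column-space claim follows because $U$ is invertible, so the column space of $\tilde{V}U$ equals the column space of $\tilde{V}$, which in turn equals the column space of $V$ (and thus of $VU$). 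The Frobenius bound is obtained by
\[
\|VU-W\|_F=\|(V-\tilde{V})U\|_F\le\|V-\tilde{V}\|_F\,\|U\|\le\ev,
\]
using $\|U\|=1$.

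For statement~2, the reverse triangle inequality gives
$\bigl|\|VU\|-1\bigr|=\bigl|\|VU\|-\|W\|\bigr|\le\|VU-W\|\le\|VU-W\|_F\le\ev$. Statement~3 follows from the unitary invariance of the Frobenius norm: $\|(VU)^\dag(VU)-I\|_F=\|U^\dag(V^\dag V-I)U\|_F=\|V^\dag V-I\|_F$, which is at most $\ev$ by statement~4 of \cref{lemma:almost-orthonormal}.

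The only statement requiring a small amount of care is statement~4. Since $W$ from statement~1 is an isometry spanning the column space of $VU$, we have $\Pi_{VU}=WW^\dag$. Writing
\[
(VU)(VU)^\dag-WW^\dag=(VU)\bigl((VU)^\dag-W^\dag\bigr)+(VU-W)W^\dag
\]
and taking Frobenius norms yields
\[
\|(VU)(VU)^\dag-\Pi_{VU}\|_F\le\|VU\|\,\|VU-W\|_F+\|VU-W\|_F\,\|W\|\le(1+\ev)\ev+\ev\le 3\ev,
\]
using statement~2 for the bound on $\|VU\|$ and $\|W\|=1$. This gives precisely the $3\epsilon/\bigl(300r^2(\tau+1)\bigr)$ bound in the statement.

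There is no real obstacle here: the entire corollary reduces to composing $\tilde{V}$ with $U$ and applying unitary invariance of the Frobenius norm together with the triangle inequality. The only subtlety is the factor of three appearing in statement~4, which arises from splitting $(VU)(VU)^\dag-WW^\dag$ into two pieces and absorbing the $(1+\ev)$ prefactor into the constant.
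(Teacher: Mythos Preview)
Your proof is correct and follows essentially the same approach as the paper: both define the candidate isometry as $W=\tilde{V}U$ (the paper calls your $\tilde{V}$ by $W'$), derive statements~1 and~2 identically, and for statement~4 use the same telescoping $(VU)(VU)^\dag-WW^\dag=(VU)\bigl((VU)^\dag-W^\dag\bigr)+(VU-W)W^\dag$ to get the $(1+\ev)\ev+\ev\le 3\ev$ bound. The only cosmetic difference is in statement~3, where you invoke unitary invariance of the Frobenius norm to get exact equality $\|U^\dag(V^\dag V-I)U\|_F=\|V^\dag V-I\|_F$, whereas the paper uses submultiplicativity $\le\|U^\dag\|\,\|V^\dag V-I\|_F\,\|U\|$; both yield the same bound.
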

\begin{proof}
By \cref{lemma:almost-orthonormal}, there exists an isometry $W'\in\mathbb{C}^{n\times r}$ such that $W'$ spans the column space of $V$ and $\|V-W'\|_F\leq \ev$. Let $W=W'U$,
\begin{align}
\|VU-W\|_F=\|VU - W'U\|_F \leq \|V-W'\|_F\|U\| \leq \ev.\label{eq:orth_1}
\end{align}
Note that $W$ is also an isometry.

For the second property, by \cref{eq:orth_1}, we can get the following inequality
\begin{eqnarray}
|\|VU\|-1|=|\|VU\| - \|W\||  \leq \|VU-W\| \leq \ev.
\end{eqnarray}

For the third inequality,
\begin{eqnarray}
\|(VU)^{\dag}(VU) - I\|_F = \|U^{\dag} V^{\dag}V U - U^{\dag}U\|_F \leq \|U^{\dag}\|\| V^{\dag}V - I\|_F \|U^{\dag}\| \leq \ev.
\end{eqnarray}
The last inequality holds because of \cref{lemma:almost-orthonormal}.

Lastly,
\begin{align}
\|VUU^{\dag}V^{\dag} - \Pi_{VU}\|_F &= \|VUU^{\dag}V^{\dag} - WW^{\dag}\|_F\\
&= \|VUU^{\dag}V^{\dag} - VUW' + VUW' - W'UU^{\dag}W'^{\dag}\|_F\\
&\leq \|VU\|\|U^{\dag}V^{\dag} - W'^{\dag}\|_F + \|VU-W'\|_F\|W'^{\dag}\|\\
&\leq \left(1+\ev\right) \ev + \ev\\
&\leq \frac{3\epsilon}{300r^2(\tau+1)}.
\end{align}
\end{proof}

We conclude by giving the proof of \cref{thm:our_appx}.
\thmsvd*
\begin{proof}[Proof of \cref{thm:our_appx}]

By \cref{lem:appx_vav}, we can compute $\tilde{B}$ in time $O((p+\log n) \frac{r^{5}\tau^3}{\epsilon^2}\log\frac{1}{\delta})$. Recall that statements 1 to 4 in \cref{lemma:almost-orthonormal} hold for $V$. We have
\begin{align}
\|V\tilde{B}V^\dag - VV^\dag AVV^\dag +VV^\dag A VV^\dag - A\| &\leq \|V\tilde{B}V^\dag - VV^\dag AVV^\dag\|+\|VV^\dag A VV^\dag - A\| \\
&\leq \left(1+\ev\right)^2\es \nonumber\\
&\ +\,\left(2+\ev\right)\ef.
\end{align}
The first term of the last inequality comes from \cref{lem:appx_vav} with $\epsilon_s = \es$.
The second statement directly follows from \cref{lem:almost_orth_2}.
\end{proof}


\section{Proof of \cref{thm:est_gibbs}}\label{sec:gibbs_proof}
We first prove two technical lemmas.
\begin{lemma}\label{lem:tr_dist}
Let $A\in \mathbb{C}^{n\times n}$, $B\in \mathbb{C}^{n\times n}$, and $B'\in \mathbb{C}^{n\times n}$ be Hermitian matrices. Suppose $\|B-B'\|\leq \ev$. Then
\begin{align}
|\Tr[AB] - \Tr[AB']| \leq \ev\Tr{|A|}.
\end{align}
\end{lemma}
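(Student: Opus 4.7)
The plan is to reduce this to the standard matrix Hölder inequality between the trace (Schatten-$1$) norm and the spectral (Schatten-$\infty$) norm. By linearity of trace, $\Tr[AB] - \Tr[AB'] = \Tr[A(B-B')]$, so the whole statement collapses to bounding $|\Tr[A(B-B')]|$ in terms of $\Tr|A|$ and $\|B-B'\|$.

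First I would invoke the Hölder-type inequality $|\Tr[XY]| \leq \|X\|_1 \cdot \|Y\|$, where $\|\cdot\|_1$ denotes the trace norm and $\|\cdot\|$ the spectral norm. Applying it with $X=A$ and $Y=B-B'$ yields
\begin{align*}
|\Tr[AB] - \Tr[AB']| \leq \|A\|_1 \cdot \|B-B'\|.
\end{align*}
Since $A$ is Hermitian, $\|A\|_1 = \Tr|A|$, where $|A|$ is obtained from the spectral decomposition of $A$ by replacing each eigenvalue with its absolute value. Plugging in the hypothesis $\|B-B'\| \leq \ev$ gives the claim.

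The argument is essentially a one-line application of Hölder, so there is no real obstacle; the only care needed is in citing (or briefly re-deriving) the Hölder bound for Schatten norms. If a self-contained derivation is preferred, one can diagonalize $A = U\Lambda U^\dag$ with $\Lambda$ real diagonal, then write $\Tr[A(B-B')] = \sum_i \Lambda_{ii} \langle u_i|(B-B')|u_i\rangle$, bound each inner product by $\|B-B'\|$ in absolute value, and sum using $\sum_i |\Lambda_{ii}| = \Tr|A|$. Either route gives the stated inequality immediately. Note that the Hermiticity of $B$ and $B'$ is not actually needed for the bound; only $A$ being Hermitian (to identify $\|A\|_1$ with $\Tr|A|$) is used.
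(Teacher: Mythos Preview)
Your proposal is correct, and in fact the self-contained derivation you sketch (diagonalize $A$, write $\Tr[A(B-B')]=\sum_i \sigma_i\langle v_i|(B-B')|v_i\rangle$, and bound each term by $|\sigma_i|\,\|B-B'\|$) is exactly the paper's proof. The H\"older-inequality route you lead with is simply the abstract form of the same estimate.
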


\begin{proof}
Let $A = \sum_i \sigma_i v_iv_i^\dag$. We have
\begin{align}
    |\Tr[AB] - \Tr[AB']|=\sum_i \sigma_i v_i^\dag (B-B') v_i\leq\sum_i |\sigma_i| \|B-B'\|\leq\ev \Tr{|A|}.
\end{align}
\end{proof}

\begin{restatable}{lemma}{fidelity}
\label{lem:fidelity}
Let $A$ and $B$ be Hermitian matrices, and $\|A - B\|\leq \epsilon$. Let $\rho_A(\frac{\epsilon}{2}) = \frac{e^{-\frac{\epsilon}{2} A}}{\Tr{[e^{-A}]}}$ and $\rho_B(\frac{\epsilon}{2}) = \frac{e^{-\frac{\epsilon}{2} B}}{\Tr{[e^{-B}]}}$. Then $F(\rho_A(\frac{\epsilon}{2}),\rho_B(\frac{\epsilon}{2}))\geq e^{-\frac{\epsilon}{2} \epsilon}$, where $F(\rho_{A},\rho_{B}):=\Tr\big[\sqrt{\sqrt{\rho_{A}}\rho_{B}\sqrt{\rho_{A}}}\,\big]$ is the fidelity between $\rho_{A}$ and $\rho_{B}$.
\end{restatable}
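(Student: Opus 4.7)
Write $\beta=\epsilon/2$, set $Z_A=\Tr[e^{-\beta A}]$ and $Z_B=\Tr[e^{-\beta B}]$, and let $C=(A+B)/2$; I read the denominators in the statement as $Z_A,Z_B$ so that $\rho_A,\rho_B$ are genuine density matrices. Since $A$ is Hermitian, $e^{-\beta A}\succeq 0$ with positive square root $e^{-\beta A/2}$, so the fidelity collapses to
\[
F(\rho_A,\rho_B)=\frac{\Tr\sqrt{e^{-\beta A/2}\,e^{-\beta B}\,e^{-\beta A/2}}}{\sqrt{Z_A Z_B}}.
\]
My strategy is to lower bound the numerator and upper bound the denominator by a common quantity $\Tr[e^{-\beta C}]$, so that the desired exponential factor falls out of their ratio.

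For the numerator I will invoke the log-majorization inequality that complements Golden--Thompson (a standard consequence of the analysis in \cite{FS94}): for Hermitian $H_1,H_2$,
\[
\lambda^{\downarrow}\bigl(e^{H_1+H_2}\bigr)\prec_{\log}\lambda^{\downarrow}\bigl(e^{H_1/2}e^{H_2}e^{H_1/2}\bigr).
\]
Applying the function $x\mapsto\sqrt{x}$ (for which $t\mapsto e^{t/2}$ is convex and increasing) promotes this to weak majorization of the eigenvalues and therefore yields $\Tr[e^{(H_1+H_2)/2}]\leq\Tr\sqrt{e^{H_1/2}e^{H_2}e^{H_1/2}}$. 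Substituting $H_1=-\beta A$ and $H_2=-\beta B$ produces $\Tr\sqrt{e^{-\beta A/2}e^{-\beta B}e^{-\beta A/2}}\geq\Tr[e^{-\beta C}]$, as required.

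For the denominator, the hypothesis $\|A-B\|\leq\epsilon$ gives the operator bound $A\succeq C-(\epsilon/2)I$, whence $-\beta A\preceq -\beta C+(\beta\epsilon/2)I$. Monotonicity of the trace exponential (an immediate corollary of Klein's inequality, i.e.\ $X\preceq Y\Rightarrow\Tr[e^X]\leq\Tr[e^Y]$) then gives $Z_A\leq e^{\beta\epsilon/2}\Tr[e^{-\beta C}]$, and the symmetric argument for $B$ yields the same bound on $Z_B$, so $\sqrt{Z_A Z_B}\leq e^{\beta\epsilon/2}\Tr[e^{-\beta C}]$. Dividing the numerator bound by the denominator bound cancels $\Tr[e^{-\beta C}]$ and leaves $F\geq e^{-\beta\epsilon/2}=e^{-\epsilon^{2}/4}$, which in fact beats the claimed $e^{-\epsilon^{2}/2}$.

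The only genuinely nontrivial ingredient is the log-majorization step; the rest is routine manipulation of operator inequalities. The main bookkeeping hazard is keeping the direction of the operator ordering straight when passing through the matrix exponential in the denominator estimate, and comparing both $Z_A$ and $Z_B$ to the common midpoint $\Tr[e^{-\beta C}]$ sidesteps this cleanly by using only the easy (monotone) direction of $X\mapsto\Tr[e^X]$.
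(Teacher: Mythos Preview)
Your argument is correct and follows essentially the same line as the paper's proof: the paper also rewrites the fidelity with denominator $\sqrt{Z_A Z_B}$, bounds the numerator below by $\Tr[e^{-\beta(A+B)/2}]$ via the monotonicity of $p\mapsto\Tr\bigl[(e^{pD/2}e^{pC}e^{pD/2})^{1/p}\bigr]$ from \cite{FS94} (which is the same content as the log-majorization step you cite), and then controls the denominator using the eigenvalue perturbation bound $|\sigma_j(A)-\sigma_j(B)|\le\|A-B\|$. The one genuine difference is that the paper compares both numerator and denominator to $\Tr[e^{-\beta B}]$, incurring a factor $e^{-\beta\epsilon/2}$ in the numerator and $e^{\beta\epsilon/2}$ in the denominator, whereas you compare both to the midpoint $\Tr[e^{-\beta C}]$ and hence save a factor of two in the exponent, obtaining $F\ge e^{-\beta\epsilon/2}$ instead of the paper's $e^{-\beta\epsilon}$; this is a strict sharpening of the stated bound with no additional work.
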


\begin{proof}
This lemma has been proven in \cite[Appendix C]{PW09}. We give the proof here for completeness.

Let $\sigma_1(A)>\cdots>\sigma_n(A)$ and $\sigma_1(B)>\cdots>\sigma_n(B)$ be eigenvalues of $A$ and $B$. Then, by the definition of spectral norm,
\begin{align}
    \max_{j\in\range{n}} |\sigma_j(A) - \sigma_j(B)|\leq \|A-B\|.
\end{align}
This fact implies that
\begin{align}
   e^{-\|A-B\|}\leq e^{\sigma_j(A) - \sigma_j(B)}\leq e^{\|A-B\|},
\end{align}
for all $j\in [n]$. Therefore,
\begin{align}\label{eq:mult_bound}
e^{-\|A-B\|}\Tr{[e^A]}\leq e^{B}\leq e^{\|A-B\|}\Tr{[e^A]}.
\end{align}

Now, we define a function
\begin{align}
    f(p) = \Tr{(e^{pD/2} e^{pC} e^{pD/2})^{1/p}}.
\end{align}
It has been shown in~\cite{PW09,FS94} that the function $f$ has two properties: $f$ is an increasing function for Hermitian matrices $C$ and $D$ in $p\in (0,\infty)$, and when $p\rightarrow 0$, $f(p) = \Tr{[e^{C+D}]}$. Hence,
\begin{align}
    \Tr{[e^{C+D}]}\leq \Tr{\big[(e^{pD/2} e^{pC} e^{pD/2})^{1/p}\big]}
\end{align}
holds for $p>0$.

Let $p=2$, $C=-\beta A/2$, and $D=-\beta B/2$, we bound the fidelity as follows:
\begin{eqnarray}
F(\rho_A(\beta),\rho_B(\beta))=\frac{\Tr{(e^{-\beta B/2}e^{-\beta A}e^{-\beta B})^{1/2}}}{\sqrt{\Tr{\rho_A(\beta)}\Tr{\rho_B(\beta)}}}\geq\frac{\Tr{e^{-\beta(A+B)/2}}}{\sqrt{\Tr{\rho_A(\beta)}\Tr{\rho_B(\beta)}}}.
\end{eqnarray}
Note that $\|B - \frac{A+B}{2}\|\leq \epsilon/2$. By applying \cref{eq:mult_bound},
\begin{eqnarray}
\frac{\Tr{e^{-\beta(A+B)/2}}}{\sqrt{\Tr{\rho_A(\beta)}\Tr{\rho_B(\beta)}}}\geq\frac{e^{-\beta\epsilon/2}\Tr{e^{-\beta B}}}{\sqrt{e^{\beta\epsilon}}\Tr{e^{-\beta B}}}\geq e^{-\beta \epsilon}.
\end{eqnarray}

\end{proof}

\cref{lem:fidelity} implies that the trace distance between $\rho_A$ and $\rho_B$ is
\begin{align}
    \frac{1}{2}\Tr{|\rho_A-\rho_B|} \leq \sqrt{1- e^{-2\frac{\epsilon}{2} \epsilon}},
\end{align}
and the spectral distance is
\begin{align}
    \|\rho_A(\epsilon/2) - \rho_B(\epsilon/2)\|\leq 2\sqrt{1- e^{-2\frac{\epsilon}{2} \epsilon}}.
\end{align}

Recall that $\tilde{A} = VV^\dag AVV^\dag$ and $U$, $D$ are the outputs of \cref{alg:appx_VVAVV}, which will be used in \cref{alg:appx_tr}. In this section, we suppose $\|\tilde{A}-A\|\leq (2+\ev)\ef$ as in \cref{thm:vvavv_bound}.

\estgibbs*
\begin{proof}

  As we have proven in \cref{lem:almost_orth_2}, there exists an isometry $\tilde{U}$ such that $\|\tilde{U}-(VU)\|\leq \ev$ and $\tilde{U}$ spans the column space of $(VU)$. We define two additional Gibbs states $\rho' = \frac{\tilde{U}e^{-\frac{\epsilon}{2} D}\tilde{U}^\dag}{\Tr{e^{-\frac{\epsilon}{2} D}}}$ and $\tilde{\rho} = \frac{e^{-\frac{\epsilon}{2} \tilde{A}}}{\Tr{e^{-\frac{\epsilon}{2} \tilde{A}}}}$.
\begin{align}
    &|\Tr{[A_\ell\rho]}-\zeta|\\ = &|\Tr{[A_\ell\rho]}-\Tr{[A_\ell\tilde{\rho}]} + \Tr{[A_\ell\tilde{\rho}]} - \Tr{[A_\ell\rho']} + \Tr{[A_\ell\rho']} -\Tr{[A_\ell\hat{\rho}]}+
    \Tr{[A_\ell\hat{\rho}]}-\zeta|\\
  \leq &|\Tr{[A_\ell\rho]}-\Tr{[A_\ell\tilde{\rho}]}| + |\Tr{[A_\ell\tilde{\rho}]} - \Tr{[A_\ell\rho']}| + |\Tr{[A_\ell\rho']} -\Tr{[A_\ell\hat{\rho}]}|+
    |\Tr{[A_\ell\hat{\rho}]}-\zeta|.
\end{align}
We give bounds on each term as follows. First,
\begin{align}
|\Tr{[A_\ell\rho]}-\Tr{[A_\ell\tilde{\rho}]}| &\leq \Tr{|A_\ell|} \|\rho - \tilde{\rho}\|\\
&\leq 2\Tr{|A_\ell|}\sqrt{1- e^{-2\frac{\epsilon}{2} ((2+\ev)\ef)}}.\label{eq:err_1}
\end{align}

For $|\Tr{[A_\ell\tilde{\rho}]} - \Tr{[A_\ell\rho']}|$, we first compute an upper bound on $\|(VU)D(VU)^{\dag} - \tilde{U}D\tilde{U}^{\dag}\|$:
\begin{align}
  \|(VU)D(VU)^\dag - \tilde{U}D\tilde{U}^\dag\| \leq \|\tilde{U}-(VU)\| \|D\| (\|VU\|+\|\tilde{U}\|) \leq 3\ev \|D\|.
\end{align}
Then, by applying \cref{lem:fidelity,lem:tr_dist} again, we get
\begin{align}
|\Tr{[A_\ell\tilde{\rho}]} - \Tr{[A_\ell\rho']}|\leq \Tr{|A_\ell|} \|\tilde{\rho} - \rho'\|\leq \Tr{|A_\ell|}\left(2\sqrt{1- e^{-6\frac{\epsilon}{2} \ev\|D\|}}\right). \label{eq:err_2}
\end{align}

For the second last term $|\Tr{[A_\ell\rho']} -\Tr{[A_\ell\hat{\rho}]}|$, it is not hard to show that
\begin{align}
  \|\tilde{U}e^{-\frac{\epsilon}{2} D}\tilde{U}^\dag - (VU)e^{-\frac{\epsilon}{2} D}(VU)^\dag\| \leq \left(2+\ev\right) \|\tilde{U}-VU\|\Tr{[e^{-\frac{\epsilon}{2} D}]}.
\end{align}
Then,
\begin{align}
  |\Tr{[A_\ell\rho']} -\Tr{[A_\ell\hat{\rho}]}| &\leq \Tr{|A_\ell|} \|\rho' - \hat{\rho}\|\leq \left(2+\ev\right)\|\tilde{U}-VU\|\Tr{|A_\ell|} \nonumber \\
                                                &\leq 3\ev \Tr{|A_\ell|}. \label{eq:err_3}
\end{align}
The last term follows from \cref{claim:tr_prod} by setting the precision to be $\epsilon/5$. Hence
\begin{eqnarray}
    |\Tr{[A_\ell\hat{\rho}]}-\zeta|\leq \epsilon/5.
\end{eqnarray}

By adding \cref{eq:err_1,eq:err_2,eq:err_3} together,
\begin{eqnarray}
    |\Tr{[A_\ell\rho]}-\zeta|\leq\epsilon.
\end{eqnarray}
Let $Q(\cdot)$ ($S(\cdot)$, and $N(\cdot)$, respectively) denote the time complexity for querying to (sampling from, and obtaining the row norms of, respectively) a matrix.
$\Tr{[A_\ell\hat{\rho}]}$ can be approximated with precision $\epsilon/5$ with probability $1-\delta$ in time
\begin{eqnarray}
O\left(\frac{4}{\epsilon^2}(Q(A_\ell)+Q(VU)+S(A_\ell)+N(A_\ell))\log\frac{1}{\delta}\right) = O\left(\frac{4}{\epsilon^2}(\log n+\tau pr)\log\frac{1}{\delta}\right),
\end{eqnarray}
where $p$ is the number of rows sampled in~\cref{alg:approx_v} and the maximum rank of the Gibbs state is $\tau r$. The last equality is true since one can compute $(VU)(i,j)$ by computing $V(i,j)$ as $(S^\dag(i,\cdot)u_j/\sigma_j$ and then compute the inner product $V(i,\cdot)U(\cdot,j)$, which takes $O(p\tau r)$ time.
\end{proof}

\section{Proof of \cref{cor:shadow}}\label{sec:proof-shadow}

\shadowtomography*

\begin{proof}
We denote $A_{i}=E_{i}$ for all $i\in\range{m}$ and $A_{i}=-E_{i-m}$ for all $i\in\{m+1,\ldots,2m\}$; also denote $a_{i}=p_{i}$ for all $i\in\range{m}$ and $a_{i}=-p_{i-m}$ for all $i\in\{m+1,\ldots,2m\}$. As a result, $\Tr[\sigma E_{i}]-p_{i}\leq\epsilon$ is equivalent to $\Tr[\sigma A_{i}]\leq a_{i}+\epsilon$ for all $i\in\range{m}$, and $\Tr[\sigma E_{i}]-p_{i}\geq-\epsilon$ is equivalent to $\Tr[\sigma A_{i+m}]\leq a_{i+m}+\epsilon$ for all $i\in\range{m}$; therefore, the shadow tomography problem in \cref{eqn:shadow-formal} is equivalent to the following SDP feasibility problem:
\begin{align}
\text{Find $\sigma$ such that}\qquad &\Tr[A_{i}\sigma]\leq a_{i}+\epsilon\quad\forall\,i\in\range{2m};  \\
&\sigma\succeq 0,\ \ \Tr[\sigma]=1.
\end{align}
Consequently, \cref{cor:shadow} reduces to the SDP in \cref{eqn:SDP-formal} with $2m$ constraints; the result hence follows from \cref{thm:main}.
\end{proof}

\end{document}